\newcommand{\@chapapp}{\relax}%
\bfseries\color{BrickRed}]{\#pragma\ scoop},
\bfseries\color{MidnightBlue},
\newcommand{\order}[1]{\leq_{#1}}
\newcommand{\sa}{\mathrm{SA}}
\newcommand{\tuple}[1]{\ensuremath{\langle #1 \rangle}}
\newcommand{\hb}{\order{hb}}
\newcommand{\hbD}{\order{hb}^d}
\newcommand{\so}{\order{so}}
\newcommand{\soD}{\order{so}^d}
\newcommand{\po}{\order{po}}
\newcommand{\poD}{\order{po}^d}
\newcommand{\swD}{\order{sw}^d}
\newcommand{\aset}[1]{\{ #1 \}}
\newcommand{\paren}[1]{\left( #1 \right)}
\newcommand{\dom}[1]{\mathit{dom}\paren{#1}}
\newcommand{\rng}[1]{\mathit{rng}\paren{#1}}
\newcommand{\vol}[1]{\mathit{volatile}\paren{#1}}
\newcommand{\sfmt}[1]{\ensuremath{\mathsf{#1}}}
\newcommand{\sunit}{\sfmt{()}}
\newcommand{\sassign}[2]{#1 \coloneqq #2}
\newcommand{\snew}[1]{\sfmt{new} \; C(#1)}
\newcommand{\sheap}{\mathcal{H}}
\newcommand{\sscache}{\mathcal{C}}
\newcommand{\ssdcache}{\mathcal{D}}
\newcommand{\scaches}{\vec{\sscache}}
\newcommand{\ssdcaches}{\vec{\ssdcache}}
\newcommand{\scache}[1]{\sscache_{#1}}
\newcommand{\sdcache}[1]{\ssdcache_{#1}}
\newcommand{\sproc}[2]{#1\langle #2 \rangle}
\newcommand{\slet}[4]{\sfmt{let} \; #1 : #2 = #3 \; \sfmt{in}\; #4}
\newcommand{\sifelse}[3]{\sfmt{if~} #1 \sfmt{~then~} #2 \sfmt{~else~} #3}
\newcommand{\strue}{\sfmt{true}}
\newcommand{\sfalse}{\sfmt{false}}
\newcommand{\stids}{\textsc{Cid}\mathrm{s}}
\newcommand{\rulename}[1]{\textsc{#1}}
\newcommand{\inference}[3][]{\inferrule[{[#1]}]{#2}{#3}}
\newcommand{\inferencel}[3][]{\inferrule*[width=\linewidth,left={[#1]}]{#2}{#3}}
\newcommand{\rfield}{\rulename{Field}}
\newcommand{\rfieldd}{\rulename{FieldDirty}}
\newcommand{\rassign}{\rulename{Assign}}
\newcommand{\rvolatileread}{\rulename{VolatileRead}}
\newcommand{\rvolatilereadl}{\rulename{VolatileReadL}}
\newcommand{\rvolatilewrite}{\rulename{VolatileWrite}}
\newcommand{\rvolatilewritel}{\rulename{VolatileWriteL}}
\newcommand{\rfetch}{\rulename{Fetch}}
\newcommand{\rwriteback}{\rulename{WriteBack}}
\newcommand{\rinvalidate}{\rulename{Invalidate}}
\newcommand{\rmonitorentera}{\rulename{MonitorEnter}}
\newcommand{\rmonitorenterb}{\rulename{NestedMonitorEnter}}
\newcommand{\rmonitorexita}{\rulename{MonitorExit}}
\newcommand{\rmonitorexitb}{\rulename{NestedMonitorExit}}
\newcommand{\racquire}{\rulename{Acquire}}
\newcommand{\rrelease}{\rulename{Release}}
\newcommand{\rmigrate}{\rulename{Migrate}}
\newcommand{\rnew}{\rulename{New}}
\newcommand{\rspawn}{\rulename{Spawn}}
\newcommand{\rstart}{\rulename{Start}}
\newcommand{\rfinish}{\rulename{Finish}}
\newcommand{\rjoin}{\rulename{Join}}
\newcommand{\rinterrupt}{\rulename{Interrupt}}
\newcommand{\rinterruptedt}{\rulename{InterruptedT}}
\newcommand{\rinterruptedf}{\rulename{InterruptedF}}
\newcommand{\rcondt}{\rulename{IfTrue}}
\newcommand{\rcondf}{\rulename{IfFalse}}
\newcommand{\rcall}{\rulename{Call}}
\newcommand{\rred}{\rulename{Let}}
\newcommand{\rctxstep}{\rulename{CtxStep}}
\newcommand{\rpar}{\rulename{Blocked}}
\newcommand{\rlift}{\rulename{Lift}}
\newcommand{\rparg}{\rulename{ParG}}
\newcommand{\tunit}{\mathit{Unit}}
\newcommand{\tbool}{\mathit{Bool}}
\newcommand{\tnat}{\mathit{Nat}}
\newcommand{\lang}{DJC\xspace}
\newcommand{\wf}[1]{\textbf{WF-#1}}
\newcommand{\wfe}[1]{\textbf{WFE-#1}}
\newcommand{\wfh}[1]{\textbf{WFH-#1}}
\newcommand{\toa}[1]{\xrightarrow{\mathit{#1}}}
\newcommand{\init}{
  \ensuremath{ \{(r_t \mapsto \sfmt{VMThread}(\emptyset, \sfmt{spawned}))\};
               \emptyset;
               \emptyset
               \vdash \sproc{c}{r_t, \sfmt{start}}}}
\newcommand{\sclass}{
        \sfmt{class} \; C(\overrightarrow{f:\tau})\{e\}\{\vec{M}\}}
\newcommand{\smethod}{
        m(\overrightarrow{x:\tau})\{\sfmt{return}\;e;\} : \tau}
\crefname{figure}{Figure}{Figures}
\Crefname{figure}{Figure}{Figures}
\crefname{axiom}{axiom}{axioms}
\Crefname{axiom}{Axiom}{Axioms}
\newtheorem{lemma}{Lemma}
\newtheorem{theorem}{Theorem}
\newtheorem{case}{Case}[lemma]
\newtheorem{tcase}{Case}[theorem]
\begin{document}

% Copyright
% \setcopyright{acmcopyright}
%\setcopyright{acmlicensed}
%\setcopyright{rightsretained}
%\setcopyright{usgov}
%\setcopyright{usgovmixed}
%\setcopyright{cagov}
%\setcopyright{cagovmixed}

% DOI
%\doi{10.475/123_4}

% ISBN
%\isbn{123-4567-24-567/08/06}

%Conference
%\conferenceinfo{WOODSTOCK}{'97 El Paso, Texas USA}

%\acmPrice{\$15.00}

% FIXME change title
\title{
  DiSquawk: 512 cores, 512 memories, 1 JVM
  % \titlenote{Oracle and Java are registered trademarks of Oracle
  %   and/or its affiliates}
}
% \subtitle{[Extended Abstract]
% \titlenote{A full version of this paper is available as
% \textit{Author's Guide to Preparing ACM SIG Proceedings Using
% \LaTeX$2_\epsilon$\ and BibTeX} at
% \texttt{www.acm.org/eaddress.htm}}}
%
% You need the command \numberofauthors to handle the 'placement
% and alignment' of the authors beneath the title.
%
% For aesthetic reasons, we recommend 'three authors at a time'
% i.e. three 'name/affiliation blocks' be placed beneath the title.
%
% NOTE: You are NOT restricted in how many 'rows' of
% "name/affiliations" may appear. We just ask that you restrict
% the number of 'columns' to three.
%
% Because of the available 'opening page real-estate'
% we ask you to refrain from putting more than six authors
% (two rows with three columns) beneath the article title.
% More than six makes the first-page appear very cluttered indeed.
%
% Use the \alignauthor commands to handle the names
% and affiliations for an 'aesthetic maximum' of six authors.
% Add names, affiliations, addresses for
% the seventh etc. author(s) as the argument for the
% \additionalauthors command.
% These 'additional authors' will be output/set for you
% without further effort on your part as the last section in
% the body of your article BEFORE References or any Appendices.

%\numberofauthors{2}

\author{
% You can go ahead and credit any number of authors here,
% e.g. one 'row of three' or two rows (consisting of one row of three
% and a second row of one, two or three).
%
% The command \alignauthor (no curly braces needed) should
% precede each author name, affiliation/snail-mail address and
% e-mail address. Additionally, tag each line of
% affiliation/address with \affaddr, and tag the
% e-mail address with \email.
%
% 1st. author
  %\alignauthor
  Foivos S.~Zakkak\\
  {FORTH-ICS and University of Crete}\\
  \texttt{zakkak@ics.forth.gr}
% 2nd. author
  %\alignauthor
  \and
  Polyvios Pratikakis\\
  {FORTH-ICS}\\
  \texttt{polyvios@ics.forth.gr}
}
% There's nothing stopping you putting the seventh, eighth, etc.
% author on the opening page (as the 'third row') but we ask,
% for aesthetic reasons that you place these 'additional authors'
% in the \additional authors block, viz.
% \additionalauthors{Additional authors: John Smith (The Th{\o}rv{\"a}ld Group,
% email: {\texttt{jsmith@affiliation.org}}) and Julius P.~Kumquat
% (The Kumquat Consortium, email: {\texttt{jpkumquat@consortium.net}}).}
\date{}
% Just remember to make sure that the TOTAL number of authors
% is the number that will appear on the first page PLUS the
% number that will appear in the \additionalauthors section.

\maketitle

\begin{center}
  \textbf{\large Technical Report FORTH-ICS/TR-470, June 2016}
\end{center}

\begin{abstract}
  Trying to cope with the constantly growing number of cores per
  processor, hardware architects are experimenting with modular
  non cache coherent architectures.  Such architectures delegate the
  memory coherency to the software.  On the contrary, high productivity
  languages, like Java, are designed to abstract away the hardware
  details and allow developers to focus on the implementation of their
  algorithm.  Such programming languages rely on a process virtual
  machine to perform the necessary operations to implement the
  corresponding memory model.  Arguing, however, about the correctness of such
  implementations is not trivial.

  In this work we present our implementation of the Java Memory Model in
  a Java Virtual Machine targeting a 512-core non cache coherent memory
  architecture.
  We shortly discuss design decisions and present early evaluation
  results, which demonstrate that our implementation scales with the
  number of cores up to 512 cores.
  We model our implementation as the operational semantics of a Java
  Core Calculus that we extend with synchronization actions, and prove
  its adherence to the Java Memory Model.
\end{abstract}

%
% The code below should be generated by the tool at
% http://dl.acm.org/ccs.cfm
% Please copy and paste the code instead of the example below.
%
%\begin{CCSXML}
%<ccs2012>
% <concept>
%  <concept_id>10010520.10010553.10010562</concept_id>
%  <concept_desc>Computer systems organization~Embedded systems</concept_desc>
%  <concept_significance>500</concept_significance>
% </concept>
% <concept>
%  <concept_id>10010520.10010575.10010755</concept_id>
%  <concept_desc>Computer systems organization~Redundancy</concept_desc>
%  <concept_significance>300</concept_significance>
% </concept>
% <concept>
%  <concept_id>10010520.10010553.10010554</concept_id>
%  <concept_desc>Computer systems organization~Robotics</concept_desc>
%  <concept_significance>100</concept_significance>
% </concept>
% <concept>
%  <concept_id>10003033.10003083.10003095</concept_id>
%  <concept_desc>Networks~Network reliability</concept_desc>
%  <concept_significance>100</concept_significance>
% </concept>
%</ccs2012>
%\end{CCSXML}

%\ccsdesc[500]{Computer systems organization~Embedded systems}
%\ccsdesc[300]{Computer systems organization~Redundancy}
%\ccsdesc{Computer systems organization~Robotics}
%\ccsdesc[100]{Networks~Network reliability}

% \ccsdesc[500]{Software and its engineering~Memory management}
% \ccsdesc[500]{Software and its engineering~Semantics}

%
% End generated code
%

%
%  Use this command to print the description
%
%\printccsdesc

% We no longer use \terms command
%\terms{Theory}

\paragraph{Keywords:}
Java Virtual Machine; Java Memory Model; Operational Semantics; Non Cache Coherent Memory; Software Cache

\section{Introduction}
\label{sec:intro}

Current multicore processors rely on hardware cache coherence to
implement shared memory abstractions.  However, recent literature
largely agrees that existing coherence implementations do not scale
well with the number of processor cores, incur large energy and area
costs, increase on-chip traffic, or limit the number of cores per
chip~\cite{Choi2011,yang2014coherence,carter2013runnemede},
despite several attempts to design less costly or more scalable
coherence protocols~\cite{martin2012coherence,menezo2013coherence}.

To address that issue, recent work on hardware design proposes modular
many-core architectures.
Such examples are the Intel\textsuperscript{\textregistered}
Runnemede~\cite{carter2013runnemede} architecture, the Formic
prototype~\cite{lyberis2012formic}, and the EUROSERVER
architecture~\cite{euroserver}.
These architectures are designed in a way that allows scaling up by
plugging in more modules.
Each module is self-contained and able to interface with other modules.
Connecting multiple such modules builds a larger system that can be seen
as a single many-core processor.
In such architectures the trend is to use multiple mid-range cores with
local scratchpads interconnected using efficient communication channels.

The lack of cache coherence renders the software responsible for
performing the necessary data transfers to ensure data coherency in
parallel programs.
However, in high productivity languages, such as Java, the memory
hierarchy is abstracted away by the process virtual machines rendering
the latter responsible for the data transfers.
% On the other hand, high productivity languages, such as Java, require
% the implementation of efficient process virtual machines.
Process virtual machines
% abstract away the memory hierarchy and
provide the same language guarantees to the developers as in
cache coherent shared-memory architectures.
Those guarantees are formally defined in the language's memory model.
The efficient implementation of a language's memory model on
non cache coherent architectures is not trivial though.
Furthermore, arguing about the implementation's correctness is even more
difficult.

% In the ~\citet{greenvm} project we
% develop such a Java Virtual Machine targeting the Formic 520-core
% prototype.  Our experience shows that one of the first steps towards
% implementing high productivity languages on such architectures is to
% define the language's memory model and implement it in the
% corresponding process virtual machine.
% In this work we focus on the Java programming language and the Java
% Memory Model (JMM)~\cite{MansonThesis}.

% In previous work, \citet{zakkak:jdmm} present the Java Distributed
% Memory Model (JDMM), a formalization of the Java Memory Model
% (JMM)~\cite{Manson:2005:JMMconf, MansonThesis} for non-cache-coherent
% architectures.  JDMM is essentially a redefinition of JMM for
% distributed or non-cache-coherent memory architectures.  JDMM extends
% JMM with cache related operations and formally defines when such
% operations need to be executed to preserve JMM's properties.  JDMM is
% designed to be as relaxed as JMM.  To be consistent with JMM and provide
% a tightly coupled re-definition, JDMM is defined using the same notation
% with JMM.  However, this notation cannot be intuitively implemented.

In this work we present an implementation of the Java Memory Model
(JMM)~\cite{Manson:2005:JMMconf} in DiSquawk, a Java Virtual Machine
targeting the Formic-cube, a 512-core non cache coherent prototype based
on the Formic architecture~\cite{lyberis2012formic, formic}.
We shortly discuss design decisions and present evaluation results,
which demonstrate that our implementation scales with the number of
cores.
To prove our implementation's adherence to the Java Memory Model, we
model it as the operational semantics of Distributed Java Calculus
(\lang), a Java Core Calculus that we define for that purpose.

% a Java core calculus, define its operational
% semantics, and prove its adherence to JDMM and thus to JMM.  The
% operational semantics notation is considered a better way to define
% programming languages' memory
% models~\cite{Boudol:2009:RMM:1480881.1480930,salil2010,jagadeesan2010generative},
% since it is a better match to the techniques and tools already used by
% the programming language community.  Furthermore, we present early
% evaluation results from our open-source \lang-based JMM implementation,
% in the form of an interpreter-based Java Virtual Machine, on a
% non-cache-coherent many-core prototype.

% \subsection{Contributions}
% \label{ssec:contributions}

% We believe that it is easier for language implementers to understand
% a memory model defined through the operational semantics of a
% language rather than an axiomatic formalization.

Specifically, this work makes the following contributions:
\begin{itemize}%\itemsep=0pt \topsep=0pt \abovetopsep=0pt
\item We present a Java Memory Model (JMM) implementation for
  non cache coherent architectures that scales up to 512 cores, and we
  shortly discuss our design decisions.
\item We present Distributed Java Calculus (\lang), a Java core calculus
  with support for Java synchronization actions and explicit cache
  operations.
\item We model our JMM implementation as the operational semantics of
  \lang.
\item We prove that the operational semantics of \lang adheres to JMM
  and present the proof sketch.
\end{itemize}

The remainder of this paper is organized as follows.
\Cref{sec:jdmm} shortly presents JDMM, a JMM extension for
non cache coherent memory architectures, and the motivation for this
work;
\Cref{sec:implementation} presents our implementation of JDMM and
shortly discusses the design decisions;
\Cref{sec:fj} presents \lang, its operational semantics, and a proof
sketch of its adherence to JDMM;
%
% The actual proof is presented in~\cref{sec:proof};
%
\Cref{sec:related} discusses related work; and
\Cref{sec:conclusions} concludes.

\section{Background and Motivation}
\label{sec:jdmm}

In order to reduce network traffic and execution time, Java Virtual
Machines (JVMs) on non cache coherent architectures usually implement
some kind of software caching~\cite{McIlroy2010,Antoniu:2001} or
software distributed shared
memory~\cite{Yu1997,Veldema99,Zhu:2002,Factor:2003}.
Both approaches rely on similar operations; to access a remote object
they \emph{fetch} a local copy; to make dirty copies globally visible
they write them back (\emph{write-back}); and to free space in the cache
or force an update on the next access they \emph{invalidate} local
copies.
Since JMM~\cite{Manson:2005:JMMconf} is agnostic about such operations,
we base our work on the Java Distributed Memory Model
(JDMM)~\cite{zakkak:jdmm}.
%This choice is justified by the fact
% that JDMM targets similar architectures to the ones at hand in this work.
% To argue about the adherence of programs to JMM on
% non-cache-coherent architectures, JDMM extends it with these three
% operations.

The JDMM is a redefinition of JMM for distributed or
non cache coherent memory architectures.
It extends the JMM with cache related operations and formally defines when
such operations need to be executed to preserve JMM's properties.
The JDMM is designed to be as relaxed as the JMM\@.
Following a similar approach to that of Owens \em et al.\em~\cite{owens2009better} in the
x86 Total Store Order (x86-TSO) definition, the JDMM first defines an
abstract machine model and then defines the memory model based on it.

% \subsection{The Abstract Machine Memory Model}
% \label{sec:abstr-mach-memory}

% \begin{wrapfigure}{r}{.45\linewidth}
%   \centering
%   \includegraphics[width=\linewidth]{hardware}
%   \caption{The abstract machine.}
%   \label{fig:abstract-machine}
% \end{wrapfigure}

% JDMM's abstract machine memory model assumes single-level scratchpad
% memories for simplicity.  This simplification, however, does not limit
% the power of the model.  Regarding memory coherence, the simplified
% abstract machine memory model has the same properties as a more
% complicated multi-level abstract machine memory model.

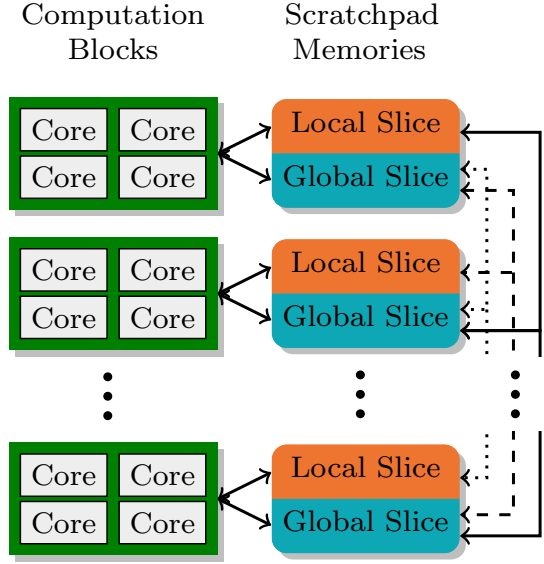
\begin{figure}[t]
\centering
\resizebox{0.6\textwidth}{!}{
  \definecolor{hwt}{HTML}{EEEEEE}
  \definecolor{cachemem}{HTML}{ED752F}
  \definecolor{memback}{HTML}{DDDDDD}
  \definecolor{globalback}{HTML}{0EA7B5}
  \definecolor{hwtback}{HTML}{FFBE4F}
  \tikzstyle{readwrite} = [<->, thick]
  \tikzstyle{gmemory} = [draw, fill=memback, dashed, rounded
  corners, minimum width=3.25cm, minimum height=6.5cm]
  \tikzstyle{lmemory} = [rounded corners,
  rectangle split, rectangle split parts=2, rectangle split part
  fill={cachemem,globalback}, rectangle split draw splits=false,
  minimum width=1.4cm]
  % , font=\bfseries, every one node part/.style={text=black}, every
  % two node part/.style={text=gray!20}]
  % \tikzstyle{every one node part}=[SteelBlue4]
  % \tikzstyle{every two node part}=[SkyBlue]
  \def\mystrut{\vrule height .2cm depth .1cm width 0pt}
  \tikzstyle{hwthreads} = [fill=green!50!black, align=center, drop
  shadow]
  \tikzstyle{hwthread} = [draw, fill=hwt, align=center, minimum height=.4cm]
  \begin{tikzpicture}[node distance=.25cm]
    % The core boxes
    \scriptsize
    \node (Blocks) [text width=2cm, align=center] {Computation Blocks};
    \node (HTs1) [hwthreads, below=of Blocks] {%
      \tikz\node [hwthread] {Core};
      \tikz\node [hwthread] {Core}; \\
      \tikz\node [hwthread] {Core};
      \tikz\node [hwthread] {Core};
    };
    \node (HTs2) [hwthreads, below=of HTs1] {%
      \tikz\node [hwthread] {Core};
      \tikz\node [hwthread] {Core}; \\
      \tikz\node [hwthread] {Core};
      \tikz\node [hwthread] {Core};
    };
    \node (dots1) [below= 1mm of HTs2, align=center] {%
      \rotatebox{90}{\tikz\filldraw (0,0) circle (1pt);
        \tikz\filldraw (0,0) circle (1pt);
        \tikz\filldraw (0,0) circle (1pt);}
    };
    \node (HTs3) [hwthreads, below= 1mm of dots1] {%
      \tikz\node [hwthread] {Core};
      \tikz\node [hwthread] {Core}; \\
     \tikz\node [hwthread] {Core};
      \tikz\node [hwthread] {Core};
    };
    % The mem boxes
    \scriptsize
    % \node (global) [gmemory, right=of HTs2] {};
    % \node [below=1mm of global.north] {State of the Memory};
    \node (Memory) [text width=1.4cm, , right=.4cm of Blocks, align=center] {Scratchpad Memories};
    \node [lmemory, right=.5cm of HTs1, drop shadow]
    {\nodepart{one}\mystrut Local Slice\nodepart{two}\mystrut Global Slice}; % stupid shadow
    \node (LM1) [lmemory, right=.5cm of HTs1] {\nodepart{one}\mystrut Local Slice\nodepart{two}\mystrut Global Slice};
    \node [lmemory, right=.5cm of HTs2, drop shadow] {\nodepart{one}\mystrut Local Slice\nodepart{two}\mystrut Global Slice}; % stupid shadow
    \node (LM2) [lmemory, right=.5cm of HTs2] {\nodepart{one}\mystrut Local Slice\nodepart{two}\mystrut Global Slice};
    \node (dots2) [below= 1mm of LM2, , align=center] {%
      \rotatebox{90}{\tikz\filldraw (0,0) circle (1pt);
        \tikz\filldraw (0,0) circle (1pt);
        \tikz\filldraw (0,0) circle (1pt);}
    };
    \node [lmemory, right=.5cm of HTs3, drop shadow] {\nodepart{one}\mystrut
      Local Slice\nodepart{two}\mystrut Global Slice}; % stupid shadow
    \node (LM3) [lmemory, right=.5cm of HTs3] {\nodepart{one}\mystrut
      Local Slice\nodepart{two}\mystrut Global Slice};
    % The left arrows
    \draw [readwrite] (HTs1.east) -- ($(LM1.west) + (0,.25cm)$);
    \draw [readwrite] (HTs1.east) -- ($(LM1.west) - (0,.25cm)$);
    \draw [readwrite] (HTs2.east) -- ($(LM2.west) + (0,.25cm)$);
    \draw [readwrite] (HTs2.east) -- ($(LM2.west) - (0,.25cm)$);
    \draw [readwrite] (HTs3.east) -- ($(LM3.west) + (0,.25cm)$);
    \draw [readwrite] (HTs3.east) -- ($(LM3.west) - (0,.25cm)$);
    % The right arrows
    \draw [readwrite]
    ($(LM1.east) + (0,.2cm)$) --
    ($(LM1.east) + (.75cm,.2cm)$) --
    ($(LM2.east) + (.75cm,-.35cm)$) --
    ($(LM2.east) + (0,-.35cm)$);
    \draw [readwrite]
    ($(LM2.east) + (0,-.35cm)$) --
    ($(LM2.east) + (.75cm,-.35cm)$) --
    ($(LM3.east) + (.75cm,-.35cm)$) --
    ($(LM3.east) + (0,-.35cm)$);
    \draw [readwrite, dashed]
    ($(LM1.east) + (0,-.35cm)$) --
    ($(LM1.east) + (.5cm,-.35cm)$) --
    ($(LM2.east) + (.5cm,.2cm)$) --
    ($(LM2.east) + (0,.2cm)$);
    \draw [readwrite, dashed, ->]
    ($(LM2.east) + (.5cm,.2cm)$) --
    ($(LM3.east) + (.5cm,-.15cm)$) --
    ($(LM3.east) + (0,-.15cm)$);
    \draw [readwrite, dotted]
    ($(LM1.east) + (0,-.15cm)$) --
    ($(LM1.east) + (.25cm,-.15cm)$) --
    ($(LM2.east) + (.25cm,-.15cm)$) --
    ($(LM2.east) + (0,-.15cm)$);
    \draw [readwrite, dotted, ->]
    ($(LM2.east) + (.25cm,-.15cm)$) --
    ($(LM3.east) + (.25cm,.2cm)$) --
    ($(LM3.east) + (0,.2cm)$);
    % Add the right dots
    \node (dots3) [right=.9cm of dots2, fill=white, align=center,
    minimum height=.7cm, minimum width=.7cm] {%
      \rotatebox{90}{\tikz\filldraw (0,0) circle (1pt);
        \tikz\filldraw (0,0) circle (1pt);
        \tikz\filldraw (0,0) circle (1pt);}
    };
  \end{tikzpicture}%
  }
  \caption{The memory abstraction.}
  \label{fig:abstract-machine}

% % \begin{wrapfigure}{r}{.45\linewidth}
%   \centering
%   \includegraphics[width=\linewidth]{hardware}
%   \caption{The abstract machine.}
%   \label{fig:abstract-machine}
% % \end{wrapfigure}
\end{figure}

\cref{fig:abstract-machine} presents an instance of the abstract
machine as presented in the JDMM paper.
On the left side there are several computation blocks with four
cores in each of them.  Each computation block connects
directly to its local scratchpad memory.  The scratchpad memory is
split in a local and a global slice.  In this model, each local slice
connects with every other global slice in the system, but not with any
local slice.  The connections are bi-directional: a core can copy data
from a remote global slice to the local cache to improve performance;
after finishing the job it can transfer back the new data.

The local slice of the scratchpad is used for the local data (\em i.e.\em,
Java stacks) and for caching remote data.  The global slices are
partitions of a total \em virtual \em Java Heap, similarly to Partitioned
Global Address Space (PGAS) models.  The state of the memory
% , marked as a light gray rectangle,
can only be altered by the computation blocks or by committing a fetch,
a write-back, or an invalidate instruction.

% \subsection{The Memory Model}
% \label{ssec:memory-model}

In this abstract machine memory model the software needs to explicitly
transfer data in such a way that JMM guaranties are preserved.  At a
high level, JMM guarantees that data-race-free (DRF) programs are
sequentially consistent, and that variables cannot get
\emph{out-of-thin-air} values under any circumstances.  To
define our core calculus and couple it with the JDMM, we use a subset of
the notation used in the JDMM paper, which we present here along with
the JDMM short presentation.  The JDMM describes program executions as
tuples consisting of:
\begin{enumerate}[1)]
\setlength{\itemsep}{0em}
\item a set of instructions,
\item a set of actions, some of which are characterized as
  synchronization actions.

  The JDMM uses the following abbreviations to describe all possible kinds
  of actions:
  \begin{itemize}
  \item $R$ for read, $W$ for write, and $\mathit{In}$ for initialization of a
    heap-based variable,
  \item $\mathit{Vr}$ for read and $\mathit{Vw}$ for write of a volatile variable,
  \item $L$ for the lock and $U$ for the unlock of a monitor,
  \item $S$ for the start and $\mathit{Fi}$ for the end of a thread,
  \item $\mathit{Ir}$ for the interruption of a thread and
    $\mathit{Ird}$ for detecting such an interruption by another thread,
  \item $\mathit{Sp}$ for spawning (\verb!Thread.start()!) and $J$ for joining a
    thread or detecting that it terminated,
  \item $E$ for external actions, \em i.e.\em, I/O operations,
  \item $F$ for fetch from heap-based variables,
  \item $B$ for write-backs of heap-based variables,
  \item $I$ for invalidations of cached variables.
    % \item $\mathit{TS}$ for the observation that a thread has been started
    % \item $\mathit{Wait}$ for waiting on a monitor
    % \item $N$ and $\mathit{NA}$ for notification
    % \item $\mathit{Clr}$ for clearing an interrupt
    % \item $\mathit{NIrd}$ for the observation that a thread is not
    %   interrupted
    % \item $\mathit{HL}$ and $\mathit{NL}$ for testing whether a thread holds
    %   a lock or not, respectively
  \end{itemize}

  Note that actions with kind $\mathit{In}$, $\mathit{Ir}$,
  $\mathit{Ird}$, $\mathit{Vr}$, $\mathit{Vw}$, $L$, $U$, $S$,
  $\mathit{Fi}$, $\mathit{Sp}$, or $J$ are characterized as
  \emph{synchronization actions} and form the only communication
  mechanism between threads.

\item the program order, which defines the order of actions within
  each thread,
\item the synchronization order, which defines a total ordering among
  the synchronization actions,
\item the synchronizes-with order, which defines the pairs of
  synchronization actions ---release and acquire pairs, % that one
  % synchronizes with the other, \todo{doesn't sound right}
%
\item the happens-before order that defines a partial order among all
  actions and is the transitive closure of the program order and the
  synchronizes-with order, and
\item some helper functions that we do not use in this paper.
\end{enumerate}

The JDMM explicitly defines the conditions that a Java program
execution needs to satisfy on a non cache coherent architecture,
to be a well-formed execution.  These conditions are introduced
in~\cite[\S3 and \S4.2]{zakkak:jdmm}; we briefly present them
here.
% \footnote{ In~\cref{sec:proof} we provide a more formal and
%   complete presentation of JDMM's rules and terminology.}
Note that \wf{1}--\wf{9} were first introduced
in~\cite{Manson:2005:JMMconf}.

% Following the
% JDMM notations, in the rest of the paper, we refer to hardware threads
% as \emph{cores} and to Java threads as \emph{threads}.  Variable $t$
% is used for threads, $m$ for monitors, $v$ for variables, $r$ for read
% and volatile read actions, $w$ for write and volatile write actions,
% $i$ for initialization actions, $s$ for thread start actions, $f$ for
% fetch actions, $p$ for invalidation actions, $b$ for write-back
% actions, $l$ for monitor-enter actions, and $x$, $y$, $z$ to name
% actions of any kind.

\begin{description}
  \setlength{\itemsep}{0.5ex}

\item[WF-1] \phantomsection \label{wf1}
  Each read of a variable sees a write to it.

\item[WF-2] \phantomsection \label{wf2}
  All reads and writes of volatile variables are volatile actions.

\item[WF-3] \phantomsection \label{wf3}
  The number of synchronization actions preceding another
  synchronization action is finite.

\item[WF-4] \phantomsection \label{wf4}
  Synchronization order is consistent with program order.

\item[WF-5] \phantomsection \label{wf5}
  Lock operations are consistent with mutual exclusion.
  % The number of lock actions performed by any thread $t'$ before,
  % according to the synchronization order, the lock action $l$
  % performed by thread $t$ on the monitor $m$ must be equal to the
  % number of unlock actions performed by thread $t'$ before $l$ on the
  % monitor $m$.

\item[WF-6] \phantomsection \label{wf6}
  The execution obeys intra-thread consistency.

\item[WF-7] \phantomsection \label{wf7}
  The execution obeys synchronization order consistency.

\item[WF-8] \phantomsection \label{wf8}
  The execution obeys happens-before consistency.

\item[WF-9] \phantomsection \label{wf9}
  Every thread's start action happens-before its other actions
  except for initialization actions.

\item[WF-10] \phantomsection \label{wf10}
  Every read is preceded by a write or fetch action,
  acting on the same variable as the read.

\item[WF-11] \phantomsection \label{wf11}
  There is no invalidation, update, or overwrite of a variable's
  cached value between the action that cached it and the read that
  sees it.

\item[WF-12] \phantomsection \label{wf12}
  Fetch actions are preceded by at least one write-back of
  the corresponding variable.

\item[WF-13] \phantomsection \label{wf13}
  Write-back actions are preceded by at least one write to
  the corresponding variable.

\item[WF-14] \phantomsection \label{wf14}
  There are no other writes to the same variable between a
  write and its write-back.

\item[WF-15] \phantomsection \label{wf15}
  Only cached variables can be invalidated.  Invalid cached
  data cannot be invalidated.

\item[WF-16] \phantomsection \label{wf16}
  Reads that see writes performed by other threads are
  preceded by a fetch action that fetches the write-back of the
  corresponding write and there is no other write-back of the
  corresponding variable happening between the write-back and the
  fetch.

\item[WF-17] \phantomsection \label{wf17}
  Volatile writes are immediately written back.

\item[WF-18] \phantomsection \label{wf18}
  A fetch of the corresponding variable happens
  immediately before each volatile read.

\item[WF-19] \phantomsection \label{wf19}
  Initializations are immediately written-back; their
  write-backs complete before the start of any thread.

\item[WF-20] \phantomsection \label{wf20}
  The happens-before order between two writes is consistent with
  the happens-before order of their write-backs.  % If, for two writes
  % $w$ and $w'$, $w$ happens-before $w'$, then the corresponding
  % write-backs, $b$ for $w$ and $b'$ for $w'$, must also be ordered, so
  % that $b \hb b'$ and \emph{vice versa}.
\end{description}

Two additional conditions must hold for executions
containing thread migration actions.  Intuitively:
\begin{description}
\item[WFE-1] \phantomsection \label{wfe1}
  There is a corresponding fetch action between a thread
  migration and every read action.

\item[WFE-2] \phantomsection \label{wfe2} Additionally, to make sure
  the fetched value is the latest according to the happens-before
  order, any dirty data on the \emph{old} core need to be
  written-back.
\end{description}

\begin{figure}[t]
  % \begin{wrapfigure}{r}{.45\linewidth}
  %\footnotesize
  \centering
%  \vspace{-4mm}
  \tikzstyle{thread} = [->, decorate,
  decoration={snake,amplitude=1mm,segment length=1cm},line width=1mm]
  \tikzstyle{action} = [text centered, fill=white]
  \tikzstyle{window} = [draw, circle, drop shadow,
  text centered, text width=2em]
  \begin{tikzpicture}[node distance=1.5cm]
    \node (T1) [color=BrickRed] {\textit{T1}};
    \node (T2) [color=ForestGreen, right of=T1] {\textit{T2}};
    % \node (T3) [color=Blue, right of=T2] {T3};
    % \draw [very thick, dashed] (T1.south west) -- (T2.south east);
    \draw [thread, color=BrickRed] (T1.south) -- ++ (0,-4.5);
    \draw [thread, color=ForestGreen] (T2.south) -- ++ (0,-4.5);
    % \draw [thread, color=Blue] (T3.south) -- ++ (0,-4);
    \node (l1) [color=BrickRed, action, below= 2mm of T1] {\texttt{m-enter}};
    \node (w1) [color=BrickRed, action, below= 2mm of l1] {\texttt{write}};
    \node (u1) [color=BrickRed, action, below= 4mm of w1] {\texttt{m-exit}};
    \node (l2) [color=ForestGreen, action, below= 2.2cm of T2] {\texttt{m-enter}};
    \node (r2) [color=ForestGreen, action, below= 4mm of l2] {\texttt{read}};
    \node (u2) [color=ForestGreen, action, below= 2mm of r2] {\texttt{m-exit}};
    % \node (w3) [color=Blue, action, below= 3cm of T3] {w3};
    \draw [dashed, thick, rounded corners] ($(w1.south west)-(5pt,0)$) rectangle
    ($(r2.north east)+(10pt,0)$);
    \draw [color=BrickRed, dashed, very thick, rounded corners]
    ($(w1.south west)-(5pt,0)$) rectangle (u1.north east);
    \draw [color=ForestGreen, dashed, very thick, rounded corners]
    (l2.south west) rectangle ($(r2.north east)+(10pt,0)$);
  \end{tikzpicture}
  \caption{Time window example.}
  \label{fig:time-window}
% \end{wrapfigure}
\end{figure}
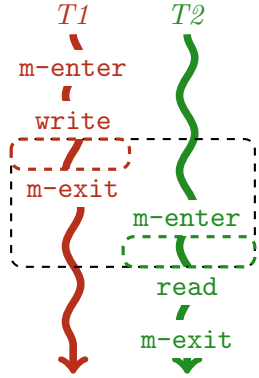

Note that, in the core JDMM, context switching without thread migration
is examined only as an extension.
% and is shown to be safe, without the need of
% any additional cache operations~\cite[\S5.2]{zakkak:jdmm}.
As a result, we hereto use a slightly modified version of \wf{16}
to allow \lang to be more relaxed in the case of context switches and
still comply with the JDMM\@.
The modified rule enables different threads running on the same core
to share the contents of a single cache, without breaking the adherence
to JMM, as shown in~\cite[\S5.2]{zakkak:jdmm}.
That is:
\begin{description}
\item[WF-16] Reads that see writes performed by another \emph{core}
  are preceded by a fetch action that fetches the write-back of the
  corresponding write and there is no other write-back of the
  corresponding variable happening between the write-back and the
  fetch.
\end{description}

% Based on these definitions and rules we define a core calculus for Java
% in the next section,
% \todo{No more appendix here}
% For the formal definitions of the rules, readers are referred to
% \cref{app:jdmm}.

% JDMM's rules aim to be as relaxed as possible, to accept all legal
% executions that adhere to JMM\@.
The JDMM intuitively states that a write-back and its corresponding fetch
may be executed any time in the time window between a write and the
corresponding read, given that the write happens-before\footnote{as
  defined in~\cite{lamport1978time}} this read.
For instance, in~\cref{fig:time-window} the thread \textit{T1} performs
a \texttt{write} that happens-before the corresponding \texttt{read} in
thread \textit{T2}.
The happens-before relationship is a result of the monitor release,
\texttt{m-exit}, by \textit{T1} and the subsequent monitor acquisition,
\texttt{m-enter}, by \textit{T2}.
The time window that the JDMM allows a write-back and its corresponding
fetch to be performed is marked with the big black dashed rectangle.

This flexibility on when these operations can be executed, allows for
great optimization in theory.  However, in practice it is very
difficult to even estimate this time window.  The JVM needs to keep
extra information for every field in the program and constantly update
it.  It needs to know the sequence of lock acquisition, who was the
last writer, if their write has been written-back, and whether the
cached value (if any) is consistent with the main memory or not.
Implementing these over software caching seems prohibitive, as the
cost of the bookkeeping and the extra communication is expected to
be much higher than the expected benefits regarding energy, space, and
performance.

\begin{figure}[bt]
  \centering
  \includegraphics{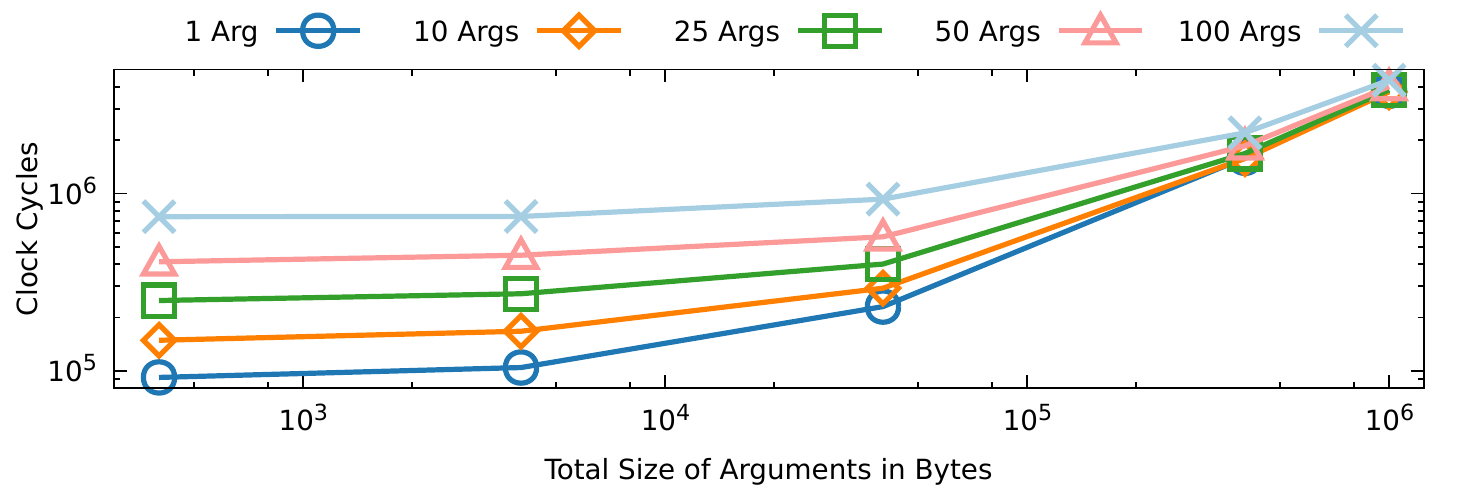}
  \caption{Performance impact of arguments size}
  \label{fig:arg-size-num}
\end{figure}

An intuitive implementation is to issue all the write-backs at release
actions.
However, this may result in long blocking release actions for critical
sections that perform writes on large memory segments.
To demonstrate the overhead of such operations we perform a simple
experiment, where one core transfers a given data set from another
core's scratchpad to its own.
\cref{fig:arg-size-num} shows the impact of the arguments' size and
number on the data transfer time.
On the y-axes we plot the clock cycles consumed to transfer all the data
from one core's to another core's scratchpad.
On the x-axes we plot the total size of the data in Bytes.
Each line in the plot represents a different partitioning of the data,
in 1, 10, 25, 50, and 100 arguments respectively.
We observe that apart from the total data size the partitioning of the
data impacts the transfer time as well.
This is a result of performing multiple data transfers instead of a
single bulk transfer.
As a result, keeping a lot of dirty data cached until a release
operation is expected to perform badly, as it most probably will need to
perform multiple data transfers to write-back non contiguous dirty data.

Hera-JVM~\cite{McIlroy2010} ---the only, to the best of our knowledge,
JVM for a non cache coherent architecture that claims adherence
to the JMM--- issues a write-back for every write and then waits for all
pending write-backs to complete at release actions.
This approach significantly reduces the blocking time at release
actions, but results in multiple redundant write-backs in cases where a
variable is written multiple times in a critical section.
Such redundant memory operations are usually overlapped with
computation, keeping their performance overhead low.
However, the additional energy consumption they impose might still be
significant in energy-critical systems.
Additionally, in the case of writing to array elements, their approach
results in one memory transfer per element when a bulk transfer can be
used to improve performance and energy efficiency.

In this work we propose an alternative policy regarding write backs,
that aims to mitigate such cases by caching dirty data up to a certain
threshold.
Additionally, since the Formic architecture is more relaxed than the
Cell B.E.~\cite{cellBE} architecture that Hera-JVM is targeting, we
also present novel mechanisms to handle synchronization.

\section{Implementation}
\label{sec:implementation}

We implement our memory and cache management policy in DiSquawk, a
JVM we developed for the Formic-cube 512-core prototype.
Formic-cube is based on the Formic
architecture~\cite{lyberis2012formic}, which is modular and allows
building larger systems by connecting multiple smaller modules.
The basic module in the Formic architecture is the Formic-board.
Each board consists of 8 MicroBlaze\texttrademark-based,
non cache coherent cores and is equipped with 128MB of scratchpad
memory.
Each core also features a private software-managed, non-coherent,
two-level cache hierarchy; a hardware queue (mailbox) that supports
concurrent en-queuing, and de-queuing only by the owner core; and a DMA
engine.
All of Formic's scratchpads are addressable using a global address
space, and data are transferred through DMA transfers and mailbox
messages to and from remote memory addresses.

\subsection{Software Cache Management}
\label{sec:implementation:cache}

As the Formic-cube does not provide hardware cache coherence, we
build our JVM based on software caching.
Each core is assigned a part of the local scratchpad, which it uses as
its private software cache.
This software cache is entirely managed by the JVM, transparently to
the programmer.

To limit the amount of cached dirty data up to a given threshold we
split the software cache in two parts.
The first part, called \emph{object cache}, is used for caching objects
and is append-only ---writes on this cache are not permitted.
The second part, called \emph{write buffer}, is dedicated to
caching dirty data.
When the write buffer becomes full, we write back all its data and
update the corresponding fields in the object cache, if the
corresponding object is still cached.
Note that the combination of the write-buffer and the object cache form
a memory-hierarchy, where the write-buffer is below the object cache.
That is, read accesses first go through the write-buffer and only if
they miss they go to the object cache.
If they miss again, the JVM proceeds to fetch the corresponding
object.
This way, we
\begin{enumerate*}[label=\itshape\alph*\upshape)]
\item set an upper limit on the release operations' blocking time;
\item allow for overlapping write-backs with computation when the
  threshold is met;
\item allow for bulk transfer of contiguous data, \em e.g.\em, written elements
  of an array; and
\item allow for multiple writes to the same variable without the need
  to write back every time.
\end{enumerate*}
At acquisition operations, we write back all the dirty data, if any, and
invalidate both the object cache and the write buffer, in order to force
a re-fetch of the data if they get accessed in the future.
The write-back of the dirty data at acquisition operations is necessary
since we invalidate all the cached data.
Consider an example where a monitor is entered (acquire operation) then
a write is performed, and a different monitor is now entered (acquire
operation).
In this case simply invalidating all cached data, would result in the
loss of the write.

This approach is safe and sound, as we later show, but shrinks
the aforementioned time window thus limiting the optimization space.  A
visualization of the shrunk time window is presented
in~\cref{fig:time-window}.  The small red dashed rectangle on the
upper left corner of the big rectangle is the time window in which the
write-back can be executed.  Respectively the small green dashed
rectangle on the lower right corner is the time window in which the
corresponding fetch can be executed.
Note that although pre-fetching data, even in the shrunk time window,
allows for significant performance optimizations we do not implement it
in this work.
Alternatively, we only fetch data at cache misses.
Pre-fetching depends on program analysis to infer which data are going
to be accessed in the future.
Such analyses
% , as well as their efficiency,
are not specific to non cache coherent architectures or the Java Memory
Model, thus they our out of the scope of this work.

Despite the aforementioned reduction of flexibility regarding when a
data transfer can happen, and the lack of support for
pre-fetching, we are still able to achieve good performance and scale
with the number of cores due to the efficient on-chip communication
channels.
To demonstrate this we use the Crypt, SOR, and Series benchmarks from
the Java Grande~\cite{grande} suite and the Black-Scholes benchmark from
the PARSEC suite~\cite{Bienia:2008:PBS:1454115.1454128}, ported to Java.
Due to the lack of garbage collection and the upper limit of 4 GB heap
we are unable to run reasonable workloads with the rest of the Java
Grande benchmarks.
These benchmarks require larger than 4 GB datasets to produce meaningful
results on a large number of cores and some of them also create objects
with short lifespans, relying on garbage collection to reclaim their
memory.
% \todo{maybe state it better?}
%BLSCH
Series and Black-Scholes are embarrassingly parallel benchmarks.
Each thread operates on a different subset of data from an input set and
creates a new set with the corresponding results.
The results are then accessed by the main thread for validation.
%Crypt
Crypt comprises of two embarrassingly phases.
In the first phase each thread encrypts a subset of the input data and
then waits on a barrier.
When all threads reach the barrier they proceed to decrypt each a subset
of the encrypted data.
The results are then compared to the original input for validation.
%SOR
SOR performs a number of iterations where each thread acts on a
different block of an array accessing the previous and next neighboring
blocks as well.
As a result, each iteration depends on the neighboring blocks.
To ensure that the neighboring blocks are ready, SOR uses a volatile
counter for each thread.
This counter reflects the iteration the corresponding thread is on.
Each thread updates the counter at the end of each iteration and
accesses the two counters of the neighboring threads.

% \begin{figure}[ht]
%   \centering
%   \begin{tabular}{r || r | r | r | r}
% % BEGIN RECEIVE ORGTBL speedups
% \#Cores & Crypt & Series & SOR & BLSCH\\
% \hline
% \hline
% 2 & 1.49 & 1.96 & 1.50 & \\
% 4 & 2.97 & 3.95 & 2.90 & \\
% 8 & 6.02 & 7.93 & 5.89 & 6.16\\
% 16 & 12.01 & 15.88 & 11.92 & 13.20\\
% 32 & 23.93 & 31.78 & 23.98 & 27.25\\
% 64 & 47.43 & 63.59 & 47.80 & 54.90\\
% 128 & 93.23 & 127.15 & 95.28 & 109.14\\
% 256 & 179.55 & 254.13 & 189.26 & 196.07\\
% 512 & 344.45 & 499.50 & 516.91 & 394.04\\
% % END RECEIVE ORGTBL speedups
%   \end{tabular}
%   \caption{DiSquawk Speedups}
%   \label{fig:speedup}
% \end{figure}

% \begin{comment}
% #+ORGTBL: SEND speedups orgtbl-to-latex :splice t
% | Bmark |  Crypt | Series |    SOR |  BLSCH |
% |-------+--------+--------+--------+--------|
% |-------+--------+--------+--------+--------|
% |     2 |   1.49 |   1.96 |   1.50 |        |
% |     4 |   2.97 |   3.95 |   2.90 |        |
% |     8 |   6.02 |   7.93 |   5.89 |   6.16 |
% |    16 |  12.01 |  15.88 |  11.92 |  13.20 |
% |    32 |  23.93 |  31.78 |  23.98 |  27.25 |
% |    64 |  47.43 |  63.59 |  47.80 |  54.90 |
% |   128 |  93.23 | 127.15 |  95.28 | 109.14 |
% |   256 | 179.55 | 254.13 | 189.26 | 196.07 |
% |   512 | 344.45 | 499.50 | 516.91 | 394.04 |
% #+TBLFM: $3=$2/$INVALID;%.1f
% % $ (optional extra dollar to keep font-lock happy, see footnote)
% \end{comment}

\begin{figure}[t]
  \centering
  \includegraphics{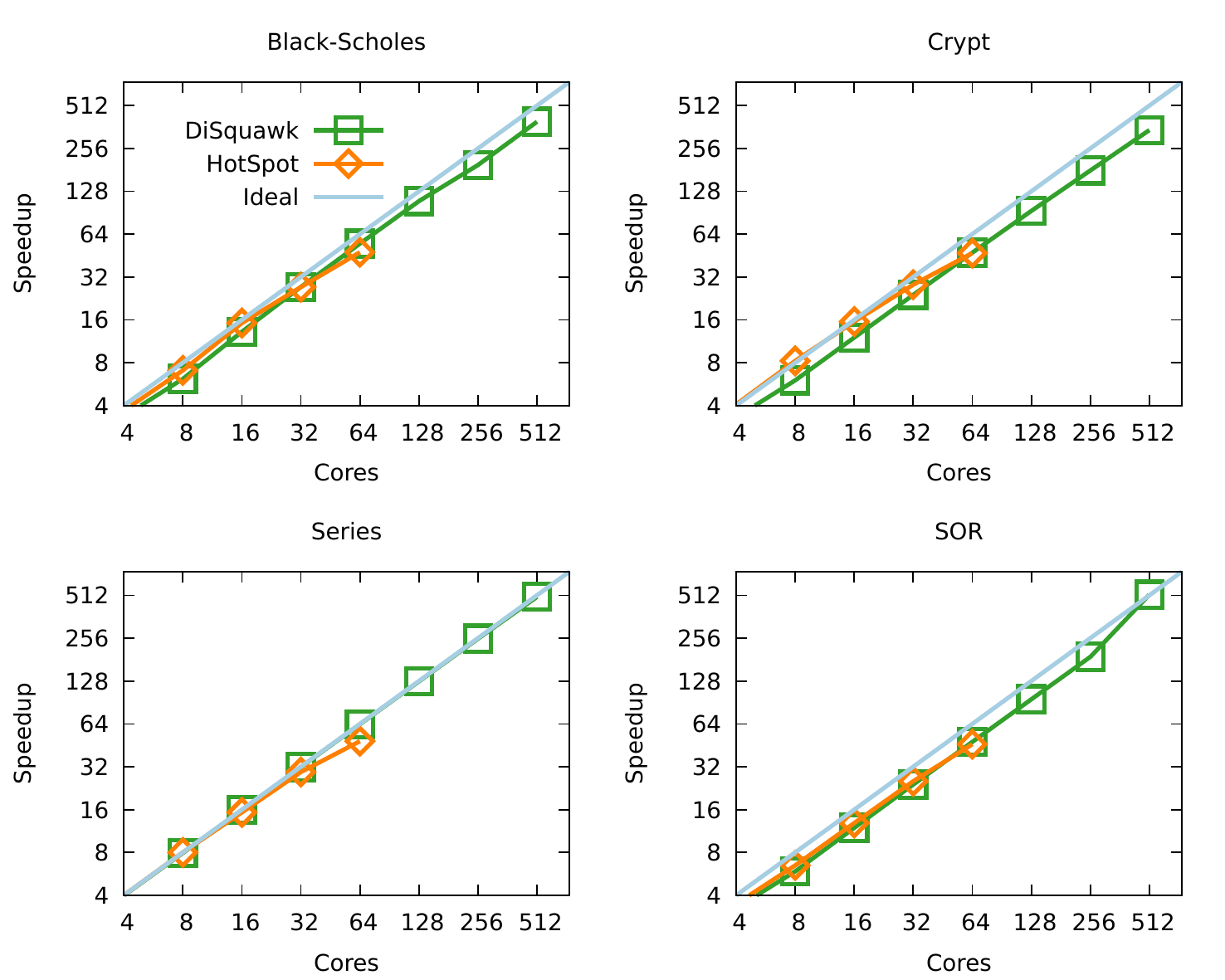}
  \caption{Speedup Results}
  \label{fig:speedup}
\end{figure}

\cref{fig:speedup} presents the speedup of the four benchmarks on both
DiSquawk, running on the formic-cube, and HotSpot running on a 4-chip
NUMA machine with 16 cores per chip, totalling 64 cores.
Since formic-cube is a prototype clocked at 10MHz, a comparison of the
throughput or the execution time is not possible, thus we chose to
compare the applications' scaling on both architectures.
The presented speedups are over the performance of the application
running on a single core on each architecture respectively.
Since DiSquawk does not support JIT compilation, we also disable it in
HotSpot (using the \texttt{-Xint} flag); this allows us to better
understand the applications' behavior on both architectures.
The number of Java threads, one per core, is placed on the x-axis, and
the speedup is placed on the y-axis.
Both axes are in logarithmic scale of base 2.
We observe that all benchmarks manage to scale with the number of cores
in both architectures.
Black-Scholes and Series scale better on DiSquawk than HotSpot when
using 32 or more cores, while Crypt performs better on HotSpot than
DiSquawk when using up to 32 cores.
%A deeper evaluation is out of this work's scope.

% Over 128 cores, black-scholes continuous to scale linearly.
% Note that during the whole run the input and output data sets are both
% cached in each core's scratchpad resulting to zero network traffic
% during computation.
% Although the immediate write-back approach employed in Hera-JVM would
% have transferred the same amount of data overall, our approach takes
% advantage of the spacial locality of the output set to issue a single
% bulk transfer for all the data.

% \paragraph{Merging Cache Conflicts:}
% \label{sec:implementation:merge}
% In non-DRF programs, two different threads might end up writing the same
% variable concurrently.
% In such cases, that variable will end up being cached in two different
% write buffers.
% However, at the end only one value can be stored in the memory.
% In our implementation we do not handle such cases specially.
% We rely on the fact that racy writes will eventually get written-back at
% some point.
% This may result in a racy write to overwrite a non racy one, which is
% not prohibited for non-DRF programs and does not break the
% happens-before order of the properly synchronized actions in that
% program.
% % Additionally, it never results in reads of \emph{out-of-thin-air}
% % values, due to partial updates.
% Note that for DRF programs, the changes in the write buffers of the
% system are always disjoint and thus there are no conflicts.

\subsection{Java Monitors}
\label{sec:implementation:monitors}

Apart from the data movement, JDMM also dictates the operation of Java
monitors.  Java monitors are essentially re-entrant locks associated
with Java objects.  In Java, each object is implicitly associated with a
monitor and can be used in a \texttt{synchronized} block as the
synchronization point.  Java monitors are usually implemented using
atomic operations, such as \emph{compare and swap}, in shared-memory
cache coherent architectures, relying on the hardware to synchronize
multiple threads trying to obtain the monitor.  Such atomic operations
are not standard in non cache coherent
architectures, though~\cite{howard2010,lyberis2012formic}.

To implement the Java monitors on such architectures we propose a
synchronization manager: a server running on a dedicated core, handling monitor
enter/exit requests.  To keep contention at low levels we use multiple
synchronization managers according to the number of available cores on the
system.
Each synchronization manager is responsible for a number of objects in
the system, and each object can be associated with its synchronization
manager using a hash function.
When a thread executes a monitor-enter the JVM communicates with the
corresponding synchronization manager and requests ownership of the monitor.
This way all requests regarding a single monitor end up in the
corresponding synchronization manager's hardware message queue, from where they
are handled by the synchronization manager one by one, in the order they
arrived.  We essentially delegate the synchronization of the requests to
the architecture's network on chip, and provide mutual exclusion through
the synchronization managers.

To reduce the synchronization managers' load, the network's traffic and
contention, and to keep energy consumption low we take advantage of the
blocking nature of monitors.
Instead of sending back negative responses, when a monitor is already
acquired by some other thread, we queue the monitor-enter requests in
the synchronization manager, and assign the monitor to the oldest
requester when it becomes available.
This way we ensure fairness in the order that the requests are handled.
Although this is not required by the Java Language
Specification~\cite{JLS8}, we consider it better than arbitrarily
choosing one of the waiting threads, since it avoids the starvation of
threads.
Additionally, when a thread is waiting for a monitor it yields to free
up resources for other threads.
Instead of periodically rescheduling such waiting threads ---as we do
with other yielded threads--- we use a mechanism that reschedules them
only when the monitor they requested has been assigned to them.
That is, the synchronization manager has send an acknowledgement message
to the core executing the waiting thread.

Using a synthetic micro-benchmark which constantly issues requests to a
single monitor manager from $X$ cores in the system, where
$0 < X < 512$, we find that, on our system, at least one synchronization
manager per 243 cores is required to avoid scenarios where the
synchronization manager becomes a bottleneck.

\subsection{Volatile Variables}
\label{sec:implementation:volatiles}

Another challenging part is the support of volatile variables.  Volatile
variables are special, because accessing them is a form of
synchronization.  Specifically, volatile reads act as acquire
operations, while volatile writes act as release operations.  That said,
after a volatile read any data visible to the last writer of the
corresponding volatile variable must become visible to the reader.
Volatile accesses are usually implemented using memory fences provided
by the underlying architecture in shared-memory cache coherent
systems~\cite{lea2008jsr}.

Since non cache coherent architectures do not provide memory fences, in
our implementation we rely on synchronization managers to ensure a total
ordering between the various accesses to a volatile variable.
Essentially we treat volatile accesses as synchronized blocks protected
by a special monitor, unique per volatile variable.
Therefore, we write back and invalidate any cached data before
volatile accesses, and write back the dirty data immediately after
volatile writes.
This approach comes at the cost of unnecessary cache invalidations in
the case of volatile writes, which should not be often since volatile
variables are usually employed as a completion, interruption or status
flag~\cite[\S 3.1.4]{peierls2006java} ---meaning that they are being
mostly read during their life-cycle.

A side-effect of this implementation is the provision of mutual
exclusion to concurrent accesses on the same volatile variable.
Since Formic provides no guarantees about the atomicity of memory
accesses, we rely on this side-effect to ensure a volatile read will
never return an \emph{out-of-thin-air} value due to a partial update.

% Java also provides a few more synchronization mechanisms.  Namely, those
% are:
% \begin{enumerate*}[label=\itshape\alph*\upshape)]
% \item the \texttt{wait()}/\texttt{notify()} mechanism, where a thread
%   may yield to wait for another thread to notify it;
% \item the detection that a thread reached completion (through
%   \texttt{isAlive()} or \texttt{join()}); and
% \item the interruption of a thread and the detection of such
%   interruptions.
% \end{enumerate*}
% We implement these mechanisms relying on the synchronization manager,
% monitors and volatile variables.

\subsection{Wait/Notify Mechanism}
\label{sec:implementation:waitnotify}
Java also offers the wait/notify mechanism, which allows a thread to
block its execution and wait for another thread to unblock it.
Since \texttt{wait()} and \texttt{notify()} require the monitor of the
corresponding object to be held by the executing thread, we use the
synchronization manager to keep track of such operations as well.  The
synchronization managers are holding a list of waiters for each object
they are responsible for.  Note that to keep the space overhead low we
only allocate records when the first request for an object arrives.
Initially, the synchronization managers hold no data for the objects
they are responsible for.  Whenever a thread invokes \texttt{wait()} a
special message is send to the synchronization manager that adds the
corresponding thread to the waiters queue and releases the monitor.  As
a result, before sending such messages we write back any dirty data.  To
support \texttt{wait()} invocations with a timeout we also support
messages to the synchronization manager that request the removal of a
thread from the waiters list.  When \texttt{notify()} is invoked it
sends a message to the synchronization manager, which notifies and
removes the longest waiting thread (if any).  In the case of
\texttt{notifyAll()}, all threads in the waiters queue get notified and
removed.

\subsection{Liveness Detection}
\label{sec:implementation:liveness}
For the detection of thread termination and checking of liveness we rely
on volatile variables.  Each thread is described using a JVM internal
object, which holds a volatile variable with the state of the thread.
The supported states are, \emph{spawned}, \emph{alive}, \emph{dead}.  We
implement \texttt{isAlive()} as a simple read to that state, if it
is equal to \emph{alive} then we return \texttt{true}.  On the other
hand, for the \texttt{join()} method we avoid spinning on the state
variable in an effort to reduce energy consumption and free up resources
for other threads in the system.  We base our \texttt{join()}
implementation on the \texttt{wait()}/\texttt{notify()} mechanism.
Since a thread invoking \texttt{join()} will have to wait until the
completion of the thread it joins, we yield it by invoking wait on the
JVM internal object, describing the thread.  When the corresponding
thread reaches completion it invokes \texttt{notifyAll()} on that
internal object and wakes up any joiners.

% \paragraph{Interrupts:}
DiSquawk currently does not support interruptions.
We consider their implementation regarding synchronization to be straightforward.
Before sending an interrupt, all dirty data of the sending thread need
to be written back, and upon interruption the receiving thread needs to
write back any dirty data if present and invalidate its object cache.

% Note that the synchronization manager is used to force the ordering
% between the involved threads, like in volatile accesses.  The memory
% synchronization is provided by the monitor acquisitions and releases
% that are required by the process.

% Most architectures provide atomic writes and reads on the
% granularity of a single word or coarser, which is enough to inherently
% satisfy the out-of-thin-air guaranty~\cite[\S17.7]{JLS8}.
% Additionally, most architectures' memory models allow for ordering
% between volatile accesses through memory barriers.  In the not so
% common case where reads and writes of a single word might not be
% atomic or volatile accesses cannot be ordered with memory barriers,
% there is the need to provide atomical accesses at that granularity
% through the virtual machine.  In the case of volatile accesses, the use
% of locks per volatile field in a similar manner with the
% $\rvolatilewrite$ and $\rvolatileread$ rules.\todo{move these paragraphs
% to the implementation section?}

%   An optimized implementation would issue a single write-back
% at the last write, and pre-fetch data in the fetch time-window.  In
% this work we present a core calculus for Java, along with its
% operational semantics, that formalizes and allows for verification of
% such implementations.  This core calculus is tightly coupled with the
% JDMM definition and its notation that we shortly present next.

\section{The Calculus}
\label{sec:fj}

To argue about the correctness of our implementation, we model it
using a Java core calculus and its operational semantics.
% The Java core
% calculus that we present is a minimal core calculus for the Java language.
%Using this compact core calculus it is easier
%to argue about type and effect systems for Java.  Featherweight Java's
%semantics are closely following those of Java, rendering it a perfect
%fit for arguments about the Java language.
We base our calculus on the Java core calculus
% the Featherweight Java~\cite{Igarashi:1999:FJM:320384.320395} variant
introduced by Johnsen \em et al.\em~\cite{Johnsen2012257}, which omits
inheritance, subtyping,
and type casts, and adds concurrency and explicit lock support.
We extend that calculus by replacing the explicit lock
support with synchronization operations and adding support for cache
operations.
% \textcolor{gray}{Similarly to Johnsen et al., we also omit Java
% inheritance, subtyping, and typecasting as they have been extensively
% studied previously and are orthogonal to cache management.}
% \todo {candidate for removal}
We define the operational semantics of the resulting Distributed Java
Calculus (\lang) and use it to argue about the correctness of the
cache and monitor management techniques used in DiSquawk.

\subsection{Syntax}
\label{sec:syntax}

\begin{figure}[bt]
  \vspace{-1ex}
  \begin{displaymath}
    \small
    \arraycolsep=4pt
    \begin{array}{lrcl}
      \textrm{Program}     & J      & \Coloneqq  & \vec{D}
      \\
      \textrm{Class Def.}  & D      & \Coloneqq  & \sclass \\
      \textrm{Types}       & \tau   & \Coloneqq  & C
                                                   \mid \tbool
                                                   \mid \tnat
                                                   \mid \tunit
      \\
      \textrm{Methods}     & M      & \Coloneqq  & \smethod \\
      % \textrm{Thread Identifier}
      % & & t      & \in        & \mathbb{N}
      % \\
      \textrm{Expressions} & e      & \Coloneqq  & x
                                                   \mid \snew{\vec{e}}
                                                   \mid e.f
                                                   \mid \sassign{e.f}{e}
      \\
                           &        & \mid       & \slet{x}{\tau}{e}{e}
      \\
                           &        & \mid       & \sifelse{e}{e}{e}
                                                   \mid e.m(\vec{e})
      \\
                           &        & \mid       &
                                                   e.\sfmt{acquire}
                                                   \mid e.\sfmt{release}
      \\
                           &        & \mid       & e.\sfmt{monitorenter}
                                                   \mid e.\sfmt{monitorexit}
      \\
      \textrm{Values}      & v      & \Coloneqq  & r
                                                   \mid \sunit
                                                   \mid \strue \mid \sfalse \mid n
      \\
      \textrm{Contexts}    & \mathit{E}(\bullet) & \Coloneqq  &
                                                                \snew{v,\ldots,\bullet,\ldots,e}
                                                                \mid \bullet.f
      \\
                           &        & \mid       &
                                                   \sassign{e.f}{\bullet}
                                                   \mid \sassign{\bullet.f}{v}
      \\
                           &        & \mid       & \slet{x}{\tau}{\bullet}{e}
      \\
                           &        & \mid       & \sifelse{\bullet}{e}{e}
      \\
                           &        & \mid       & e.m(v,\ldots,\bullet,\ldots,e)
      \\
                           &        & \mid       & \bullet.\sfmt{monitorenter}
                                                   \mid \bullet.\sfmt{monitorexit}
      \\
      % &        & \mid       & \bullet.\sfmt{acquire}
      % \mid \bullet.\sfmt{release}
      % \\
      \textrm{Threads}     & T      & \Coloneqq  & \sproc{c}{r, \sfmt{start}}
                                                   \mid \sproc{c}{r, e}
                                                   \mid (T \parallel T)
                                                   \mid \mathbf{0}
      \\
      \textrm{Object}      & o      & \doteq     & C(\overrightarrow{f \mapsto v})
                                                   \mid C(\overrightarrow{f \mapsto v}, \sfmt{started}) \\
                           &        & \mid       & C(\overrightarrow{f \mapsto v}, \sfmt{spawned}) \\
                           &        & \mid       & C(\overrightarrow{f \mapsto v}, \sfmt{finished}) \\
                           &        & \mid       & C(\overrightarrow{f \mapsto v}, \sfmt{interrupted}) \\
      \textrm{Heap}       & \sheap & \doteq     & \overrightarrow{r \mapsto (o, l)} \\
      \textrm{Object Cache} & \sscache & \doteq  & \overrightarrow{r \mapsto o} \\
      \textrm{Write Buffer} & \ssdcache & \doteq & \overrightarrow{r.f \mapsto v} \\
      \textrm{Cache per Core} & \vec{\sscache} & \doteq& \overrightarrow{c \mapsto \sscache} \\
      \textrm{Buffer per Core} & \vec{\ssdcache} & \doteq& \overrightarrow{c \mapsto \ssdcache} \\
      \textrm{Lock State}  & l      & \Coloneqq  & 0 \mid r(n) \\
    \end{array}
  \end{displaymath}
  \caption{Abstract syntax of \lang}
  \label{tbl:syntax}
\end{figure}

The syntax of \lang is presented in~\cref{tbl:syntax}.  A Java
program $J$ consists of a sequence $\vec{D}$ of class definitions.
%In the rest o this paper we use the vector notation to refer to a list
%or sequence of entities and $\vec{f:\tau}$ to show that each
%element of the list or the sequence $\vec{f}$ has the corresponding
%type from $\vec{\tau}$, i.e., $v_i$ is of type $\tau_i$.
A class is defined as $\sclass$ where $C$ is the class name;
$\overrightarrow{f:\tau}$ is the list of field declarations, where each
$f_i$ is unique; $e$ is the body of the class constructor; and $\vec{M}$
is a sequence of method definitions.
The calculus types are class names $C$, boolean scalar types $\tbool$,
scalar natural numbers $\tnat$, and $\tunit$ for the unit value $()$.
A method is defined as $\smethod$ where $m$ is the method's name;
$\overrightarrow{x:\tau}$ is the set of formal arguments; $e$ is the
method body; and $\tau$ is the return type.
To keep the calculus simple we do not support method overloading.

The syntax includes variables $x$; creation of class instances as
$\snew{\vec{e}}$; field accesses as $e.f$, where $f$ is a unique field
identifier; field updates as $\sassign{r.f}{e}$; and sequential
composition using the let-construct as $\slet{x}{\tau}{e}{e}$.
Note that the evaluation of $e$ may have side-effects.
Conditional expressions are expressed as
$\sifelse{e}{e}{e}$; and method calls as $e.m(\vec{e})$, where $m$ is
the method name.

The syntax also includes monitor enter and exit actions
as expressions $e.\sfmt{monitorenter}$ and
$e.\sfmt{monitorexit}$, respectively.
Note that volatile accesses do not have separate bytecodes in Java; they
appear as normal memory accesses and the JVM checks at runtime whether
they are volatile or not. Thus, we do not provide special syntax for
them.

Values $v$ are references to objects $r$, the unit value $\sunit$,
boolean constants $\strue$ and $\sfalse$ and scalar numerical constants
$n$, abstracting over all other Java scalar types.
Contexts are used to show the evaluation sequence of the expressions.
In each expression in $E(\bullet)$ the $\bullet$ is evaluated first.

To argue about threads at runtime we extend \lang's syntax with run-time
threads.
A thread is defined as $\sproc{c}{r, \sfmt{start}}$ or
$\sproc{c}{r, e}$, where $c$ is the \emph{unique} identification of the
core that executes it; $r$ is the corresponding instance of the
\texttt{Thread} class; $\sfmt{start}$ is the thread start action, that
signals the start of its execution and is not to be confused with the
\texttt{start()} method of the \texttt{Thread} class; and $e$ is the
thread's body.
Threads can be composed in parallel pairs using the associative and
commutative binary operator $\parallel$.
The empty thread is marked with $\mathbf{0}$ and is the neutral element
of $\parallel$.

We represent an object in the runtime syntax as
$C(\overrightarrow{f \mapsto v})$ or
$C(\overrightarrow{f \mapsto v}, \mathit{state})$.
The first form is used for every object in the memory, while the second
is only used for thread objects whose \texttt{start()} method has
been invoked, and $\mathit{state}$ can be one of $\sfmt{spawned}$,
$\sfmt{started}$, $\sfmt{finished}$, and $\sfmt{interrupted}$.
Each object contains the name of its class and a map of field names $f$
to values $v$.
A thread whose \texttt{start()} method has been invoked is \em
spawned\em.
A thread whose \textrm{run()} method has been invoked is \em
started\em.
A thread that has reached completion is \em finished\em.
A thread whose \texttt{interrupt()} method has been
invoked is \em interrupted\em.

The memory of the system is split into the Heap $\sheap$, the object cache
$\sscache$, the write buffer $\ssdcache$, the object cache per core
$\scaches$, and the write buffer per core $\ssdcaches$.
The heap is a map from references $r$ to objects $o$ and their monitor
$l$.
The object cache is a map from references $r$ to objects $o$.
The write buffer is a map from object fields $r.f$ to values $v$.
The object cache per core is a map from core ids $c$ to object caches
$\sscache$.
Similarly, the write buffer per core is a map from core ids $c$ to write
buffers $\ssdcache$.

To model mutual exclusion we also add a lock state to the runtime
syntax.
A lock $l$ may be free, i.e., $0$, or acquired by some thread $r$, $n$
times.

\subsection{Operational Semantics}
\label{sec:oper-semant}

The operational semantics of \lang are based on those introduced by
Johnsen \em et al.\em~\cite{Johnsen2012257}.
In this work we introduce new rules for \emph{fetch}, \emph{write-back},
\emph{invalidate}, \emph{volatile-read}, \emph{volatile-write},
\emph{start}, \emph{finish}, \emph{join}, \emph{interrupt},
\emph{interrupt detection}, and \emph{migrate} operations.
Note that we do not model \texttt{java.util.concurrent}, a Java
library providing more synchronization mechanisms, in our
formalization, since its interference with JMM is not yet fully defined.

\begin{figure}[t]
  \centering
  \small
  \hfill
  \begin{tabular}{r p{0.7\linewidth}}
    \textbf{Notation}               & \textbf{Definition} \\
                                    & \\[-.5em]
    $r$                             & Reference value\\
    $m$                             & Method identifier\\
    $f$                             & Field identifier\\
    $c$                             & Core identifier\\
    $\dom{X}$                       & Returns the keys of the map $X$\\
    $\rng{X}$                       & Returns the values of the map $X$\\
    $\vec{X}[X_i'/X_i]$             & Replaces $X_i$ with $X_i'$ in $X$\\
    $\vec{X}\downarrow\vec{x}$      & The subset of map bindings in $X$ with keys in $\vec{x}$ \\
    $\vol{r.f}$                     & Returns true if $r.f$ is \texttt{volatile}\\
    $C\paren{\overrightarrow{f \mapsto v}}$ & A Java object that is an instance of class $C$ with mappings of field names to values $\overrightarrow{f \mapsto v}$ \\
  \end{tabular}
  \hfill~
\caption{Definition of Notation}
\label{tbl:defs}
\end{figure}

\begin{figure}[th!]
  \vspace{-1ex}
  \begin{displaymath}
    \small
    \begin{array}{c}
      \multicolumn{1}{l}{
      \fbox{$
      \sheap; \sscache; \ssdcache \vdash \sproc{c}{r_t, e}
      \toa{\alpha}
      \sheap; \sscache; \ssdcache \vdash \sproc{c}{t_r, e}
      $}
      }
      \\ \\

      \inferencel[\rctxstep]
      {
      \sheap; \sscache; \ssdcache \vdash \sproc{c}{r_t, e}
      \toa{\alpha}
      \sheap'; \sscache'; \ssdcache' \vdash \sproc{c}{r_t, e'}
      }
      {
      \sheap; \sscache; \ssdcache \vdash \sproc{c}{r_t, E(e)}
      \toa{\alpha}
      \sheap'; \sscache'; \ssdcache' \vdash \sproc{c}{r_t, E(e')}
      }
      \\ \\

      \inference[\rcondt]
      {}
      {
      \sheap; \sscache; \ssdcache \vdash \sproc{c}{r_t, \sifelse{\strue}{e_1}{e_2}} \to
      \sheap; \sscache; \ssdcache \vdash \sproc{c}{r_t, e_1}
      }
      \\ \\

      \inference[\rcondf]
      {}
      {
      \sheap; \sscache; \ssdcache \vdash \sproc{c}{r_t, \sifelse{\sfalse}{e_1}{e_2}} \to
      \sheap; \sscache; \ssdcache \vdash \sproc{c}{r_t, e_2}
      }
      \\ \\

      \inference[\rred]
      {}
      {
      \sheap; \sscache; \ssdcache \vdash \sproc{c}{r_t, \slet{x}{\tau}{v}{e}} \to
      \sheap; \sscache; \ssdcache \vdash \sproc{c}{r_t, e[v/x]}
      }
      \\ \\

      % \inference[\robjects]
      % { i \in \dom{\sheap} }
      % {\sheap (i) = C (\vec{v}) }
      % \\ \\

      % \inference[\rlet]
      % {}
      % {
      % \sheap; \sscache; \ssdcache \vdash
      % \slet{x_2}{\tau_2}{(\slet{x_1}{\tau_1}{e_1}{e_2})}{e_3} \to\\
      % \sheap; \sscache; \ssdcache \vdash \slet{x_1}{\tau_1}{e_1}{(\slet{x_2}{\tau_2}{e_2}{e_3})}
      % }
      %   \\ \\

      \inference[\rcall]
      {
        % r \in \dom{\sheap} \\
      \sheap(r) = C(\overrightarrow{f \mapsto v'}) \\
      m(\overrightarrow{x:\tau})\{\sfmt{return} \; e;\} \in C
      % m \neq \sfmt{start}
      }
      {
      \sheap; \sscache; \ssdcache \vdash \sproc{c}{r_t, r.m(\vec{v})} \to
      \sheap; \sscache; \ssdcache \vdash \sproc{c}{r_t, e[\vec{v}/\vec{x}][r/\sfmt{this}]}
      }
      \\ \\

      % To access a field we need it in our cache
      \inferencel[\rfield]
      {
      r \in \dom{\sheap} \\
      \neg \vol{v.f} \\\\
      \sscache(r.f) = v \\
      r.f \notin \dom{\ssdcache}
      }
      {
      \sheap; \sscache; \ssdcache \vdash \sproc{c}{r_t, r.f} \toa{R}
      \sheap; \sscache; \ssdcache \vdash \sproc{c}{r_t, v}
      }
      \\ \\

      % To access a field we need it in our cache
      \inferencel[\rfieldd]
      {
      r \in \dom{\sheap} \\
      \neg \vol{v.f} \\\\
      % \sscache(r.f) = v' \\
      \ssdcache(r.f) = v
      }
      {
      \sheap; \sscache; \ssdcache \vdash \sproc{c}{r_t, r.f} \toa{R}
      \sheap; \sscache; \ssdcache \vdash \sproc{c}{r_t, v}
      }
      \\ \\

      % Performing a write changes the state of our cache
      \inferencel[\rassign]
      {
      v \in \dom{\sheap} \\
      \neg \vol{v.f} \\
      \ssdcache' = \ssdcache[r.f \mapsto v] \\
      }
      {
      \sheap; \sscache; \ssdcache \vdash \sproc{c}{r_t, \sassign{r.f}{v}} \toa{W}
      \sheap; \sscache; \ssdcache' \vdash \sproc{c}{r_t, v}
      }
      \\ \\

      \inferencel[\rnew]
      {
      r - \mathit{fresh} \\
      % \sheap' = \sheap[r \mapsto C(\overrightarrow{f \mapsto 0})] \\
      \sheap(r) = C(\overrightarrow{f \mapsto 0}) \\
      \sclass \in J
      }
      {
      \sheap; \sscache; \ssdcache \vdash \sproc{c}{r_t, \snew{\vec{v}}} \to \\
      \sheap; \sscache; \ssdcache \vdash \sproc{c}{r_t, \slet{\_}{\tunit}{e[\vec{v}/\vec{f}][r/\sfmt{this}]}{r}}
      % \sheap'; \sscache; \ssdcache \vdash \sproc{c}{r_t, r}
      }
      \\
    \end{array}
  \end{displaymath}
  \caption{Semantics of Local Operations}
  \label{tbl:coresemantics}
\end{figure}

\cref{tbl:defs} presents a summary of the notations we use in the
operational semantics of \lang, along with their definitions.
We discuss these definitions in detail below, together with the
operational semantics.
To improve readability, we split the operational semantics in four
categories:
\emph{core} semantics regarding the core language;
\emph{synchronization} semantics regarding volatile accesses, monitor
handling, join, and interrupts;
semantics for \emph{implicit operations} performed by the JVM;
and \emph{global} semantics regarding parallel execution.

\subsubsection{Core Semantics}
\cref{tbl:coresemantics} presents the \emph{core} semantics of
\lang{}.
Following the notation of Johnsen \em et al.\em, local configurations are
of the form $\sheap; \sscache; \ssdcache \vdash e$.
Note that in the conclusions of some semantic rules we annotate the
$\to$ binary operator with an action kind from JDMM or $\alpha$,
\em e.g.\em,
we use $\toa{R}$ to show that $\rfield$ performs a read action $R$.
In the proof presented in~\cref{sec:proof}, we present all action kinds along with their
abbreviations used in the annotations, and use this information to argue
about the adherence of the operational semantics to JDMM.
Note that $c$ and $r_t$ in $\sproc{c}{r_t, e}$, although present in
every rule, are not involved in any of the rules
in~\cref{tbl:syncsemantics}.
We use them to argue about the global semantics, shown
in~\cref{tbl:globsemantics}.
This syntax allows us to argue about which core is executing a thread
and what is the corresponding object of this thread.

The $\rctxstep$ rule describes the evaluation of an expression in a
context.
The $\rcondt$ and $\rcondf$ rules handle
conditional expressions in the standard manner.
Rule $\rred$ handles substitution in the standard manner.
% Similarly
% to Johnsen et al., we use the let-construct to unify sequential
% composition and local variables.
%
%Rule $\rlet$ expresses associativity in the context of sequential
%composition.  Ignoring local variables, it corresponds to a step from
%$(e1;e2);e3$ to $e1;(e2;e3)$.
%
Rule $\rcall$ handles method calls.
We use $r.m(\vec{v})$ for invocations with arguments $\vec{v}$ of the
method with name $m$ of the object referenced by $r$.
To determine the body of the method we use
$m(\overrightarrow{x:\tau})\{\sfmt{return} \; e;\}$, where
$\overrightarrow{x:\tau}$ are the formal arguments of the method and $e$
is the method body.
We evaluate method calls by substituting the formal arguments with the
given ones and \texttt{this} with $r$ in the method body.
%In Java, \texttt{this} is used to refer to the object owning the method,
%and it is usually passed, in an implicit manner, as the first argument
%of each non-static method.
%

In our VM, all memory accesses first go through the write
buffer; if they miss they proceed to the object cache.  Thus, to access a
field we need it to be present either in the write buffer or the object
cache.
To reason about such accesses we define two structural rules, $\rfield$
and $\rfieldd$.
Rule $\rfield$ handles non-volatile field accesses, when the field is
cached in the object cache, and $\rfieldd$ handles non-volatile field
accesses, when the field is cached in the write buffer.

In $\rfield$, the first premise requires that the object containing
the field being accessed is in the heap (has been allocated and
initialized).
The second premise requires the access
to not refer to a volatile field.  To achieve this we use the function
$\vol{r.f}$ which returns true if the field $f$ is
\texttt{volatile} in the object referenced by $r$ and false otherwise.
This function models
the distinction, performed internally by the JVM, of volatile fields
from normal fields.
The third premise requires that the core performing the read has a local
copy of the field in its object cache, and the cached value is $v$.
The last premise requires that the field is not cached in the write
buffer.
Considering $\sheap$, $\sscache$, and $\ssdcache$ as maps $X$, we use
$X(k)$ to get the value of the cached object or field with key $k$.
We also use $\sscache(r.f) = v$ as a shorter notation of $\sscache(r) =
C(f_1' \mapsto v_1', \ldots, f \mapsto v, \ldots, f_n' \mapsto v_n')$ to show that $f$ maps to
$v$ in the object returned by $\sscache(r)$.
Additionally, we use $\dom{X}$ to get all the map keys, i.e., references
in the case of $\sheap$ and $\sscache$ or field names in the case of
$\ssdcache$.

Similarly, $\rfieldd$ handles field accesses of fields that are
cached in the write buffer.
The only difference from $\rfield$ is that we require $f$ to be cached
in the write buffer and get its value from there instead of the object
cache.

Rule $\rassign$ handles non-volatile field writes, which also go through
the write buffer.
As a result, writes change the contents of the write buffer instead of
the heap, as required by the last two premises.
Given a map $X$, $X' = X \setminus k$ is used to show that $X'$ contains
the same mappings as $X$ except a mapping for key $k$, thus $k \not \in
\dom{X'}$ and $X' \subseteq X$.
Note that we use $\subseteq$ instead of $\subset$, since $k$ might not
be in the map in the first place.

Rule $\rnew$ invokes the constructor of the corresponding $\sclass$ in a
similar manner to $\rcall$.
Rule $\rctxstep$ ensures that the constructor will be evaluated before
the reference $r$ will be assigned to any variable.
This ensures that final fields are initialized before \emph{publishing}
the new object.
Similarly to Johnsen \em et al.\em, we use $C\paren{\vec{v}}$ for instances of
class $C$ with field values $\vec{v}$, i.e., field $f_i$ contains the
value $v_i$.
Note that according to the JMM ``\textit{conceptually every object is
created at the start of the program}''~\cite[\S4.3]{Manson:2005:JMMconf}.
That said, in \lang{} we assume that the object is already present in
the memory, with its fields initialized to the default value, and that
$\rnew$ just invokes the constructor and returns a reference to the
object.
We use $r - \mathit{fresh}$ to show that there is no other reference to
that object already.

\begin{figure}[t]
%\resizebox{\textwidth}{!}{
  \begin{displaymath}
    \small
    \begin{array}{c}
      \multicolumn{1}{l}{
      \fbox{$
      \sheap; \sscache; \ssdcache \vdash \sproc{c}{r_t, e}
      \to
      \sheap; \sscache; \ssdcache \vdash \sproc{c}{r_t, e}
      $}
      }
      \\ \\

      \inferencel[\rfetch]
      {
        % r \in \dom{\sheap} \\
      \sheap(r) = C(\overrightarrow{f \mapsto v}) \\
      \sscache' = \sscache[r \mapsto \sheap(r)]
      }
      {
      \sheap; \sscache; \ssdcache \vdash \sproc{c}{r_t, e} \toa{F}
      \sheap; \sscache'; \ssdcache \vdash \sproc{c}{r_t, e}
      }

      \\ \\

      % to write something back it must be dirty in our cache
      \inference[\rwriteback]
      {
      r \in \dom{\sheap} \\
      r \in \dom{\sscache} \\
      \neg \vol{r.f} \\
      r.f \in \dom{\ssdcache} \\
      \sheap' = \sheap[r.f \mapsto \ssdcache(r.f)] \\
      \sscache' = \sscache[r.f \mapsto \ssdcache(r.f)] \\
      \ssdcache' = \ssdcache \setminus r.f \\
      }
      {
      \sheap; \sscache; \ssdcache \vdash \sproc{c}{r_t, e} \toa{B}
      \sheap'; \sscache'; \ssdcache' \vdash \sproc{c}{r_t, e}
      }
      \\ \\

      % we allow invalidation of dirty data as well
      \inferencel[\rinvalidate]
      {
      r \in \dom{\sscache} \\
      \sscache' = \sscache \setminus r
      }
      {
      \sheap; \sscache; \ssdcache \vdash \sproc{c}{r_t, e} \toa{I}
      \sheap; \sscache'; \ssdcache \vdash \sproc{c}{r_t, e}
      }
      \\ \\

      \inferencel[\rstart]
      {
      \sscache = \emptyset\\
      \ssdcache = \emptyset\\
      \sheap(r_t) = C(\overrightarrow{f \mapsto v}, \sfmt{spawned})\\
      \sheap'(r_t) = C(\overrightarrow{f \mapsto v}, \sfmt{started})
      }
      {
      \sheap; \sscache; \ssdcache \vdash \sproc{c}{r_t, \sfmt{start}} \toa{S}
      \sheap'; \sscache; \ssdcache \vdash \sproc{c}{r_t, r_t.run()}
      }

      \\ \\

      \inferencel[\rfinish]
      {
      \ssdcache = \emptyset\\
      \sheap(r_t) = C(\overrightarrow{f \mapsto v}, \sfmt{started})\\
      \sheap'(r_t) = C(\overrightarrow{f \mapsto v}, \sfmt{finished})
      }
      {
      \sheap; \sscache; \ssdcache \vdash \sproc{c}{r_t, \sunit} \toa{\mathit{Fi}}
      \sheap'; \sscache; \ssdcache \vdash \sproc{c}{r_t, \sunit}
      }

      \\
    \end{array}
  \end{displaymath}
%}
  \caption{Operational Semantics for Implicit Operations}
  \label{tbl:synthsemantics}
\end{figure}

\subsubsection{Semantics of Implicit Operations}
\cref{tbl:synthsemantics} presents the operational semantics for
implicit operations.
These are operations performed implicitly by the virtual machine and do
not map to language expressions.
Rules $\rfetch$, $\rwriteback$, and $\rinvalidate$ handle fetching,
write-back, and invalidation of a cached object, respectively.
Fetching an object requires that it exists in the heap (first and second
premise).
A fetch results in the addition of the object referenced by $r$ in the
object cache $\sscache$.
Writing back a field $r.f$ requires that the object referenced by $r$ is
present in the heap $\sheap$ and the object cache $\sscache$, $r.f$ is
not volatile, and there is a dirty copy of it in the write buffer
$\ssdcache$.
Writing-back a field results in the update of its value both in the heap
$\sheap$ and the object cache $\sscache$.
Invalidating an object's cached copy requires that it is cached.
Note that this does not force that object's fields to not be cached in
the write buffer.
An invalidation results in the removal of the object referenced by $r$
from the object cache, $\sscache$, of the core executing the
invalidation.
% Note that
% dirty copies can be invalidated or overwritten (by fetch) as well,
% e.g., if there is no synchronization action after a write.  We
% discuss such cases below.
%
Rule $\rstart$ enforces the evaluation of the thread start action before
any other action in the thread and ---treating thread start as an
acquire action--- requires the object cache and the write buffer to be
empty on the running core.

Rule $\rfinish$ handles the completion of a thread.
Note that a thread reaches completion when its thread body is equal to
the unit value $\sunit$.
As a release action requires the write buffer to be empty, and changes
the state of the thread to allow joiners to proceed.

% THAT'S NOT ENTIRELY TRUE
% \paragraph{Dropping dirty data:}
% As mentioned above, there are cases where dirty data can be safely
% dropped.  There might be writes that do not happen-before any read in
% the execution.  These writes can be safely overwritten by a fetch
% expression or dropped by an invalidate expression.  Such writes can
% only appear in non Data-Race-Free (DRF) programs, for which JDMM only
% provides the no \textit{out-of-thin-air} guarantee.  Therefore, as
% long as a variable has been initialized, its subsequent writes can be
% safely dropped for racy Java programs, while still adhering to JMM.
% Note, however, that for a JVM to support such an optimization it would
% require to perform dynamic race detection.  As a result, we chose not
% to allow dirty data to be lost even for racy programs in \lang{}.

%TODO (PP) : commented this out, do we need it? why?
%Note that the two $\rulename{R-Cond}$ rules, as well as, $\rnew$, $\rred$,
%$\rlet$, and $\rcall$ are verbatim copies of the originals,
%introduced by \cite{Johnsen2012257}.
% To be consistent with the rest of
% this work they only differ in that we use $\sheap; \scaches; \sscache$,
% $c$, $b$, and $\tau$ instead of $\sigma$, $p$, $t$ and $T$ respectively.
\begin{figure}[!bt]
  \begin{displaymath}
    \small
    \begin{array}{c}
      \multicolumn{1}{l}{
      \fbox{$
      \sheap; \sscache; \ssdcache \vdash \sproc{c}{r_t, e}
      \to
      \sheap; \sscache; \ssdcache \vdash \sproc{c}{r_t, e}
      $}
      }
      \\ \\

      % Volatile Read (directly from the memory)
      \inference[\rvolatilereadl]
      {
      r \in \dom{\sheap} \\
      \vol{r.f} \\\\
      \sheap (r.f.l) = 0 \\
      \sheap' = \sheap[r.f.l \mapsto r_t] \\
      }
      {
      \sheap; \sscache; \ssdcache \vdash \sproc{c}{r_t, r.f} \to
      \sheap'; \sscache; \ssdcache \vdash \sproc{c}{r_t, r.f}
      }
      \qquad

      % Volatile Read (directly from the memory)
      \inference[\rvolatileread]
      {
      r \in \dom{\sheap} \\
      \sheap (r.f.l) = r_t \\\\
      \sscache = \emptyset \\
      \ssdcache = \emptyset \\\\
      \sheap' = \sheap[r.f.l \mapsto 0] \\
      \sheap(r.f) = v \\
      }
      {
      \sheap; \sscache; \ssdcache \vdash \sproc{c}{r_t, r.f} \toa{Vr}
      \sheap'; \sscache; \ssdcache \vdash \sproc{c}{r_t, v}
      }
      \\ \\

      % Volatile write
      \inference[\rvolatilewritel]
      {
      r \in \dom{\sheap} \\
      \vol{r.f} \\\\
      \sheap (r.f.l) = 0 \\
      \sheap' = \sheap[r.f.l \mapsto r_t] \\
      }
      {
      \sheap; \sscache; \ssdcache \vdash \sproc{c}{r_t, \sassign{r.f}{v}} \to
      \sheap'; \sscache; \ssdcache \vdash \sproc{c}{r_t, \sassign{r.f}{v}}
      }
      \qquad

      % Volatile write
      \inference[\rvolatilewrite]
      {
      r \in \dom{\sheap} \\
      \sheap (r.f.l) = r_t \\\\
      \ssdcache = \emptyset \\
      \sheap' = \sheap[r.f \mapsto v][r.f.l \mapsto 0]
      }
      {
      \sheap; \sscache; \ssdcache \vdash \sproc{c}{r_t, \sassign{r.f}{v}} \toa{Vw}
      \sheap'; \sscache; \ssdcache \vdash \sproc{c}{r_t, v}
      }
      \\ \\

      \inference[\rmonitorentera]
      {
      r \in \dom{\sheap} \\
      \sscache = \emptyset \\
      \ssdcache = \emptyset \\
      \sheap(r) = (o, 0) \\
      \sheap' = \sheap[r \mapsto (o, r_t(1))]
      }
      {
      \sheap; \sscache; \ssdcache \vdash \sproc{c}{r_t, r.\sfmt{monitorenter}} \toa{L}
      \sheap'; \sscache; \ssdcache \vdash \sproc{c}{r_t, \sunit}
      }
      \\ \\

      \inference[\rmonitorenterb]
      {
      r \in \dom{\sheap} \\
      \sheap(r) = (o, r_t(n)) \\
      \sheap' = \sheap[r \mapsto (o, r_t(n+1))]
      }
      {
      \sheap; \sscache; \ssdcache \vdash \sproc{c}{r_t, r.\sfmt{monitorenter}} \toa{L}
      \sheap'; \sscache; \ssdcache \vdash \sproc{c}{r_t, \sunit}
      }
      \\ \\

      \inference[\rmonitorexita]
      {
      r \in \dom{\sheap} \\
      \ssdcache = \emptyset \\\\
      \sheap(r) = (o, r_t(1)) \\
      \sheap' = \sheap[r \mapsto (o, 0)]
      }
      {
      \sheap; \sscache; \ssdcache \vdash \sproc{c}{r_t, r.\sfmt{monitorexit}} \toa{U}
      \sheap'; \sscache; \ssdcache \vdash \sproc{c}{r_t, \sunit}
      }
      \qquad

      \inference[\rmonitorexitb]
      {
      r \in \dom{\sheap} \\
      \sheap(r) = (o, r_t(n+2)) \\\\
      \sheap' = \sheap[r \mapsto (o, r_t(n+1))]
      }
      {
      \sheap; \sscache; \ssdcache \vdash \sproc{c}{r_t, r.\sfmt{monitorexit}} \toa{U}
      \sheap'; \sscache; \ssdcache \vdash \sproc{c}{r_t, \sunit}
      }

      \\ \\

      \inferencel[\rjoin]
      {
      \sscache = \emptyset\\
      \ssdcache = \emptyset\\
      \sheap(r_t') = C(\overrightarrow{f \mapsto v}, \sfmt{finished})
      }
      {
      \sheap; \sscache; \ssdcache \vdash \sproc{c}{r_t, r_t'.\sfmt{join()}} \toa{J}
      \sheap; \sscache; \ssdcache \vdash \sproc{c}{r_t, ()}
      }

      \\ \\

      \inference[\rinterrupt]
      {
      \ssdcache = \emptyset\\
      \sheap(r_t') = C(\overrightarrow{f \mapsto v}, \sfmt{started})\\
      \sheap'(r_t') = C(\overrightarrow{f \mapsto v}, \sfmt{interrupted})
      }
      {
      \sheap; \sscache; \ssdcache \vdash \sproc{c}{r_t, r_t'.\sfmt{interrupt()}}
      \toa{\mathit{Ir}}
      \sheap'; \sscache; \ssdcache \vdash \sproc{c}{r_t, ()}
      }

      \\ \\

      \inference[\rinterruptedt]
      {
      \sscache = \emptyset\\
      \ssdcache = \emptyset\\
      \sheap(r_t') = C(\overrightarrow{f \mapsto v}, \sfmt{interrupted})
      }
      {
      \sheap; \sscache; \ssdcache \vdash \sproc{c}{r_t, r_t'.\sfmt{interrupted()}}
      \toa{\mathit{Ird}}
      \sheap; \sscache; \ssdcache \vdash \sproc{c}{r_t, ()}
      }

      \\ \\

      \inference[\rinterruptedf]
      {
      \mathit{state} \neq \sfmt{interrupted}\\
      \sheap(r_t') = C(\overrightarrow{f \mapsto v}, \mathit{state})
      }
      {
      \sheap; \sscache; \ssdcache \vdash \sproc{c}{r_t, r_t'.\sfmt{interrupted()}}
      \toa{}
      \sheap; \sscache; \ssdcache \vdash \sproc{c}{r_t, ()}
      }

      % \\ \\

      % \inference[\ralivet]
      % {
      % r \in \dom{\sheap} \\
      % \sscache = \emptyset \\
      % \ssdcache = \emptyset
      % }
      %   {
      %   \sheap; \sscache; \ssdcache \vdash \sproc{c}{r_t, r_t'.\sfmt{isAlive()}} \toa{J}
      %   \sheap; \sscache; \ssdcache \vdash \sproc{c}{r_t, \sunit}
      % }

      %   \\ \\

      %   \inference[\ralive]
      %   {
      %   r \in \dom{\sheap} \\
      %   \ssdcache = \emptyset
      % }
      %   {
      %   \sheap; \sscache; \ssdcache \vdash \sproc{c}{r_t, r_t'.\sfmt{isAlive()}} \toa{J}
      %   \sheap; \sscache; \ssdcache \vdash \sproc{c}{r_t, \sunit}
      % }

      % %   on acquire our cache must be empty (incalidated)
      %   \inference[\racquire]
      %   {
      %   r \in \dom{\sheap} \\
      %   \sscache = \emptyset \\
      %   \ssdcache = \emptyset
      % }
      %   {
      %   \sheap; \sscache; \ssdcache \vdash \sproc{c}{r_t, r.\sfmt{acquire}} \toa{Acq}
      %   \sheap; \sscache; \ssdcache \vdash \sproc{c}{r_t, \sunit}
      % }

      %   \\ \\

      % %   on release there should be no dirty data left.
      %   \inference[\rrelease]
      %   {
      %   r \in \dom{\sheap} \\
      %   \ssdcache = \emptyset
      % }
      %   {
      %   \sheap; \sscache; \ssdcache \vdash \sproc{c}{r_t, r.\sfmt{release}} \toa{Rel}
      %   \sheap; \sscache; \ssdcache \vdash \sproc{c}{r_t, \sunit}
      % }

      \\
    \end{array}
  \end{displaymath}
  \caption{Semantics of Synchornization Operations}
  \label{tbl:syncsemantics}
\end{figure}

\subsubsection{Semantics of Synchornization Operations}
\cref{tbl:syncsemantics} presents the \emph{synchronization}
operational semantics.
That is, rules about volatile accesses, monitor handling, join, and
interrupts.

Rules $\rvolatilereadl$ and $\rvolatileread$ handle reads of volatiles.
Rules $\rvolatilewritel$ and $\rvolatilewrite$ handle volatile writes.
The combination of $\rvolatilereadl$ and $\rvolatileread$
results in a single \emph{volatile-read}.
The same holds for $\rvolatilewritel$, $\rvolatilewrite$ and the
\emph{volatile-write} action.
Specifically, for each volatile field $r.f$ we assume a synthetic lock
$r.f.l$.
This lock is used to force a total ordering on the accesses to this
variable and guarantee atomicity to the corresponding hardware memory
accesses, as described in~\cref{sec:implementation:volatiles}.
When $r.f.l$ is $0$, it means the volatile variable $r.f$ is not being
accessed by another thread.
Assigning the thread $r_t$ to $r.f.l$ we essentially block other threads
from accessing this volatile variable.
Additionally, volatile accesses are exceptions to the rule that all
accesses go through the cache.
Since volatile reads are \textit{acquire} actions and volatile writes
are \textit{release} actions, before volatile writes, any dirty data in
the corresponding core's cache must be written-back and before volatile
reads, the corresponding core's cache must be invalidated.
We use $\emptyset$ for empty maps.

Rules $\rmonitorentera$ and $\rmonitorenterb$ handle monitor
acquisition; similarly, rules $\rmonitorexita$ and $\rmonitorexitb$ handle
monitor release.  These
rules use $r.l$ ---not to be confused with the synthetic lock
$r.f.l$ of volatile variables--- to represent the implicit monitor
associated with the object with identity $r$.  Our monitor handling is
similar to the lock handling introduced in~\cite{Johnsen2012257}.
The notation $H(r.l) = 0$ dictates that the corresponding monitor is
not acquired by any thread in the system.
$H(r.l) = r_t(n)$ dictates that the corresponding
monitor has been acquired $n$ times by the thread $r_t$.
Rule $\rmonitorentera$ requires that a monitor must be free before its
acquisition.  Rule $\rmonitorenterb$ requires that a monitor is
already owned by some thread before it gets re-entered by that same
thread.  Rules $\rmonitorexita$ and $\rmonitorexitb$ ensure that a
monitor is released only by its owner and the same number of times it
was previously acquired.

%\paragraph{Nested Monitor Acquisition:}
In the case of nested monitor acquisition we can avoid invalidating
the object caches and writing-back data at nesting monitor release.  By
definition, nested acquisition of monitors requires that the monitor
is owned by the same thread at any nesting level.  Under that
assumption, any concurrent actions that operate on the cached data
used in the critical section would be the result of a data-race,
meaning that the program is not DRF\@.  In that case, it is not necessary
for any of the corresponding dirty data to become visible, to the threads
performing the racy accesses, at nested monitor releases.
Note that racy accesses are not guaranteed to see the latest write if
the thread executing them did not \emph{synchronize-with} an action that
\emph{happens-after} that write.
Similarly, since the monitor is already owned by the current thread,
there is no need to invalidate its core's cache in order to get the
latest values, since those values are the results of some data-race.
As a result, rules $\rmonitorenterb$ and $\rmonitorexitb$ do
not need any special premises regarding object caches and write buffers.

Rule $\rjoin$ handles invocations to the \texttt{join()} method of a
thread.
Its first two premises require that the object cache and the write
buffer are empty, since join is an acquire action.
The third premise requires the state of the thread object to be
$\sfmt{finished}$, modeling the way a join blocks on the state of a
thread in the JVM implementation.

Rule $\rinterrupt$ handles invocations to the \texttt{interrupt()}
method of a thread.
Its first premise requires that the write buffer is empty, since
interrupt is a release action.
The second and third premises require the state of the thread object to
be $\sfmt{started}$ before the interrupt and $\sfmt{started}$ after it,
modeling the way interrupts are implemented by changing the thread's
state in the JVM implementation or setting a hardware register in the
case of using hardware interrupts.

Rules $\rinterruptedt$ and $\rinterruptedf$ handle invocations to the
\texttt{interrupted()} method of a thread.
Rule $\rinterruptedt$ handles cases where the thread is interrupted.
Its first two premises require that the object cache and write buffer
are empty, since interrupt detection is an acquire action.
The third premise requires the state of the thread object to be
$\sfmt{interrupted}$.

Rule $\rinterruptedf$ handles cases where the thread is not interrupted.
Its premises require the state of the thread object to not be
$\sfmt{interrupted}$, in such cases the invocation is not a
synchronization action so there is no need for flushing the object cache
or the write buffer.

% Rules $\racquire$ and $\rrelease$ handle the remaining \emph{acquire}
% and \emph{release} actions, respectively.  On \emph{acquire} actions
% the cache must be empty to ensure that future reads will get updated
% values and not outdated cached copies.  On \emph{release} actions all
% dirty cached copies should have been written back, to make sure they
% will be available for other cores to fetch in the future.
%

\begin{figure}[t]
  \begin{displaymath}
    \small
    \begin{array}{c}
      \multicolumn{1}{l}{
      \fbox{$
      \sheap; \scaches; \ssdcaches \vdash T
      \xrightarrow[\vec{c}]{\vec{\alpha}}
      \sheap; \scaches; \ssdcaches \vdash T
      $}
      }
      \\ \\

      \inferencel[\rlift]
      {
      \scache{c} = \scaches(c) \\
      \sdcache{c} = \ssdcaches(c) \\\\
      \scache{c}' = \scaches'(c) \\
      \sdcache{c}' = \ssdcaches'(c) \\
      \sheap; \scache{c}; \sdcache{c} \vdash \sproc{c}{r_t, e}
      \toa{\alpha}
      \sheap'; \scache{c}'; \sdcache{c}' \vdash \sproc{c}{r_t, e'}
      \\
      \scaches' = \scaches[c \mapsto \scache{c}']\\
      \ssdcaches' = \ssdcaches[c \mapsto \sdcache{c}']\\
      }
      {
      \sheap; \scaches; \ssdcaches \vdash \sproc{c}{r_t, e}
      \xrightarrow[\aset{c}]{\aset{\alpha}}
      \sheap'; \scaches'; \ssdcaches' \vdash \sproc{c}{r_t, e'}
      }
      \\ \\

      \inferencel[\rspawn]
      {
      \sheap(r_{t'}) = C(\overrightarrow{f \mapsto v}) \\
      \sheap'(r_{t'}) = C(\overrightarrow{f \mapsto v}, \sfmt{spawned}) \\
      \sfmt{run}()\{\sfmt{return} \; e;\} \in C \\
      \ssdcaches(c) = \emptyset \\
      c' \in \stids
      }
      {
      \sheap; \scaches; \ssdcaches \vdash \sproc{c}{r_{t}, r_{t'}.\sfmt{start}()}
      \xrightarrow[\aset{c}]{\aset{Sp}}\\
      \sheap'; \scaches; \ssdcaches \vdash \sproc{c}{r_{t}, \sunit} \parallel \sproc{c'}{r_{t'}, \sfmt{start}}
      }
      \\ \\

      \inferencel[\rmigrate]
      {
      c' \in \stids \\
      c \not = c' \\\\
      \ssdcache(c) = \emptyset \\
      \ssdcache(c') = \emptyset \\
      \sscache(c') = \emptyset
      }
      {
      \sheap; \scaches; \ssdcaches \vdash \sproc{c}{r_t, e}
      \xrightarrow[\aset{c}]{\aset{M}}
      \sheap; \scaches; \ssdcaches \vdash \sproc{c'}{r_t, e}
      }
      \\ \\

      \inferencel[\rpar]
      {
        % \sheap; \scaches; \ssdcaches \vdash T_1
        % \xrightarrow[\vec{c}]{\vec{\alpha}}
        % \sheap'; \scaches'; \ssdcaches' \vdash T_1'
      }
      {
      \sheap; \scaches; \ssdcaches \vdash T_1
      \xrightarrow[\emptyset]{\emptyset}
      \sheap; \scaches; \ssdcaches \vdash T_1
      }
      \\ \\

      % \inferencel[\rpara]
      % {
      % \sheap; \scaches \vdash T_1 \to
      % \sheap'; \scaches' \vdash T_1'
      % }
      %   { \sheap; \scaches \vdash T_1 \parallel T_2 \to
      %   \sheap'; \scaches' \vdash T_1' \parallel T_2 }

      %   \hfill

      %   \inferencel[\rparb]
      %   {
      %   \sheap; \scaches \vdash T_2 \to
      %   \sheap'; \scaches' \vdash T_2'
      % }
      %   { \sheap; \scaches \vdash T_1 \parallel T_2 \to
      %   \sheap'; \scaches' \vdash T_1 \parallel T_2' }

      \inferencel[\rparg]
      {
        % T_1 \cap T_2 \not = \emptyset \\
      \vec{c_1} \cap \vec{c_2} = \emptyset\\
      \scaches_1 = \scaches \downarrow \vec{c_1} \\
      \scaches_2 = \scaches \downarrow \vec{c_2} \\
      \scaches_3 = \scaches \setminus (\scaches_1 \cup \scaches_2)\\\\
      \ssdcaches_1 = \ssdcaches \downarrow \vec{c_1} \\
      \ssdcaches_2 = \ssdcaches \downarrow \vec{c_2} \\
      \ssdcaches_3 = \ssdcaches \setminus (\ssdcaches_1 \cup \ssdcaches_2)
      \\\\
      \sheap; \scaches_1; \ssdcaches \vdash T_1
      \xrightarrow[\vec{c_1}]{\vec{\alpha_1}}
      \sheap'; \scaches_1'; \ssdcaches_1' \vdash T_1'
      \\
      \sheap; \scaches_2; \ssdcaches \vdash T_2
      \xrightarrow[\vec{c_2}]{\vec{\alpha_2}}
      \sheap; \scaches_2'; \ssdcaches_2' \vdash T_2'
      \\
      \scaches' = \scaches_1' \cup \scaches_2' \cup \scaches_3
      \\
      \ssdcaches' = \ssdcaches_1' \cup \ssdcaches_2' \cup \ssdcaches_3
      }
      {
      \sheap; \scaches; \ssdcaches \vdash T_1 \parallel T_2
      \xrightarrow[\vec{c_1} \cup \vec{c_2}]{\vec{\alpha_1} \cup \vec{\alpha_2}}
      \sheap'; \scaches'; \ssdcaches' \vdash T_1' \parallel T_2'
      }

      \\
    \end{array}
  \end{displaymath}
  \caption{Global Operational Semantics}
  \label{tbl:globsemantics}
\end{figure}

\subsubsection{Semantics of Global Operations}
In~\cref{tbl:globsemantics} we present the global operational semantics
of \lang{}.
Similarly to the local configurations, the global configurations are of
the form $\sheap; \scaches; \ssdcaches \vdash e$, where $\scaches$ and
$\ssdcaches$ are all the system's object caches and write buffers
respectively, while $\scaches(c)$ and $\ssdcaches(c)$ are the object
cache and write buffer of core $c$, respectively.
Note that the heap is the same in global and local configurations
since it is shared among all cores.

Rule $\rlift$ lifts local reduction steps to the global level.
We use $\scaches[c \mapsto \scache{c}']$ and
$\ssdcaches[c \mapsto \sdcache{c}']$ to show that the state of
$\scaches(c)$ and $\ssdcaches(c)$ in the system is replaced by
$\scache{c}'$ and $\sdcache{c}'$, respectively.

Rule $\rspawn$ handles thread spawns (i.e., \texttt{Thread.start()}
calls).
For every spawn ---which is also a release action--- we require that all
dirty data are written-back.
Then the JVM picks one of the available cores, marked as $c'$ and
schedules thread $v'$ to it.
We represent this by introducing $\sproc{c'}{r_t', \sfmt{start}}$ in
parallel to the previously running
$\sproc{c}{r_t, r_t'.\sfmt{start()}}$.
Note that $\rspawn$ changes the state of the thread to $\sfmt{started}$
to mark that this thread has started and forbid any re-spawns.

Rule $\rmigrate$ handles the Java thread migration to another core by
the scheduler.
It picks one of the available cores, marked as $c'$ and replaces $c$
with it, representing that thread $r$ will continue its execution on
core $c$ instead of $c'$.

Rule $\rpar$ is essentially a no-op that allows threads to block and not
step in every transition in an execution trace, as \em e.g.\em, a
finished but not joined thread.

In \lang, two (or more) Java threads can step concurrently through the
$\rparg$ rule.
Each thread may change its core's object cache and write buffer state
and thus affect $\scaches$ and $\ssdcaches$.
Since the object caches and write buffers are disjoint for each core,
the resulting global state of object caches and write buffers after a
concurrent step is the union of the changed object buffers and write
buffers by each set of cores that step in the parallel transition and
those that where left unchanged by both.
To get the object caches and write buffers that a set of cores $\vec{c}$
changes we use $\scaches \downarrow \vec{c}$ (projection).
Note that the first premise of $\rparg$ required the two sets of cores
that perform a step in the parallel transition to be disjoint.
This is to model that each core is running a single thread and performs
a single step each time.
Additionally, inspecting its eighth and ninth premise it only allows a
single set of threads to modify the heap.
This limitation partially models the hardware memory bus and how it
orders memory transfers.
We allow only one write per step to the heap, this way we allow
parallelism but not concurrent writes to the heap.
To improve this, one can slice the heap, then different synchronization
managers may handle different slices of the heap and increase
parallelism.

% Note also, that in many architectures, higher level caches are shared.
% In such cases \lang should resolve conflicts, in the global cache state,
% in the same manner it does for the heap.

\subsection{Proof Sketch}
\label{sec:shortproof}

This section briefly describes the proof of \lang's adherence to the JDMM.
For a detailed proof of adherence~\cref{sec:proof}.  Intuitively, the correctness property can
be expressed as:
\begin{theorem}\label{t:adherence2}
  \lang{}'s operational semantics generates only well-formed execution
  traces.
\end{theorem}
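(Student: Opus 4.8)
The plan is to prove the theorem by fixing, once and for all, the meaning of an \emph{execution trace} generated by the semantics, and then verifying that each of the well-formedness conditions \wf{1}--\wf{20} and \wfe{1}--\wfe{2} holds of every such trace. Starting from the initial configuration \init, a finite sequence of global reduction steps (\cref{tbl:globsemantics}) emits a sequence of actions obtained by reading off the kind annotating each $\toa{\alpha}$; purely administrative steps such as \rctxstep, \rcall, and \rred emit nothing. From this sequence I would reconstruct the JDMM relations: \po from the per-thread ordering of steps, \so from the global order in which synchronization actions fire, \sw from the matching release/acquire pairs (an outermost \rmonitorexita with the next \rmonitorentera on the same monitor, a \rvolatilewrite with a subsequent \rvolatileread, \rspawn with the \rstart it enables, \rfinish with \rjoin, \rinterrupt with \rinterruptedt), and \hb as the transitive closure of $\po \cup \sw$. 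The whole argument is then an induction on the length of the reduction sequence with a case analysis on the rule applied at the final step.

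The workhorse of this induction is a small family of configuration invariants maintained at every step. The key ones are: (i) $r.f \in \dom{\ssdcache}$ holds exactly when a write to $r.f$ has occurred on that core with no intervening \rwriteback, which discharges \wf{13} and \wf{14}; (ii) $r \in \dom{\sscache}$ holds exactly when an \rfetch of $r$ has occurred with no intervening \rinvalidate, which together with the premises of \rfield and \rfieldd discharges \wf{10}, \wf{11}, \wf{15}, and the local part of \wf{1}; (iii) the lock component $l$ of each heap entry and each synthetic volatile lock $r.f.l$ faithfully records the $L/U$ (resp.\ $\mathit{Vr}/\mathit{Vw}$) history, so the mutual-exclusion and total-ordering premises of \rmonitorentera, \rmonitorenterb, \rvolatileread, and \rvolatilewrite yield \wf{2}--\wf{7}; and (iv) every non-volatile field value stored in $\sheap$ is the result of a completed \rwriteback or of the initial write-back of \wf{19}, giving \wf{12}. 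The remaining local and structural conditions reduce to premise-chasing: \wf{6} and \wf{9} hold because the reduction evaluates each thread left-to-right with \rstart necessarily first; \wf{17} holds because \rvolatilewrite requires $\ssdcache = \emptyset$ and writes straight to the heap; and \wf{18} (with \wfe{1}) holds because the acquire-flavoured rules demand an empty object cache, forcing a re-fetch.

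The genuine difficulty is concentrated in the cross-thread conditions \wf{8}, \wf{16}, and \wf{20}, which relate writes, write-backs, fetches, and reads that are ordered only through \hb. For these I would prove a central lemma: if a write $w$ to $r.f$ satisfies $w \hb d$ for a read $d$ of $r.f$ executing on a different core, then the write-back $b$ of $w$ precedes the fetch that $d$ sees in the \so-induced order of heap updates, and no other write-back of $r.f$ intervenes. I would prove this lemma by an inner induction on the number of \sw edges along the \hb path from $w$ to $d$. The base case exploits that every flushing release rule (\rmonitorexita, \rvolatilewrite, \rspawn, \rfinish, \rinterrupt) can fire only when $\ssdcache = \emptyset$, so all pending writes are written back \emph{before} the release, while every acquire rule forces $\sscache = \emptyset$, so the reader must re-fetch \emph{after} the matching acquire; the inductive step chains these facts through intermediate synchronization actions, relying on the fact that the heap is updated atomically (only one heap-modifying thread steps per \rparg transition). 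This lemma delivers \wf{16} and \wf{8} directly, and, by additionally tracking the heap order of successive write-backs, \wf{20}.

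I expect this global happens-before argument --- reconciling the purely \emph{local} cache-flushing discipline enforced by the rule premises with the transitive, multi-thread structure of \hb --- to be the main obstacle, since the reader and writer need not be adjacent in \so and the invariant must survive arbitrarily long chains of release/acquire hand-offs. Once the lemma is in place, \wfe{2} follows as a corollary from the $\ssdcache(c) = \emptyset$ premise of \rmigrate, which flushes the old core before migration, and the remaining finiteness condition \wf{3} follows because every trace is finite by construction. Assembling the per-condition arguments then yields that every trace the semantics produces is well-formed, which is exactly the statement of the theorem.
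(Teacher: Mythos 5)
Your proposal is correct in outline and shares the paper's overall skeleton: an induction on the length of the reduction sequence with a case analysis on the last rule applied, supported by configuration invariants that tie the contents of $\sheap$, $\sscache$, and $\ssdcache$ to the action history (your invariants (i)--(iv) are essentially the paper's auxiliary rules \wfh{1}--\wfh{5}, and your per-core ownership assumptions correspond to \wfh{7}--\wfh{9}). Where you genuinely diverge is in the decomposition and in the treatment of the cross-thread conditions. The paper factors the induction into three pieces: a lemma that the \emph{local} semantics alone generates well-formed traces (where \wf{16}, \wfe{1}--\wfe{2} are vacuous because only one core acts), a lifting lemma stating that \rlift{} preserves well-formedness, and a global induction whose only hard case is \rparg{}; there, each of \wf{1}--\wf{20} is discharged by premise-chasing plus the observation that the eighth and ninth premises of \rparg{} let only one thread modify the heap per parallel step, so synchronizes-with pairs cannot fire concurrently. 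You instead concentrate the difficulty into a single central lemma --- if $w \hb d$ across cores then the write-back of $w$ precedes the fetch seen by $d$ with no intervening write-back --- proved by an inner induction on the number of $\sw$ edges along the $\hb$ path. That lemma is arguably a cleaner and more rigorous route to \wf{8}, \wf{16}, and \wf{20} than the paper's per-rule argument, which at those points is rather terse (it appeals to nonexistent rules such as ``\wfh{16}'' and leans on \cref{l:lift} without explicitly chaining release/acquire hand-offs). What the paper's route buys is modularity --- the local/global split means most rules are handled once, locally --- while your route buys a self-contained visibility argument that survives arbitrarily long release/acquire chains. One caveat: \wf{16} is stated in terms of the write-seen function $W(r)$, not of $\hb$-related pairs, so for racy reads your central lemma does not directly apply; you would need to close that gap with your invariants (ii) and (iv), which identify the provenance of every cached value regardless of whether the producing write is $\hb$-ordered with the read.
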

To prove \autoref{t:adherence2}, we show by induction that \lang's
operational semantics satisfies every well-formedness rule.
That is, given any well formed execution trace:
%
%\begin{gather*}
$$
  \sheap; \scaches; \ssdcaches \vdash T_1 \parallel T_2
  \to^* \sheap'; \scaches'; \ssdcaches' \vdash T_1' \parallel T_2'
$$
%\end{gather*}
%
we show that the trace after taking one more step:
%
%\begin{multline*}
$$
  \sheap; \scaches; \ssdcaches \vdash T_1 \parallel T_2 \to^*
  \sheap'; \scaches'; \ssdcaches' \vdash T_1' \parallel T_2' \to
  \sheap''; \scaches''; \ssdcaches'' \vdash T_1'' \parallel T_2''
$$
%\end{multline*}
%
is well-formed as well.

This amounts to essentially a preservation proof for each rule, many
of which are straightforward.
%Note that we use $\to^*$ to denote an arbitrary number of reduction
%steps.
It is trivial to show that structural rules with conclusions that do
not affect the memory state and do not regard synchronization actions
preserve the well-formedness of the execution.
For the rest, we argue about their effects on the execution state.
Since \lang's operational semantics is tailored after JDMM's
well-formedness rules, for most inference rules, inspecting their
premises and conclusions is enough to show that a well-formedness rule
is preserved.

As \lang models DiSquawk executions, we claim that DiSquawk executions
adhere to the JDMM, and consequently to the JMM.

\section{Related Work}
\label{sec:related}

To the best of our knowledge, the only other JVM implementing the Java
memory model on a non cache coherent architecture is
Hera-JVM~\cite{McIlroy2010}.
Hera-JVM also employs caches which it handles in a similar manner to our
implementation, with the difference that it starts a write-back at every
write, as we discuss in~\cref{sec:implementation}.
Regarding the synchronization mechanisms, Hera-JVM relies on the Cell
B.E.'s \texttt{GETLLAR} and \texttt{PUTLLC} instructions to build an
atomic compare-and-swap operation.
However, such instructions are not available on the architectures at
hand~\cite{howard2010,lyberis2012formic}.
Additionally, Hera-JVM did not aim to formally prove its adherence to
the JMM.

Contrary to the implementation, language operational semantics are
often used to formalize memory models.
Previous work describes the memory semantics for shared memory multicore
processor architectures, such as Power~\cite{mador-haim2012power},
x86~\cite{owens2009better,sarkar2009x86}, and ARM~\cite{alglave2009arm}
processors, without focusing on a specific language semantics or memory
model.
Sarkar \em et al.\em~\cite{sarkar2012cpp} first combined the semantics of an architecture
with the memory model definition of the C++ language, focusing on its
execution on shared-memory Power processors.
Pratikakis \em et al.\em~\cite{pratikakis2011semantics} similarly present operational semantics
for a specialized task-parallel programming model designed to target
distributed-memory architectures.
Our work differs from the aforementioned in that it is targeting
distributed or non cache coherent memory architectures.

Boudol and Petri~\cite{Boudol:2009:RMM:1480881.1480930} define a relaxed memory model
using an operational semantics for the Core ML language.
Their work takes into account write buffers that must become empty before
a lock release.
Although the handling of write buffers is similar to handling caches
regarding the write backs, the fetching and invalidation handling part
is not covered in that work.
Additionally, the authors only consider lock releases as synchronization
points, while in the Java language there are multiple synchronization
points according to JMM.
Joshi and Prasad~\cite{salil2010} extend the above work and
define an operational semantics that accounts for caches, namely
update and invalidation cache operations not previously supported.
The authors use a simple imperative language, claiming it
has greater applicability.
Unfortunately, this approach further abstracts away details regarding
the correct implementation of a specific programming language's memory
model.
In our work we focus on the Java language and provide all the needed
details for the implementation of its memory model.
Furthermore, both of the above papers define operational semantics for
generic relaxed memory models.  We believe that defining the
operational semantics for a specific memory model, in this case the
JMM, is a different task that focuses on the issues specific to the
Java language.

Demange \em et al.\em~\cite{Demange:2013:PBB:2429069.2429110} present the operational
semantics of BMM, a redefinition of JMM for the TSO memory model.
BMM is similar to this work in that it aims to bring the Java Memory
Model definition closer to the hardware details.
BMM, however, focuses on buffers instead of caches and assumes the TSO
memory model, which is stricter than the memory model of the
non cache coherent architectures at hand.

Jagadeesan \em et al.\em~\cite{jagadeesan2010generative} also describe an operational semantics
for the Java Memory Model.
Their work, however, does not account for caches or buffers.
It abstracts away the hardware details and considers reads and writes to
become actions that float into the evaluation context.
This approach does not explicitly define when and where writes should be
eventually committed to satisfy the JMM\@.
In our approach, we explicitly define where data get stored
after any evaluation step.

We thus consider our approach to be closer to the implementation.
Cenciarelli \em et al.\em~\cite{cenciarelli2007} use a combination of operational, denotational,
and axiomatic semantics to define the JMM\@.
In that work, the authors show that all the generated executions adhere
to the JMM, but as in~\cite{jagadeesan2010generative} they do not account
for the memory hierarchy.

% \lang's definition is based on the concurrent object-oriented calculus
% definition introduced in~\cite{Johnsen2012257}.
% The way \lang handles monitors is inspired by the way the authors
% of~\cite{Johnsen2012257} handle re-entrant locks in their calculus.
% Johnsen et al.~focus on the prevention of lock errors through a static
% type and effect system.
% \lang aims to prevent false cache management in JVM implementations
% targeting architectures with non-cache-coherent memories.

\section{Conclusions}
\label{sec:conclusions}

This paper presents DiSquawk, a Java VM implementation of the Java
Memory Model that targets a 512-core non cache coherent architecture,
and a proof sketch that it adheres to JMM.
We discuss design decisions and present evaluation results from the
execution of a set of benchmarks from the Java Grande
suite~\cite{grande}.
To prove the correctness of our implementation, we model all key
points of the design using a core calculus \lang and its operational semantics.
\lang is a concurrent java calculus aware of software caches and their
mechanisms.
% To the best of our knowledge \lang is the first attempt to couple Java
% operational semantics with explicit cache management operations.
%The JMM implementation presented in this work is deployed in DiSquawk,
%a JVM targeting the Formic-cube 512-core prototype.
% (\cref{fig:formic}).
DiSquawk has been developed as part of the GreenVM
project~\cite{greenvm} and is available for download at
\url{https://github.com/CARV-ICS-FORTH/disquawk}.

% \begin{figure}[h]
%   \centering
%   \includegraphics[width=.7\linewidth]{formic}
%   \caption{The Formic-cube 512-core prototype}
%   \label{fig:formic}
% \end{figure}

% \section{Acknowledgments}

% We would like to express our deepest gratitude to Christi Symeonidou for
% her valuable review on this work.

% This work was supported by the \textit{GreenVM} project on
% Energy-Efficient Runtimes for Scalable Multicore Architectures
% (project \#1643), which is being implemented under the
% \textit{ARISTEIA} Action of the ``Operational Programme on Education
% and Lifelong Learning'', co-funded by the European Social Fund (ESF)
% and Greek National Resources.
% % \todo{add acks}

\bibliographystyle{abbrv}
\bibliography{paper}  % paper.bib is the name of the Bibliography in this case
%\printbibliography

%\balancecolumns

%PP: removed for space
\clearpage
\pagebreak
\appendix
% % \section*{APPENDIX}
% % \setcounter{section}{0}

\section{JDMM Formal Definitions}
\label{app:jdmm}

This appendix presents the JDMM's formal definitions and their
corresponding formalism in \lang{}, where appropriate.
% The majority of the text regarding the JDMM's formal definitions in this
% appendix is copied verbatim from~\cite{zakkak:jdmm}.

\paragraph{Distributed Execution:}
A distributed execution $E_D$ is a tuple:

\noindent{}\hfil$
E_D = \tuple{P, A_D, \poD, \soD, W, V, \mathit{Cs}, \mathit{Bf},
  \mathit{Ab}, \mathit{Ai}, \swD, \hbD}
$

\noindent{}where:

\begin{itemize}

\item The program $P$ is a set of instructions, in \lang{} this is the
  program $J$.

\item $A_D$ is a set of \emph{actions}.

  \paragraph{Actions:} The JMM abstracts thread operations as
  actions~\cite[\S5.1]{MansonThesis}.  An action is a tuple $\tuple{r_t,
    k, r.f, u}$, where $t$ is the thread performing the action; $k$ is the
  kind of action; $v$ is the (runtime) variable, monitor, or thread,
  involved in the action; and $u$ is a unique, among the actions, identifier.
  % Note that for any action $x$, $x.u = x.v.u$.

  % In JDMM a variable can be: a static variable of a loaded class, a
  % field of an allocated object, or an element of an allocated array.
  % In general, a variable is a memory location in the Java heap.
  % Variables can be references to objects or primitive types.

  JDMM uses the following abbreviations to describe all possible kinds
  of actions:
  \begin{itemize}
  \item $R$ for read, $W$ for write, and $\mathit{In}$ for initialization of a
    heap-based variable
  \item $\mathit{Vr}$ for read and $\mathit{Vw}$ for write of a volatile variable
  \item $L$ for the lock and $U$ for the unlock of a monitor
  \item $S$ for the start and $\mathit{Fi}$ for the end of a thread
  \item $\mathit{Ir}$ for the interruption of a thread and
    $\mathit{Ird}$ for detecting such an interruption by another thread
  \item $\mathit{Sp}$ for spawning (\verb!Thread.start()!) and $J$ for joining a
    thread or detecting that it terminated
  \item $E$ for external actions, i.e., I/O operations
  \item $F$ for fetch from heap-based variables,
  \item $B$ for write-backs of heap-based variables,
  \item $I$ for invalidations of cached variables.
    % \item $\mathit{TS}$ for the observation that a thread has been started
    % \item $\mathit{Wait}$ for waiting on a monitor
    % \item $N$ and $\mathit{NA}$ for notification
    % \item $\mathit{Clr}$ for clearing an interrupt
    % \item $\mathit{NIrd}$ for the observation that a thread is not
    %   interrupted
    % \item $\mathit{HL}$ and $\mathit{NL}$ for testing whether a thread holds
    %   a lock or not, respectively
  \end{itemize}

  In \lang{} we use $\Sigma$
  $\xrightarrow[\vec{c}]{(c' \mapsto \tuple{r_t,k,r.f,u}) \in \vec{\alpha}}$
  $\Sigma'$ to denote a transition from state $\Sigma$ to state
  $\Sigma'$, where $\vec{c}$ is the set of cores involved in the
  transition and $c'$ is the core performing the JDMM action
  $\tuple{r_t,k,r.f,u}$ in this transition.

  To get the set of actions $A_D$, from a program's \lang{} execution
  trace:

  \hfil$ \Sigma_0$ $ \xrightarrow[\vec{c_1}]{\vec{\alpha_1}} $
  $\Sigma_1 $ $\xrightarrow[\vec{c_2}]{\vec{\alpha_2}} $
  $\Sigma_2 \ldots \Sigma_n $ $\xrightarrow[\vec{c_n}]{\vec{\alpha_n}} $
  $\Sigma_{n+1} $

  we take the union of the ranges $\rng{\vec{\alpha}}$, where
  $\vec{\alpha}$ is a set of mappings from cores to JDMM actions, i.e.:

  \hfil$\overrightarrow{(c \mapsto \tuple{r_t,k,r.f,u})}$

  Formally:
  \hfil$
  A_D = \rng{\vec{\alpha_1}} \cup
  $
  $
  \rng{\vec{\alpha_2}} \cup
  $
  $
  \ldots \cup \rng{\vec{\alpha_n}}
  $

\item The program order $\poD$ is a relation on $A_D$ defining the order
  of actions regarding a single thread $t$ in $A_D$.
  JDMM uses $x \po y$ to show that $x$ comes before $y$ according to the
  program order within a thread.
  Every pair of actions executed by a single thread $t$ are ordered by
  the program order:

  \hfil$
  \big((x \neq y) \wedge
  (x.t = y.t)\big) \Leftrightarrow \big((x \poD y) \vee (y \poD x)\big)
  $\\

\item The synchronization order $\soD$ is a relation on $A_D$ defining a
  global ordering among all \emph{synchronization actions} in $A_D$

  \paragraph{Synchronization Actions:}
  Any actions with kind $\mathit{In}$, $\mathit{Ir}$, $\mathit{Ird}$,
  $\mathit{Vr}$, $\mathit{Vw}$, $L$, $U$, $S$, $\mathit{Fi}$,
  $\mathit{Sp}$, or $J$ are \emph{synchronization actions}, which form
  the only communication mechanism between threads.
  JDMM uses $x \in \sa(A_D)$ to show that $x$ is a synchronization
  action in $A_D$:

  $\sa(A_D) =$
  $\aset{x \in A: x.k \in \aset{\mathit{In}, \mathit{Ir}, $
      $\mathit{Ird}, \mathit{Vr}, \mathit{Vw}, L, U, S, \mathit{Fi}, $
      $\mathit{Sp}, J}, F, B} $

  % Note that the JMM considers only synchronization orders consistent
  % with the program order to preserve intra-thread semantics.
  % \todo{could we live with out this?}
  JDMM uses $x \soD y$ to show that $x$ comes before $y$ according to
  the synchronization order.
  Every pair of synchronization actions are ordered by synchronization
  order:

  \hfil$ x.k, y.k \in \sa(A_D) \Leftrightarrow \big((x \soD y) \vee (y \soD
  x)\big) $\\

  In \lang{} we group syncrhonization actions of the kinds
  $\mathit{Ird}$, and $J$ in the acquire actions family, denoted by
  $\mathit{Acq}$.
  We also group syncrhonization actions of the kinds $\mathit{Fi}$,
  $\mathit{Ir}$, and $\mathit{In}$ in the release actions family,
  denoted by $\mathit{Rel}$.

  As a result, in \lang{}:

  $\sa(A_D) =$
  $\aset{x \in A: x.k \in \aset{\mathit{Vr}, \mathit{Vw}, L, U, S, $
      $\mathit{Sp}, F, B, \mathit{Acq}, \mathit{Rel}}} $

\item The \emph{write-seen} function $W$ for every read action $r$
  returns the write action seen by $r$, in $A_d$.
  As a result, $W(r).v = r.v$.%\\

  % In \lang{}, $W(\tuple{\_, R\text{ or }\mathit{Vr}, r.f, u'})$ returns
  % the write or initialization action $\tuple{\_, W\text{ or
  %   }\mathit{Vw}\text{ or }\mathit{In}, r.f, u}$ writing the value that
  % $\tuple{\_, R\text{ or }\mathit{Vr}, r.f, u'}$ sees, according to the
  % execution trace.
  % Note that initialization actions $\mathit{In}$ are also considered
  % writes by JDMM\@.

  % For instance, in any trace:

  % $
  % \ldots
  % \sheap; \scaches; \ssdcaches \vdash T
  % $
  % $
  % \xrightarrow[\_]{\vec{\alpha}: \tuple{\_,W\text{ or }\mathit{In},r.f,u} \in \rng{\vec{\alpha}}}
  % $

  % \hfil$
  % \sheap'; \scaches'; \ssdcaches' \vdash T'
  % $
  % $
  % \xrightarrow[\_]{\vec{\alpha'}: \tuple{\_,R,r.f,u'} \in \rng{\vec{\alpha'}}}
  % $

  % \hfill$
  % \sheap''; \scaches''; \ssdcaches'' \vdash T''
  % \ldots
  % $

  % $W(\tuple{r_t,R,r.f,u'}) = \tuple{r_t,W\text{ or }\mathit{In},r.f,u}$

  % \todo{we could generalize, but the value is not enough to map to an
  %   action, multiple actions may right or fetch the same value}

\item The \emph{value-written} function $V$ returns the value written by
  every write action $w$, in $A_D$.
  As a result, every read $r$, in $A_D$, reads the value
  $V\big(W(r)\big)$.%\\

  % In \lang{}, $V(\tuple{r_t, W\text{ or }\mathit{Vw}, r.f, u})$ returns
  % the value $v_w$ written by $\tuple{r_t, W\text{ or
  %   }\mathit{Vw}, r.f, u}$, according to the execution trace.

  % For instance, given a trace:

  % $
  % \ldots
  % \sheap; \scaches; \ssdcaches \vdash
  % T: \sproc{c}{r_t, \sassign{r.f}{v_w}} \in T
  % $

  % \hfil$
  % \xrightarrow[\_]{\vec{\alpha}: (c \mapsto \tuple{r_t,W,r.f,u}) \in \vec{\alpha}}
  % $

  % \hfill$
  % \sheap'; \scaches'; \ssdcaches' \vdash
  % T': \sproc{c}{r_t, v_w} \ldots \in T'
  % $

  % $V(\tuple{r_t,W,r.f,u}) = v_w$

\item The \emph{cache-action-seen} function $\mathit{Cs}$ returns the
  fetch or write action seen by any read $r$, in $A_D$.
  Note that: $\mathit{Cs}(r) \poD r$ and
  $\mathit{Cs}(r).k \in \aset{W,F}$.%\\

  % In \lang{}, $\mathit{Cs}(\tuple{r_t, R, r.f, u'})$ returns the write
  % or fetch action $\tuple{r_t, W\text{ or }F, r.f, u}$ writing or
  % fetching the value $v_w$ that $\tuple{r_t, R, r.f, u'}$ sees,
  % according to the execution trace.

  % For instance, given a trace:

  % $
  % \ldots
  % \sheap; \scaches; \ssdcaches \vdash T
  % $
  % $
  % \xrightarrow[\_]{\vec{\alpha}: (c \mapsto \tuple{r_t,F,r.f,u}) \in \vec{\alpha}}
  % $

  % \hfil$
  % \sheap'; \scaches'; \ssdcaches' \vdash T'
  % $
  % $
  % \xrightarrow[\_]{\vec{\alpha'}: (c \mapsto \tuple{r_t,R,r.f,u'}) \in \vec{\alpha'}}
  % $

  % \hfill$
  % \sheap''; \scaches''; \ssdcaches'' \vdash T''
  % \ldots
  % $

  % $\mathit{Cs}(\tuple{r_t,R,r.f,u'}) = \tuple{r_t,W,r.f,u}$

  % respectively, given a trace:

  % $
  % \ldots
  % \sheap; \scaches; \ssdcaches \vdash T
  % $
  % $
  % \xrightarrow[\_]{\vec{\alpha}: (c \mapsto \tuple{r_t,W,r.f,u}) \in \vec{\alpha}}
  % $

  % \hfil$
  % \sheap'; \scaches'; \ssdcaches' \vdash T'
  % $
  % $
  % \xrightarrow[\_]{\vec{\alpha'}: (c \mapsto \tuple{r_t,R,r.f,u'}) \in \vec{\alpha'}}
  % $

  % \hfill$
  % \sheap''; \scaches''; \ssdcaches'' \vdash T''
  % \ldots
  % $

  % where $r.f \notin \dom{\ssdcaches(c)}$,\\
  % $\mathit{Cs}(\tuple{r_t,R,r.f,u'}) = \tuple{r_t,F,r.f,u}$

  % \todo{we could generalize, but the value is not enough to map to an
  %   action, multiple actions may right or fetch the same value}

\item The \emph{write-back-fetched} function $\mathit{Bf}$ returns the
  write-back action whose data each fetch action fetches, in $A_D$.%\\
  % Note that
  % $\mathit{Bf}(\tuple{\_, F, r.f, u'}) \soD \tuple{\_, F, r.f, u'}$.

  % In \lang{}, $\mathit{Bf}(\tuple{\_, F, r.f, u'})$ returns the
  % initialization or write-back action $\tuple{\_, \mathit{In}\text{ or
  %   }B, r.f, u}$ whose data $\tuple{\_, F, r.f, u'}$ fetches, according
  % to the execution trace.

  % Note, that in \lang{} we extend $\mathit{Bf}$ to return initialization
  % actions in addition to write-back actions.
  % This is justified by the fact that in \lang{} initialization and
  % volatile write actions access the heap directly, which can be seen as
  % writing and writing-back the variable in a single step.
  % As a result, to comply with JDMM which assumes that initialization and
  % volatile write actions are explicitly written-back, we conceptually
  % pack a write-back action in these two action kinds.
  % \todo{Did i convince you?}

  % For instance, given a trace:

  % $ \ldots \sheap; \scaches; \ssdcaches \vdash T $
  % $ \xrightarrow[\_]{\vec{\alpha}: \tuple{\_,\mathit{In}\text{ or
  %     }B,r.f,u} \in \rng{\vec{\alpha}}} $

  % \hfil$ \sheap'; \scaches'; \ssdcaches' \vdash T' $
  % $ \xrightarrow[\_]{\vec{\alpha'}: \tuple{\_,F,r.f,u'} \in \rng{\vec{\alpha'}}} $

  % \hfill$ \sheap''; \scaches''; \ssdcaches'' \vdash T'' \ldots $

  % $\mathit{Bf}(\tuple{\_,F,r.f,u'}) = \tuple{\_,\mathit{In}\text{ or
  %   }B,r.f,u}$

\item The \emph{action-written-back} function $\mathit{Ab}$ returns the
  write action whose data each write-back writes-back, in $A_D$.
  Note that:\\
  $\mathit{Ab}(b) \poD b$ and $\mathit{Ab}(b).k \in \aset{In, W, Vw}$.\\
  % \item The function $\mathit{Ab}$ is an action-written-back function;

  In \lang{}, $\mathit{Ab}(\tuple{r_t, B, r.f, u'})$ returns the
  initialization or write action
  $\tuple{r_t, \mathit{In}\text{ or }W, r.f, u}$ whose data
  $\tuple{r_t, B, r.f, u'}$ writes-back, according to the execution
  trace.
  Note, that in \lang{} we exclude volatile writes from the possible
  kind of actions returned by $\mathit{Ab}$, since volatile writes are
  never written-back by a separate write-back action, they are
  immediately written to the heap.

  % For instance, given a trace:

  % $
  % \ldots
  % \sheap; \scaches; \ssdcaches \vdash
  % T
  % $
  % $
  % \xrightarrow[\_]{\vec{\alpha}: (c \mapsto \tuple{r_t,W,r.f,u}) \in {\vec{\alpha}}}
  % $

  % \hfil$
  % \sheap'; \scaches'; \ssdcaches' \vdash T'
  % $
  % $
  % \xrightarrow[\_]{\vec{\alpha'}: (c \mapsto \tuple{r_t,B,r.f,u'}) \in {\vec{\alpha'}}}
  % $

  % \hfill$
  % \sheap''; \scaches''; \ssdcaches'' \vdash T''
  % \ldots
  % $

  % $\mathit{Ab}(\tuple{r_t,B,r.f,u'}) = \tuple{r_t,W,r.f,u}$

\item The \textit{action-invalidated} function $\mathit{Ai}$, returns
  the write or fetch action that cached the data invalidated by each
  invalidation action, in $A_D$.
  Note that:
  $\mathit{Ai}(i) \poD p$ and $\mathit{Ai}(i).k \in \aset{W,F}$.\\
%\item The function $\mathit{Ai}$ is an action-invalidated function;

  In \lang{}, $\mathit{Ad}(\tuple{r_t, I, r.f, u'})$ returns the
  write-back or fetch action $\tuple{r_t, W\text{ or }F, r.f, u}$
  writing or fetching a value $v_w$ that $\tuple{r_t, I, r.f, u'}$
  invalidates, according to the execution trace.
  Note that in \lang{} instead of write actions the function returns
  write-back actions, since write actions update the write buffer, which
  cannot be invalidated, and write-back actions update the values in the
  object cache, removing the corresponding entries from the write buffer.

  % For instance, given a trace:

  % $
  % \ldots
  % \sheap; \scaches; \ssdcaches \vdash T
  % $
  % $
  % \xrightarrow[\_]{\vec{\alpha}: (c \mapsto \tuple{r_t,B\text{ or }F,r.f,u}) \in {\vec{\alpha}}}
  % $

  % \hfil$
  % \sheap'; \scaches'; \ssdcaches' \vdash T'
  % $
  % $
  % \xrightarrow[\_]{\vec{\alpha'}: (c \mapsto \tuple{r_t,I,r.f,u'}) \in {\vec{\alpha'}}}
  % $

  % \hfill$
  % \sheap''; \scaches''; \ssdcaches'' \vdash T''
  % \ldots
  % $

  % $\mathit{Ab}(\tuple{r_t,I,r.f,u'}) = \tuple{r_t,B\text{ or }F,r.f,u}$

  % \todo{we could generalize, but the value is not enough to map to an
  %   action, multiple actions may right or fetch the same value}

\item The distributed synchronizes-with order $\swD$ is a relation on
  $A_D$ defining which actions in $A_D$ synchronize with each other.\\

  JDMM uses $x \swD y$ to show that $x$ synchronizes-with $y$.
  Note that $x \swD y \Rightarrow x \so y$.
  An action $x$ synchronizes-with an action $y$, written $x \swD y$,
  when:
  \begin{itemize}

  \item $x$ is the initialization of variable $v$ and $y$ is the first
    action of any thread:

    \hfil$\big((x.k = In) \wedge (y.k =S)\big) $

  \item $y$ is a subsequent read of the volatile variable written by
    $x$:

    \hfil$ (x.k = \mathit{Vw}) \wedge (y.k = \mathit{Vr}) \wedge (x \so y) $

  \item $y$ is a subsequent lock of the monitor that $x$ unlocked:

    \hfil$ (x.k = U) \wedge (y.k = L) \wedge (x.v = y.v) \wedge (x \so y) $

  \item $y$ is the start action of thread $t$ and $x$ is the spawn of
    $t$:

    \hfil$ (x.k = Sp) \wedge (y.k = S) \wedge (x.v = y.t) $

  \item $y$ is a call to \verb!Thread.join()! or \verb!Thread.isAlive()!
    and $x$ is the finish action of this thread:

    \hfil$ (x.k = Fi) \wedge (y.k = J) \wedge (x.t = y.v) $

  \item $y$ is an action detecting if a thread has been interrupted and
    $x$ is an interrupt to that thread:

    \hfil$ (x.k = Ir) \wedge (y.k = Ird) \wedge (x.v = y.v) $

  \item $y$ is the implicit read of a reference to the object being
    finalized and $x$ is the end of the constructor of this object.
    % Now introduce our \swD relations
    % \item $y$ is an action detecting if a thread $t$ has been started
    %   and $y$ is the start action of $t$.
    %   \begin{equation}
    %     \label{eq:sw_start}
    %     \forall x,y: x.k = S \wedge y.k = TS \wedge x.t = y.v \Rightarrow x \swD y
    %   \end{equation}
    % \item

    % \item $x$ is the write-back of a variable's initialization and $y$ is
    %   the first action of any thread: $ \big((\mathit{Ab}(x).k = In) \land
    %   (y.k = S)\big) $

    % \item $x$ is the write-back of a variable and $y$ is a subsequent
    %   fetch of that variable:\\
    %   $ \big((x.k = B) \land (y.k = F) \land (x.v = y.v) \wedge (x \soD
    %   y)\big) $
  \end{itemize}
  In the synchronizes-with examples above, when comparing the variable
  $v$ of one action with the thread $t$ of the other (i.e., $x.t = y.v$)
  means that $y$ acts on thread $x.t$.  The $x$ action is a
  \emph{release} action and $y$ is an \emph{acquire} action.  A
  \emph{release} action must make all writes, visible to the executing
  thread, visible to the actions following (according to any of the
  orders defined till now) the \emph{acquire} action.\\

  In \lang{}, given any execution trace:

  $ \ldots \Sigma_1$
  $ \xrightarrow[\_]{\vec{\alpha_1}: \tuple{\_,k,r.f,u} \in \vec{\alpha_1}} $
  $ \Sigma_2 \ldots \Sigma_{n-1}$
  $ \xrightarrow[\_]{\vec{\alpha_n}: \tuple{\_,k',r.f,u'} \in \vec{\alpha_n}}$
  $ \Sigma_n \ldots $

  \noindent{}where $k, k' \in \sa(A_D)$, \emph{if and only if} $k$ and
  $k'$ can form a synchronization pair and there is no other transition:

  \hfil$\Sigma_x \xrightarrow[\_]{\vec{\alpha_x}: \tuple{\_,k\text{ or }k',r.f,u''} \in \vec{\alpha_x}} \Sigma_y$

  between the transitions that contain the actions with id $u$ and $u'$
  then:

  \hfil$\tuple{\_,k,r.f,u} \swD \tuple{\_,k',r.f,u'}$

\item The happens-before order $\hbD$ is a relation on $A_D$ that
  defines a partial order among actions in $A_D$.\\

  The happens-before notion is the one introduced by Lamport
  in~\cite{lamport1978time}.
  In the context of the JMM this is the transitive closure of the
  program order and the synchronizes-with order.
  JDMM uses $x \hbD y$ to show that $x$ happens-before $y$.

  In \lang{}, given any execution trace:

  $
  \ldots \Sigma_1
  \xrightarrow[\_]{\vec{\alpha_1}: \tuple{\_,\_,\_,u} \in \vec{\alpha_1}}
  \Sigma_2 \ldots
  $
  $
  \Sigma_{n-1}
  \xrightarrow[\_]{\vec{\alpha_n}: \tuple{\_,\_,\_,u'} \in \vec{\alpha_n}}
  \Sigma_n \ldots
  $

  if any of the following holds:

  \begin{itemize}
  \item $\tuple{\_,\_,\_,u} \poD \tuple{\_,\_.\_,u'}$
  \item $\tuple{\_,\_,\_,u} \swD \tuple{\_,\_.\_,u'}$
  \item there exists a transition $ \Sigma_x $
    $ \xrightarrow[\_]{\vec{\alpha_x}: \tuple{\_,\_.\_,u''} \in \vec{\alpha_x}} $
    $ \Sigma_y $ that appears between the transitions that contain the
    actions with ids $u$ and $u'$, in the execution trace, and

    \hfil$\tuple{\_,\_,\_,u} \hbD \tuple{\_,\_.\_,u''} \hbD \tuple{\_,\_.\_,u'}$

    (transitivity)
  \end{itemize}

  then $\tuple{\_,\_,\_,u} \hbD \tuple{\_,\_.\_,u'}$.
\end{itemize}

\paragraph{Conflicting Accesses:} If one of two accesses to the same
variable is a write then these two accesses are \emph{conflicting}.

\paragraph{Data-Race:} A data-race occurs when two conflicting
accesses may happen in parallel.  That is, they are not ordered by
happens-before.

\paragraph{Correctly Synchronized or Data-Race-Free Program:}~\\
A program is correctly synchronized or DRF if and only if all
sequentially consistent executions are free of data-races.

\paragraph{Well-Formed Distributed Execution:}~\\
JDMM defines well-formed executions similarly to the JMM\@.
% The additional properties we present are derived from the abstract
% machine memory model discussed above.
% For brevity, in our formal definitions we use $A_D$ instead of $E_D.A_D$
% to denote a distributed execution's set of actions.
%
Specifically, in JDMM, a distributed execution $E_D$ is well-formed when:
\begin{description}
  \setlength{\itemsep}{1em}

\item[WF-1] \phantomsection\label{app:wf1}
  Each read of a variable $v$ sees a write to $v$:

  \hfil$ \forall r \in A_D: \exists y \in A_D: \big(W(r) = y\big) $

  Note that the original formal definition in
  JDMM~\cite[\S3]{zakkak:jdmm} is:

  \hfil$ \forall x \in A_D: (x.k = R) \Rightarrow \exists y \in A_D:
  \big(W(x) = y\big) $

  where volatile reads are not considered.
  However, JMM~\cite[\S4.4]{Manson:2005:JMMconf} states that
  ``\textit{For all reads $r \in A$, we have $W(r) \in A$ and
    $W(r).v = r.v$. The variable $r.v$ is volatile if and only if $r$ is
    a volatile read, and the variable $w.v$ is volatile if and only if
    $w$ is a volatile write.
  }'', where to our understanding $w$ refers to $W(r)$, and $r$ refers
  to both volatile and non-volatile reads.
  As a result, in this work, we chose to take volatile reads into
  account as well.
  \\

  In \lang{}, this means that given the execution trace of
  $E_D$, for every transition containing a read action:

  \hfil $\Sigma$ $\xrightarrow[\_]{\vec{\alpha}: \tuple{\_, R\text{ or
      }\mathit{Vr}, r.f, \_} \in \rng{\vec{\alpha}}}$ $ \Sigma'$

  in that trace, there is at least one transition containing a write or
  initialization action:

  \hfil$\Sigma_x$
  $\xrightarrow[\_]{\vec{\alpha'}: \tuple{\_, \mathit{In}\text{ or
      }W\text{ or }\mathit{Vw}, r.f, \_} \in \rng{\vec{\alpha'}}}$
  $\Sigma_y$

  which writes in $r.f$ the value that this read action sees.

\item[WF-2] \phantomsection\label{app:wf2}
  All reads and writes of volatile variables are volatile actions:

  $ \forall x \in A_D: x.k \in \aset{\mathit{Vw},\mathit{Vr}}
  \Rightarrow$
  $\nexists y \in A_D: (y.k \in \aset{R, W}) \wedge (x.v = y.v) $\\

  In \lang{}, this means that given the execution trace of $E_D$, in
  every transition $\Sigma \xrightarrow[\_]{\vec{\alpha}} \Sigma'$ for
  every action

  \hfil$\tuple{\_, k, r.f, \_} \in \rng{\vec{\alpha}}$

  $k$ is either $\mathit{Vr}$ or $\mathit{Vw}$, \emph{if and only if}
  $r.f$ is a volatile variable.

\item[WF-3] \phantomsection\label{app:wf3}
  The number of synchronization actions preceding another
  synchronization action $y$ is finite:

  \hfil$ \forall y \in \sa(A_D): \#\{x \in \sa(A_D): x \soD y\} < \infty $\\

  % In \lang{}, this means that given the execution trace of $E_D$, for
  % every transition $\Sigma \xrightarrow[\_]{\vec{\alpha}} \Sigma'$ that
  % contains some synchronization action, the number of transitions
  % preceding it is finite.

  % In this work we make the assumption that this holds for every
  % execution.
  % We consider this assumption reasonable, since at any point a time,
  % observing its execution trace will give a finite number of actions
  % performed from the beginning of its life up to this point.

  % \todo{Make me formal, as in ``given a trace with n transitions the
  %   number of synchronization actions performed by the transitions
  %   before it are finite'' }

\item[WF-4] \phantomsection\label{app:wf4}
  Synchronization order is consistent with program order:

  $ \forall x,y,z \in A_D: $
  $ \big((x.t = z.t) \wedge (x \soD y \soD z)\big) \Rightarrow$
  $(x \poD z) $\\

  In \lang{} this means that given the execution trace of $E_D$, if it
  contains a trace:

  $\ldots \Sigma_1$
  $\xrightarrow[\_]{\vec{\alpha_1}: \tuple{r_t, k_1, \_, u_1} \in \rng{\vec{\alpha_1}}}$
  $\Sigma_2$
  $\xrightarrow[\_]{\vec{\alpha_2}: \tuple{r_t', k_2, \_, u_2} \in \rng{\vec{\alpha_2}}}$
  $\Sigma_3\ldots \Sigma_n$
  $\xrightarrow[\_]{\vec{\alpha_n}: \tuple{r_t, k_n, \_, u_n} \in \rng{\vec{\alpha_n}}}$
  $\Sigma_{n+1} \ldots$

  \noindent{}where $k_1, k_2, k_n \in \sa(A_D)$ and consequently

  \hfil$\tuple{r_t, k_1, \_, u_1} \soD$
  $\tuple{r_t', k_2, \_, u_2} \soD$
  $\tuple{r_t, k_n, \_, u_n}$

  then it cannot also contain the trace:

  $ \ldots \Sigma_n $
  $ \xrightarrow[\_]{\vec{\alpha_n}: \tuple{r_t, k_n, \_, u_n} \in \rng{\vec{\alpha_n}}} $
  $ \Sigma_{n+1} \ldots \Sigma_1 $
  $ \xrightarrow[\_]{\vec{\alpha_1}: \tuple{r_t, k_1, \_, u_1} \in \rng{\vec{\alpha_1}}} $
  $ \Sigma_2 \ldots$

  \noindent{}where
  $\tuple{r_t, k_n, \_, u_n} \poD \tuple{r_t, k_1, \_, u_1}$.
  % , i.e., execution traces are acyclic.

\item[WF-5] \phantomsection\label{app:wf5}
  Lock operations are consistent with mutual exclusion.\\

  The number of lock actions performed on the monitor $m$ by any thread
  $t'$ before, according to the synchronization order, the lock action
  $l$ performed by thread $t$ on the monitor $m$ must be equal to the
  number of unlock actions performed by thread $t'$ before $l$ on the
  monitor $m$:

  $ \forall x \in A_D : \forall t \in T: (x.k = L) \wedge (x.t \neq t)
  \Rightarrow $

  \hfil $ \#\{y \in A_D: (y.t = t) \wedge (y.k = L) \wedge$
  $(y.v = x.v)
  \wedge (y \soD x)\} = $

  \hfill $ \#\{z \in A_D: (z.t = t) \wedge (z.k = U) \wedge$
  $(z.v = x.v) \wedge
  (y \soD x)\} $

  where $T$ is the set of all the execution threads:

  \hfil$T = \{r_t: (\exists x \in A_D: t = x.t)\}$\\

  In \lang{}, this means that given the execution trace of $E_D$, if a
  transition containing a lock acquisition action for a monitor $r.l$:

  \hfil$\Sigma_x$
  $ \xrightarrow[\_]{\vec{\alpha}: \tuple{r_t, L, r.l, u} \in \vec{\alpha}} $
  $\Sigma_y$

  exists in the trace, then for every thread $r_t'$, where
  $r_t' \neq r_t$ the number of transitions containing a lock
  acquisition action for $r.l$:

  \hfil$\Sigma_L$
  $ \xrightarrow[\_]{\vec{\alpha'}: \tuple{r_t', L, r.l, u'} \in \vec{\alpha'}}$
  $ \Sigma_L'$

  which appear earlier in the trace:

  \hfil$\tuple{r_t', L, r.l, u'} \soD \tuple{r_t, L, r.l, u}$

  is equal to the number of transitions containing a lock release action
  for $r.l$:

  \hfil$\Sigma_U$
  $ \xrightarrow[\_]{\vec{\alpha''}: \tuple{r_t', U, r.l, u''} \in \vec{\alpha''}}$
  $ \Sigma_U'$

  that also appear earlier in the trace:

  \hfil$\tuple{r_t', U, r.l, u''} \soD \tuple{r_t, L, r.l, u}$

\item[WF-6] \phantomsection\label{app:wf6}
  The execution obeys intra-thread consistency.

  \noindent{}In \lang{} this means that given the execution trace of
  $E_D$, for every trace:

  $ \ldots \Sigma_1$
  $ \xrightarrow[\_]{\vec{\alpha}: \tuple{r_t, \mathit{In}\text{ or
      }W\text{ or }\mathit{Vw}, r.f, u} \in \rng{\vec{\alpha}}}$
  $ \Sigma_2 \ldots \Sigma_n$
  $ \xrightarrow[\_]{\vec{\alpha'}: \tuple{r_t, R\text{ or
      }\mathit{Vr}, r.f, u'} \in \rng{\vec{\alpha'}}}$
  $ \Sigma_{n+1} \ldots $

  in it, the read action with id $u'$ may return the value written by
  the action with id $u$, \emph{if and only if} between the two
  transitions, performed by thread $r_t$, there is no other transition,
  performed by thread $r_t$, that includes a write action that acts on
  the same variable $r.f$

  \hfil$\Sigma_x$
  $ \xrightarrow[\_]{\vec{\alpha''}: \tuple{r_t, \mathit{In}\text{ or
      }W\text{ or }\mathit{Vw}, r.f, u''} \in \rng{\vec{\alpha''}}}$
  $ \Sigma_y$

\item[WF-7] \phantomsection\label{app:wf7}
  The execution obeys synchronization order consistency.\\

  JMM states that ``\textit{Synchronization order consistency says that
    (i) synchronization order is consistent with program order and (ii)
    each read $r$ of a volatile variable $v$ sees the last write to $v$
    to come before it in the synchronization
    order}''~\cite[\S3.2]{Manson:2005:JMMconf}.
  The first condition is satisfied \emph{if and only if} \wf{4} is
  satisfied, so JDMM examines only the second condition in \wf{7}.

  $ \forall r \in A_D: (r.k = \mathit{Vr}) \Rightarrow $

  \hfil $\Big(\neg \big(r \soD W(r)\big) \wedge \nexists w' \in A_D:
  (w'.k = \mathit{Vw}) \wedge $

  \hfill $(w'.v = r.v) \wedge \big(W(r) \soD w' \soD r\big)\Big)$\\

  In \lang{} this means that given the execution trace of $E_D$, for
  every trace:

  $ \ldots \Sigma_1$
  $ \xrightarrow[\_]{\vec{\alpha}: \tuple{\_, \mathit{Vw}, r.f, u} \in \rng{\vec{\alpha}}} $
  $ \Sigma_2 \ldots \Sigma_n $
  $ \xrightarrow[\_]{\vec{\alpha'}: \tuple{\_, \mathit{Vr}, r.f, u'} \in \rng{\vec{\alpha'}}} $
  $ \Sigma_{n+1} \ldots $

  in it, the volatile read action with id $u'$ returns the value written
  by the volatile write action with id $u$, \emph{if and only if}
  between the two transitions there is no other transition that includes
  a volatile write action that acts on the same variable $r.f$

  \hfil$ \Sigma_x$
  $ \xrightarrow[\_]{\vec{\alpha''}: \tuple{\_, \mathit{Vw}, r.f, u''} \in \rng{\vec{\alpha''}}} $
  $ \Sigma_y $,

\item[WF-8] \phantomsection\label{app:wf8}
  The execution obeys happens-before consistency:

  $ \forall r \in A_D: \Big(\neg \big(r \hbD W(r)\big) \wedge$
  $ \nexists w' \in A_D: (w'.v = r.v) \wedge $
  $ \big(W(r) \hbD w' \hbD r\big)\Big) $\\

  In \lang{} this means that given the execution trace of $E_D$, for
  every trace:

  $ \ldots \Sigma_1$
  $ \xrightarrow[\_]{\vec{\alpha}: \tuple{\_, \mathit{In}\text{ or
      }W\text{ or }\mathit{Vw}, r.f, u} \in \rng{\vec{\alpha}}}$
  $ \Sigma_2 \ldots \Sigma_n$
  $ \xrightarrow[\_]{\vec{\alpha'}: \tuple{\_, R\text{ or
      }\mathit{Vr}, r.f, u'} \in \rng{\vec{\alpha'}}}$
  $ \Sigma_{n+1} \ldots $

  in it, where

  \hfil$\tuple{\_, \mathit{In}\text{ or }W\text{ or
    }\mathit{Vw}, r.f, u} \hbD \tuple{\_, R\text{ or
    }\mathit{Vr}, r.f, u'}$

  the read action with id $u'$ may return the value written by the
  action with id $u$, \emph{if and only if} there is no other
  transition, between the two transitions, that is ordered with them by
  happens-before and includes a write action that acts on the same
  variable $r.f$:

  \hfil$\Sigma_x$
  $ \xrightarrow[\_]{\vec{\alpha''}: \tuple{r_t, \mathit{In}\text{ or
      }W\text{ or }\mathit{Vw}, r.f, u''} \in \rng{\vec{\alpha''}}}$
  $ \Sigma_y$

  where $\tuple{\_, \mathit{In}\text{ or }W\text{ or
    }\mathit{Vw}, r.f, u} \hbD $
  $\tuple{\_, \mathit{In}\text{ or }W\text{ or
    }\mathit{Vw}, r.f, u''} \hbD \tuple{\_, R\text{ or
    }\mathit{Vr}, r.f, u'}$

\item[WF-9] \phantomsection\label{app:wf9}
  Every thread's start action happens-before its other actions
  except for initialization actions:

  $
  \forall x,y,z \in A_D:
  $
  $\big((z.k \not\in \{S,\mathit{In}\}) \wedge (x.k = \mathit{In})
  \wedge (y.k = S)\big) \Rightarrow$
  $(x \hbD y \hbD z $\\

  JMM states that``\textit{The write of the default value (zero, false
    or null) to each variable synchronizes-with to the first action in
    every thread. Although it may seem a little strange to write a
    default value to a variable before the object containing the
    variable is allocated, conceptually every object is created at the
    start of the program with its default initialized
    values. Consequently, the default initialization of any object
    happens-before any other actions (other than default writes) of a
    program.}''~\cite[\S4.3]{Manson:2005:JMMconf}

  \noindent{}As a result, in \lang{} we assume that in the starting
  state of a program's execution trace all the variables used in that
  trace are already initialized and written back to the main memory,
  i.e, all of them fit in the memory and are initialized to zero.
  Since in this work we do not examine allocation techniques and garbage
  collection, this assumption does not interfere with our
  implementation's proof of adherence to JDMM.
  We essentially model a JVM that initializes the heap at boot and does
  not perform any garbage collections during the execution, which is
  actually how our JVM works when garbage collection is turned off.
  To be consistent with the JDMM requirements about the ordering of
  initialization actions we define the beginning of every execution
  trace in \lang{} to be $\Sigma_{init} \to^* \Sigma_{init}'$, where
  $\to^*$ contains only transitions performing the initialization
  actions and their write-backs, for every variable in the execution
  trace, and $\Sigma_{init} \to^* \Sigma_{init}'$ is well-formed ---each
  initialization happens-before its write-back.

\item[WF-10] \phantomsection\label{app:wf10}
  Every read is preceded by a write or fetch action,
  acting on the same variable as the read.\\

  In JDMM all reads of heap-based variables see cached values.
  Formally:

  $\forall r \in A_D: \Big(\big(W(r) \poD r\big) \vee$
  $\exists f \in A_D: $
  $\big((f.v = r.v) \wedge (f \poD r)\big)\Big) $.

  Note that JDMM does not consider simultaneous multithreading and
  context switching in the core model, thus it does not support cache
  sharing in its formal rules~\cite[\S4.2]{zakkak:jdmm}.
  As a result it requires for the read action that sees a value written
  or fetched by another action to be ordered with the latter according
  to program order.
  Cache sharing, however, is examined in~\cite[\S5.2]{zakkak:jdmm} and
  is shown to be safe under JDMM and not break the execution's
  well-formedness if enabled.\\

  In \lang{}, which supports simultaneous multithreading with shared
  caches, this means that given the execution trace of $E_D$, for every
  transition $ \Sigma_R$
  $ \xrightarrow[\_]{\vec{\alpha}: (c \mapsto \tuple{\_, R, r.f, u}) \in {\vec{\alpha}}}$
  $ \Sigma_R'$, there is at least one transition $\Sigma $
  $ \xrightarrow[\_]{\vec{\alpha}: (c \mapsto \tuple{\_, W\text{ or
      }F, r.f, u'}) \in {\vec{\alpha}}}$ $ \Sigma'$ earlier in that
  trace as well, which essentially means that every read performed by a
  core $c$ is preceded by a write or fetch action, also performed by
  $c$, acting on the same variable as the read.

  Note that in the \lang{} definition of \wf{10} we do not include
  volatile accesses.
  This is justified by the fact that in \lang{} volatile reads access
  the heap directly, which can be seen as fetching, reading, and
  invalidating the variable in a single step.
  As a result, in \lang{} there is no other action before a volatile
  read that caches the variable.
  However, we still comply to the JDMM since we conceptually pack a
  fetch in the volatile read itself, meaning that every volatile read is
  indeed preceded by a (conceptual) fetch.
  % \todo{Did i convince you?}

\item[WF-11] \phantomsection\label{app:wf11}
  There is no invalidation, update, or overwrite of a variable's
  cached value between the action that cached it and the read that
  sees it.  Formally:

  $\forall r \in A_D: \nexists x \in A_D: $
  $\Big((x.k \in \{I, F, W\})
  \wedge \big(\mathit{Cs}(r) \poD x \poD r\big)\Big) $\\

  In \lang{}, this means that given the execution trace of $E_D$, for
  every trace:

  $ \ldots \Sigma_1 $
  $\xrightarrow[\_]{\vec{\alpha_1}: (c \mapsto \tuple{\_, W\text{ or
      }F, r.f, u}) \in {\vec{\alpha_1}}} $
  $ \Sigma_2 \ldots \Sigma_n $
  $ \xrightarrow[\_]{\vec{\alpha_n}: (c \mapsto \tuple{\_, R, r.f, u'}) \in {\vec{\alpha_n}}} $
  $ \Sigma_{n+1} \ldots $

  in it, if the read action with id $u'$ sees the value written or
  fetched by the action with id $u$, then there is no other transition
  $\Sigma$
  $ \xrightarrow[\_]{\vec{\alpha}: (c \mapsto \tuple{r_t, I\text{ or
      }F\text{ or}W, r.f, u''}) \in {\vec{\alpha}}}$ $ \Sigma'$
  between the transitions that contain the actions with ids $u$ and
  $u'$.

  Note that, as we explain for \wf{10}, we do not take in account
  volatile accesses and do not require a program order between the
  actions, instead we require that the actions are performed by the same
  core $c$.

\item[WF-12] \phantomsection\label{app:wf12}
  Fetch actions are preceded by at least one write-back of
  the corresponding variable.\\

  For a value to be fetched, it must first be written to the main
  memory.
  The only way to write to the main memory, by definition, is through a
  write-back.
  Formally:

  \hfil$
  \forall f \in A_D, \exists b \in A_D: \big(b = \mathit{Bf}(f)\big)
  $%\\

  % \noindent{}In \lang{} this rule is always satisfied, since as we
  % explain in \wf{9} we define the beginning of every execution trace in
  % \lang{} to be $\Sigma_{init} \to^* \Sigma_{init}'$ where $\to^*$
  % contains only transitions performing the initialization actions and
  % their write-backs, for every variable in the execution trace.

  % As we explain in the definition of $\mathit{Bf()}$, in \lang{}
  % initialization actions write directly to the main memory.
  % Note that in addition it is possible to write to the main memory
  % through a volatile write as well.
  % However, volatile variables in \lang{} are neither cached nor fetched,
  % thus we omit them.
  % As a result, in \lang{}, \wf{12} means that given the execution trace
  % of $E_D$, for every transition $\Sigma$
  % $ \xrightarrow[\_]{\vec{\alpha}: \tuple{\_, F, r.f, u} \in
  %   \rng{\vec{\alpha}} }$ $ \Sigma'$, in it, there is a at least one
  % transition $\Sigma_x $
  % $\xrightarrow[\_]{\vec{\alpha'}: \tuple{\_, \mathit{In}\text{ or }B,
  %     r.f, u'} \in \rng{\vec{\alpha'}}}$ $ \Sigma_y$ earlier in that
  % trace.

\item[WF-13] \phantomsection\label{app:wf13}
  Write-back actions are preceded by at least one write to
  the corresponding variable.\\

  For a variable to be written-back, it must be dirty in some cache; a
  cached copy becomes dirty only when written.
  Formally:

  \hfil$
  \forall b \in A_D, \exists w \in A_D: \big(w = \mathit{Ab}(b)\big)
  $\\

  In \lang{} this means that given the execution trace of $E_D$, for
  every transition $\Sigma$
  $ \xrightarrow[\_]{\vec{\alpha}: (c \mapsto \tuple{\_, B, r.f, u}) \in {\vec{\alpha}}}$
  $ \Sigma'$, in it, there is a at least one transition $\Sigma_w$
  $\xrightarrow[\_]{\vec{\alpha'}: (c \mapsto \tuple{\_, W, r.f, u'}) \in \vec{\alpha'}}$
  $\Sigma_w'$ earlier in that trace as well.

\item[WF-14] \phantomsection\label{app:wf14}
  There are no other writes to the same variable between a
  write and its write-back.  Formally:

  $ \forall b \in A_D: $
  $\Big(\nexists w` \in A_D: \big((w'.v. = b.v)
  \wedge (\mathit{Ab}(b) \poD w' \poD b)\big)\Big)$

  \noindent{}In \lang{} this means that given the execution trace of
  $E_D$, for every trace:

  $ \ldots \Sigma_1$
  $ \xrightarrow[\_]{\vec{\alpha}: (c \mapsto \tuple{r_t, W, r.f, u}) \in \vec{\alpha}} $
  $ \Sigma_2 \ldots \Sigma_n $
  $ \xrightarrow[\_]{\vec{\alpha'}: (c \mapsto \tuple{r_t, B, r.f, u'}) \in \vec{\alpha'}} $
  $\Sigma_{n+1} \ldots $

  in it, the write-back action with id $u'$ writes back the value
  written by the action with id $u$, \emph{if and only if} there is no
  other transition containing a write $\Sigma $
  $\xrightarrow[\_]{\vec{\alpha_w}: (c \mapsto \tuple{r_t, W, r.f, u''}) \in \rng{\vec{\alpha_w'}}}$
  $ \Sigma'$ between the transitions that contain the actions with ids
  $u$ and $u'$.

  Note that, as in \wf{10} and \wf{11}, we do not take in account
  volatile accesses and do not require a program order between the
  actions, instead we require that the actions are performed by the same
  core $c$.

\item[WF-15] \phantomsection\label{app:wf15}
  Only cached variables are invalidated.\\

  Invalid cached data cannot be invalidated.
  Formally:

  $ \forall p \in A_D:\nexists p' \in A_D: $
  $\Big(\big(\mathit{Ai}(p) =
  \mathit{Ai}(p')\big)\wedge \big(\mathit{Ai}(p) \poD p' \poD
  p\big)\Big) $\\

  In \lang{} this means that given the execution trace of $E_D$,
  transitions containing invalidation actions:

  \hfil$\sheap; \scaches; \ssdcaches \vdash T $
  $\xrightarrow[\_]{\vec{\alpha}: (c \mapsto \tuple{r_t, I, r.f, u} \in \rng{\vec{\alpha}}} $
  $\sheap'; \scaches'; \ssdcaches' \vdash T'$

  appear in the trace only when $r.f \in \dom{\scaches(c)}$.

\item[WF-16] \phantomsection\label{app:wf16}
  Reads that see writes performed by other threads are
  preceded by a fetch action that fetches the write-back of the
  corresponding write and there is no other write-back of the
  corresponding variable happening between the write-back and the
  fetch.\\

  Since all writes go through the cache, for a write to be seen by a
  read on a different thread, there must exist a write-back action and a
  subsequent fetch action for it.
  Formally:

  $ \forall r \in A_D:\big(W(r).t \neq r.t\big) \Rightarrow \exists
  b,f \in A_D: $

  $\qquad \Big(\big(\mathit{Ab}(b) = W(r)\big) \wedge
  \big(\mathit{Bf}(f) = b \big) \wedge $
  $\big(W(r) \poD b \swD f \poD r\big) \wedge $

  \hfill $\big(\nexists b': (b'.v = b.v) \wedge (b \hb b' \hb r)\big)\Big) $\\

  In \lang{}, which supports simultaneous multithreading with shared
  caches, \wf{16} essentially translates to ``Reads that see writes
  performed by other \emph{cores} are preceded by a fetch action that
  fetches the write-back of the corresponding write and there is no
  other write-back of the corresponding variable happening between the
  write-back and the fetch''

  This means that given the execution trace of $E_D$, for every trace:

  $ \ldots \Sigma_1 $
  $\xrightarrow[\_]{\vec{\alpha}: (c \mapsto \tuple{r_t, W, r.f, u}) \in \vec{\alpha}} $
  $\Sigma_2 \ldots \Sigma_n $
  $\xrightarrow[\_]{\vec{\alpha'}: (c' \mapsto \tuple{r_t', R, r.f, u'} \in \vec{\alpha'}}$
  $ \Sigma_{n+1} \ldots $

  in it, where $c \neq c'$, the read action with id $u'$ may see the
  value written by the action with id $u$, \emph{if and only if} all of
  the following hold:

  \begin{enumerate}
  \item There is a transition containing a fetch action:

    \hfil$\Sigma_f$
    $\xrightarrow[\_]{\vec{\alpha_f}: (c' \mapsto \tuple{r_t', F, r.f, u_f}) \in \vec{\alpha_f}}$
    $\Sigma_f'$

    between the transitions that contain the actions with ids $u$ and
    $u'$,
  \item There is a transition containing a write-back action:

    \hfil$\Sigma_b$
    $\xrightarrow[\_]{\vec{\alpha_b}: (c \mapsto \tuple{r_t, B, r.f, u_b}) \in \vec{\alpha_b}}$
    $\Sigma_b$

    between the transitions that contain the actions with ids $u$ and
    $u_F$,
  \item There is no other transition containing a write-back action:

    \hfil$\Sigma_b'$
    $\xrightarrow[\_]{\vec{\alpha_b'}: (c \mapsto \tuple{\_, B, r.f, u_b'}) \in \vec{\alpha_b'}}$
    $\Sigma_b''$

    between the transitions that contain the actions with ids $u_B$ and
    $u_F$.
  \end{enumerate}

  Note that, as in \wf{10}, \wf{11}, and \wf{14} we do not take in
  account volatile accesses and do not require a program order between
  the actions, instead we require that the corresponding actions are
  performed by the same core $c$.

\item[WF-17] \phantomsection\label{app:wf17}
  Volatile writes are immediately written back.\\

  Allowing other actions between a volatile write and its write-back may
  result in other threads observing these actions as if they were
  executed before the volatile write.
  This is similar to moving these actions before the volatile write,
  which is an invalid reordering according to the JMM\@.
  Formally:

  $ \forall w \in A_D: (w.k = Vw) \Rightarrow \exists b \in A_D: $
  $ \big((w \poD b) \wedge (w.v = b.v) \wedge \nexists x$
  $ \in A_D: (w \poD x \poD b)\big) $\\

  In \lang{} this means that given the execution trace of $E_D$,
  transitions containing volatile write actions:

  $ \sheap; \scaches; \ssdcaches \vdash T: \sproc{c}{r_t, \sassign{r.f}{v}} \in T$
  $ \xrightarrow[\_]{\vec{\alpha}: \tuple{r_t, \mathit{Vw}, r.f, u} \in \rng{\vec{\alpha}}}$
  $ \sheap'; \scaches'; \ssdcaches' \vdash T': \sproc{c}{r_t, v} \in T' $

  update the value of $r.f$ to $v$ in the heap, i.e.:

  \hfil$\left(r \mapsto C(\overrightarrow{f':\tau})\right) \in \sheap' \wedge$
  $ (f \mapsto v) \in (\overrightarrow{f':\tau}) $

\item[WF-18] \phantomsection\label{app:wf18}
  A fetch of the corresponding variable happens
  immediately before each volatile read.\\

  Allowing other actions between a volatile read and its fetch may
  result in other threads observing these actions as if they were
  executed after the volatile read.
  This is similar to moving these actions after the volatile read, which
  is an invalid reordering according to the JMM\@.
  Formally:

  $ \forall r \in A_D: (r.k = Vr) \Rightarrow \exists f \in A_D: $
  $\Big((f \poD r) \wedge \big(f = \mathit{Cs}(r)\big) \wedge$
  $\nexists x \in A_D: (f \poD x \poD r)\Big) $\\

  \noindent{}In \lang{} this means that given the execution trace of
  $E_D$, transitions containing volatile read actions:

  $ \sheap; \scaches; \ssdcaches \vdash T: \sproc{c}{r_t, r.f} \in T$
  $ \xrightarrow[\_]{\vec{\alpha}: \tuple{r_t, \mathit{Vr}, r.f, u} \in \rng{\vec{\alpha}}}$
  $ \sheap'; \scaches'; \ssdcaches' \vdash T': \sproc{c}{r_t, v} \in T' $

  always see the value $v$ of $r.f$ from the heap, i.e.:

  \hfil$\left(r \mapsto C(\overrightarrow{f':\tau})\right) \in \sheap \wedge$
  $ (f \mapsto v) \in (\overrightarrow{f':\tau}) $

\item[WF-19] \phantomsection\label{app:wf19}
  Initializations are immediately written-back and their
  write-backs are completed before the start of any thread.\\

  \noindent{}In \lang{} this rule is always satisfied, since as we
  explain in \wf{9} we define the beginning of every execution trace in
  \lang{} to be $\Sigma_{init} \to^* \Sigma_{init}'$ where $\to^*$
  contains only transitions performing the initialization actions and
  their write-backs, for every variable in the execution trace.
  As a result, in every execution trace initialization actions are
  written-back and their write-backs are completed before the start of
  any thread.

\item[WF-20] \phantomsection\label{app:wf20}
  The happens-before order between two writes is consistent with
  the happens-before order of their write-backs.\\

  If, for two write actions $w$ and $w'$, $w \hbD w'$, then the
  corresponding write-back actions, $b$ for $w$ and $b'$ for $w'$, must
  also be ordered, so that $b \hbD b'$ and vice versa.
  Formally:

  \hfil$
  \forall b,b' \in A_D:  \big(\mathit{Ab}(b) \hb \mathit{Ab}(b')\big)
  \Leftrightarrow (b \hb b')
  $%\\

  % In \lang{}, this means that given the execution trace of $E_D$, for
  % every trace:

  % $ \ldots \Sigma_1 $
  % $\xrightarrow[\_]{\vec{\alpha}: \tuple{r_t, W, r.f, u} \in \rng{\vec{\alpha}}} $

  % \hfill$\Sigma_2 \ldots \Sigma_n $
  % $\xrightarrow[\_]{\vec{\alpha'}: \tuple{r_t', W, r'.f', u'} \in \rng{\vec{\alpha'}}}$
  % $ \Sigma_{n+1} \ldots $

  % where $\tuple{r_t, W, r.f, u} \hbD \tuple{r_t', W, r'.f', u'}$, it
  % there exist transitions:

  % \hfil$\Sigma $
  % $\xrightarrow[\_]{\vec{\alpha_b}: \tuple{r_t, B, r.f, u_b} \in \rng{\vec{\alpha_b}}} $
  % $\Sigma_b $

  % and

  % \hfil$\Sigma' $
  % $\xrightarrow[\_]{\vec{\alpha_b'}: \tuple{r_t', B, r'.f', u_b'} \in \rng{\vec{\alpha_b'}}} $
  % $\Sigma_b' $

  % writing back these writes, then $\tuple{r_t, W, r.f, u}$ appears
  % before $ \tuple{r_t', W, r'.f', u'}$

  % \todo{This seems not that well defined}

\item[WFE-1] \phantomsection\label{app:wfe1}
  There is a corresponding fetch action between thread
  migration and every read action.

  $ \forall m,r \in A_D: \big((m.k = M) \wedge (m \poD r)\big)
  \Rightarrow $
  $\big(\exists f \in A_D: (m \poD f \poD r)\big) $\\

  In \lang{}, this means that given the execution trace of $E_D$, for
  every trace:

  $ \ldots \Sigma_1 $
  $\xrightarrow[\_]{\vec{\alpha}: \tuple{r_t, M, \_, u} \in \rng{\vec{\alpha}}} $
  $\Sigma_2 \ldots \Sigma_n $
  $\xrightarrow[\_]{\vec{\alpha'}: \tuple{r_t, R, r.f, u'} \in \rng{\vec{\alpha'}}}$
  $ \Sigma_{n+1} \ldots $

  there exists at least one transition containing a fetch action:

  \hfil$\Sigma $
  $\xrightarrow[\_]{\vec{\alpha}: \tuple{r_t, F, r.f, u_f} \in \rng{\vec{\alpha}}} $
  $\Sigma' $

  between the actions with ids $u$ and $u'$,

  Note that, as in \wf{10}, \wf{11}, \wf{14}, and \wf{16} we do not take
  in account volatile accesses.

\item[WFE-2] \phantomsection\label{app:wfe2} At migration, there are no
  dirty data at the \emph{old} core.
  Formally:

  \hfil$\forall m,w \in A: \Big((m.k = M) \wedge \big(w \po B(w) \po m \big)\Big)$\\

  In \lang{}, this means that given the execution trace of $E_D$, for
  every trace:

  $ \Sigma_1 $
  $ \xrightarrow[\_]{\vec{\alpha_1}: \tuple{r_t, W, r.f, u} \in \rng{\vec{\alpha_1}}} $
  $\Sigma_2 \ldots \Sigma_n $
  $\xrightarrow[\_]{\vec{\alpha_n}: \tuple{r_t, M, \_, u'} \in \rng{\vec{\alpha_n}}}$
  $ \Sigma_{n+1} \ldots $

  there exists at least one transition containing a write-back action
  $\Sigma $
  $\xrightarrow[\_]{\vec{\alpha}: \tuple{r_t, B, r.f, u_f} \in \rng{\vec{\alpha}}} $
  $\Sigma' $ between the actions with ids $u$ and $u'$,
\end{description}

\section{Proof of adherence to JDMM}
\label{sec:proof}

In this section we prove the adherence of \lang{} to JDMM\@.
To achieve this we show that its operational semantics generates only
well-formed, according to JDMM, executions.
% Thus, for each well-formedness rule we show, by induction on the
% execution trace length, that \lang's operational semantics satisfies it.
That is, given any well-formed execution trace, as described in
\cref{app:jdmm}, $\Sigma \to^* \Sigma'$, where the $\to^*$ binary
operator denotes an arbitrary number of transitions, we show that any
execution trace $\Sigma \to^* \Sigma' \to \Sigma''$ is well-formed as
well.
% In our reasoning we omit the rules \wf{9} and \wf{19}, since they are
% satisfied by the assumption that in the starting state of a program's
% execution trace all the variables used in that trace are already
% initialized (see~\cref{app:wf9}).
In our reasoning we introduce some additional well-formedness rules that
we prove true for any \lang{} execution trace.
We mark such rules with \wfh{X}\\

\wfh{1}: For every non-volatile variable $r.f$ that appears in the
execution trace, \emph{if and only if} it is present in $\sheap$, then
its value in $\sheap$ is the one written back by the last, according to
synchronization order, write-back action, acting on $r.f$, in that
execution trace.\\

\wfh{2}: For every non-volatile variable $r.f$ that appears in the
execution trace, \emph{if and only if} it is present in $\sscache(c)$,
then its value in $\sscache(c)$ is the one fetched or written back by
the last fetch or write-back action in that execution trace, which acts
on $r.f$ and is performed by $c$.\\

\wfh{3}: For every non-volatile variable $r.f$ that appears in the
execution trace, \emph{if and only if} it is present in $\ssdcache(c)$,
then its value in $\ssdcache(c)$ is the one written by the last write
action in that execution trace, which acts on $r.f$ and is performed by
$c$.\\

\wfh{4}: For every object $r$ that appears in the execution trace, if
$r \in \dom{\sscache(c)}$, then there is at least one transition
$\Sigma_f $
$\xrightarrow[\vec{c'}: c \in \vec{c'}]{\vec{\alpha}: \tuple{\_, F, r, u_f} \vec{\alpha}} $
$\Sigma_f'$ in the execution trace.\\

\wfh{5}: For every variable $r.f$, that appears in the execution trace,
if:

$r \in \dom{\sheap}$ $\lor$ $r \in \dom{sscache(c)}$ $\lor$
$r.f \in \dom{\ssdcache(c)}$

\noindent{}then the value stored in them is the result of a write to
$r.f$.\\

\wfh{6}: For every volatile variable $r.f$ in $\sheap$, its value is the
one written by the last, according to synchronization order, volatile
write action, acting on it, in that execution trace, or the value
written-back by the write-back action of the initialization action,
acting on it, if there are no volatile write actions, acting on it, in
that execution trace.\\

\wfh{7}: Each thread is assigned to a core \emph{if and only if } it is
spawned, and is assigned to a single core.  Formally,

$\forall c \in \stids: \forall r_t \in \dom{\sheap}:$
$\big(\sheap(r_t) = C(\overrightarrow{f \mapsto v}, \sfmt{spawned})$
$ \iff$
$ \exists T \in \vec{T}: \sproc{c}{r_t, \_} \in T\big) \wedge$

\hfill$\big(\forall T \in \vec{T}: \sproc{c}{r_t, \_} \in T \Rightarrow$
$\nexists c' \in \stids: c' \neq c \wedge \sproc{c'}{r_t, \_} \in T\big)$

where $\vec{T}$ are all the sets of threads in the execution trace.\\

\wfh{8}: Each thread appears only on a single set of threads in a pair
of set of threads.
That is, for every pair of set of threads $T_1 \parallel T_2$ in the
execution trace:

$\forall r_t \in \dom{\sheap}:$
$( \sproc{\_}{r_t, \_} \in T_1 \Rightarrow$
$ \sproc{\_}{r_t, \_} \notin T_2)$
$\wedge$
$( \sproc{\_}{r_t, \_} \in T_2 \Rightarrow$
$ \sproc{\_}{r_t, \_} \notin T_1)$\\

\wfh{9}: The contents of the object cache and the write buffer of each
core are altered only by that core.

$\forall c, c' \in \stids: $
$(\_; \scaches; \ssdcaches \vdash \sproc{c}{\_,\_} \to$
$ \_; \scaches[c' \mapsto \scache{c'}']; \ssdcaches[c' \mapsto \sdcache{c'}'] \vdash \_) \Rightarrow$
$ c = c'$

%%%%%%%%%%%%%%%%%%%%%%%%%%%%%%%%%%%%%%%%%%%%%%%%%%%%%%%%%%%%%%%%%%%%%%%
% Proof for local semantics
%%%%%%%%%%%%%%%%%%%%%%%%%%%%%%%%%%%%%%%%%%%%%%%%%%%%%%%%%%%%%%%%%%%%%%%

\begin{lemma}\label{l:wf9}
  Initialization actions happen-before every thread's start action.
\end{lemma}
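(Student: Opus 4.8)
The plan is to reduce the claim to the synchronizes-with relation and then invoke the fact, recalled in the JDMM presentation, that happens-before is the transitive closure of program order and synchronizes-with, so that $\swD \subseteq \hbD$. The relevant clause of the synchronizes-with definition states that the initialization of any variable synchronizes-with the first action of any thread, i.e. whenever $x.k = \mathit{In}$ and $y.k = S$ we have $x \swD y$; this case is imported directly from the JMM and, unlike the lock/unlock and volatile cases, imposes neither a variable-matching constraint nor an intervening-action side condition. Hence it suffices to show that for every initialization action and every start action occurring in a \lang{} trace, the two are related by $\swD$.

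The key steps, in order, are as follows. First, I would recall the structure of every \lang{} execution trace fixed in the treatment of \wf{9} and \wf{19}: each trace begins with a prefix $\Sigma_{init} \to^* \Sigma_{init}'$ in which $\to^*$ consists solely of transitions performing the initialization actions and their write-backs for every variable used in the trace; consequently every initialization action is emitted within this prefix. Second, I would observe that a thread's start action is produced only by rule $\rstart$, which carries the annotation $S$ and fires only on a configuration $\sproc{c}{r_t, \sfmt{start}}$ whose thread object is in state $\sfmt{spawned}$; since the prefix contains only $\mathit{In}$ and $B$ actions, every start action occurs strictly after $\Sigma_{init}'$, hence strictly after every initialization action in the trace. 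Third, combining these facts with the synchronizes-with clause above, every initialization action $\tuple{\_, \mathit{In}, r.f, u}$ synchronizes-with every start action $\tuple{\_, S, \_, u'}$, and the trace ordering (inits before starts) is consistent with $\swD$, recalling that $x \swD y \Rightarrow x \soD y$. Finally, since $\swD \subseteq \hbD$, I conclude $\tuple{\_, \mathit{In}, r.f, u} \hbD \tuple{\_, S, \_, u'}$, which is exactly the statement of the lemma.

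The main obstacle I expect is justifying that the synchronizes-with edge really does hold for every init/start pair in the operational model rather than merely in the abstract JDMM tuple. The subtlety is that the \lang{} synchronizes-with formalization is phrased for same-variable pairs, with an intervening-action condition on the same $r.f$, whereas the init/start pairing is cross-entity: an initialization of a heap variable against the start of a thread. I would address this by appealing to the first clause of the synchronizes-with definition directly, noting that it constrains only the kinds $\mathit{In}$ and $S$ and is therefore unconditional on variables, so that the trace-structure argument alone supplies the required synchronization-order consistency. A secondary point to verify is that the lemma concerns the distinguished $\sfmt{start}$ action of rule $\rstart$ and not the spawn action $\mathit{Sp}$; since only the former has kind $S$, the synchronizes-with clause applies to precisely the action named in the lemma.
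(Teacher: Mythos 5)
Your proposal is correct and follows essentially the same route as the paper, which likewise discharges the lemma by appealing to the definitional prefix $\Sigma_{init} \to^* \Sigma_{init}'$ containing only initialization actions and their write-backs. The only difference is that you make explicit the step the paper leaves implicit, namely that the trace-order precedence is promoted to happens-before via the unconditional $\mathit{In} \swD S$ clause and $\swD \subseteq \hbD$, which is a welcome clarification rather than a divergence.
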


\begin{proof}
  Satisfied for every execution trace by the definition of the beginning
  of every execution trace in \lang{} to be
  $\Sigma_{init} \to^* \Sigma_{init}'$ where $\to^*$ contains only
  transitions performing the initialization actions and their
  write-backs, for every variable in the execution trace.
\end{proof}

\begin{lemma}[WF-12]\label{l:wf12}
  Fetch actions are preceded by at least one write-back of
  the corresponding variable.
\end{lemma}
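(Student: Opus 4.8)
The plan is to argue by induction on the length of the execution trace, in line with the preservation scheme of \cref{sec:proof}: assuming the prefix $\Sigma \to^* \Sigma'$ is well-formed, I consider the last transition and need only verify the new action it introduces. Since \wf{12} constrains fetch actions exclusively, the only transition that can threaten the invariant is one emitting a fetch; all earlier fetches are covered by the induction hypothesis. I therefore focus on the single rule that produces an $F$ action, namely \rfetch. Its premise $\sheap(r) = C(\overrightarrow{f \mapsto v})$ requires that the object $r$ --- and hence every field $r.f$ it contains --- already reside in the heap $\sheap$ at the moment the fetch is taken. Thus a fetch can never read a value absent from the heap; it always copies an existing heap entry into the object cache $\sscache$.

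Next I would appeal to the helper invariant \wfh{1}, which states that for every non-volatile variable $r.f$ present in the heap, the value stored in $\sheap$ is exactly the one installed by the last (in synchronization order) write-back action acting on $r.f$. Because \rfetch requires $r.f \in \dom{\sheap}$, \wfh{1} immediately supplies a write-back action on $r.f$ occurring earlier in the trace than the fetch. This is precisely the write-back returned by $\mathit{Bf}$ for the fetch, establishing the required $b = \mathit{Bf}(f)$.

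I would then dispatch the base case, namely a fetch of a variable whose heap value is still its default initialization value. By the convention fixed in \wf{9} and \wf{19}, and recorded in \autoref{l:wf9}, every execution trace begins with a prefix $\Sigma_{init} \to^* \Sigma_{init}'$ in which the initialization action of each variable and its write-back are both performed before any thread starts. Consequently, even a variable that the program never explicitly writes still has a write-back preceding any fetch of it, so \wfh{1} applies uniformly and no genuine base case escapes the argument.

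The remaining care concerns volatile fields. Since volatile values reach the heap through \rvolatilewrite rather than through a write-back, \wfh{1} is deliberately stated only for non-volatile variables. As the paper already observes for \wf{10}, \wf{11}, and \wf{14}, volatile accesses bypass the cache entirely: a volatile read consumes the heap value directly and conceptually packs its own fetch, so no separate fetch action is generated for a volatile variable. Hence \wf{12} ranges only over non-volatile fetch actions, for which \wfh{1} is exactly what is needed. The main obstacle is therefore not the step itself but having previously established \wfh{1} as a genuine inductive invariant of the trace; once \wfh{1} is in hand, the conclusion for \wf{12} follows by direct inspection of the \rfetch premises.
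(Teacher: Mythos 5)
Your proposal is correct, and its load-bearing step is the same one the paper uses: the convention that every execution trace begins with a prefix $\Sigma_{init} \to^* \Sigma_{init}'$ performing the initialization action \emph{and} its write-back for every variable appearing in the trace. The paper's proof of \wf{12} consists of exactly that one observation and nothing else --- every variable has a write-back before any thread runs, so every fetch is trivially preceded by one. What you add on top (the preservation-style induction, the inspection of \rfetch's premise $\sheap(r) = C(\overrightarrow{f \mapsto v})$, and the appeal to \wfh{1}) is sound but not necessary, and the \wfh{1} step is the one place where your argument is slightly shaky if read in isolation: \wfh{1} says that \emph{if} $r.f$ is in the heap \emph{then} its value is that of the last write-back acting on it, which does not by itself assert that any write-back exists --- existence is supplied only by the initialization prefix, i.e., by what you call the ``base case.'' Since that base case already covers every variable uniformly (not just never-written ones), it subsumes the rest of your argument; your treatment of volatiles is likewise careful but moot, as the paper's global convention needs no case split. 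In short: correct, same essential idea, with extra machinery that could be stripped down to the paper's one-line proof.
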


\begin{proof}
  In \lang{} this rule is always satisfied, since as we explain in
  \wf{9} we define the beginning of every execution trace in \lang{} to
  be $\Sigma_{init} \to^* \Sigma_{init}'$ where $\to^*$ contains only
  transitions performing the initialization actions and their
  write-backs, for every variable in the execution trace.
\end{proof}

\begin{lemma}[WF-17]\label{l:wf17}
  Volatile writes are immediately written back.
\end{lemma}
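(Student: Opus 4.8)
The plan is to prove this by direct inspection of the operational semantics, since the \lang{} formulation of \wf{17} only requires that each transition carrying a volatile-write action immediately updates the heap with the written value. First I would identify which inference rules can emit a $\mathit{Vw}$-annotated action. Examining \cref{tbl:syncsemantics}, the only rule whose conclusion is annotated $\toa{Vw}$ is $\rvolatilewrite$; the companion rule $\rvolatilewritel$ merely acquires the synthetic per-variable lock $r.f.l$ and produces an unlabelled step, so it contributes no $\mathit{Vw}$ action. Hence it suffices to reason about $\rvolatilewrite$ alone.

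Next I would read off the effect of $\rvolatilewrite$ on the heap directly from its premises. Its last premise is $\sheap' = \sheap[r.f \mapsto v][r.f.l \mapsto 0]$, so the very transition that performs the volatile write of $v$ into $r.f$ also installs $v$ as the value of $r.f$ in the resulting heap $\sheap'$. Unlike an ordinary non-volatile write (rule $\rassign$), which merely records the new value in the write buffer $\ssdcache$ and relies on a \emph{separate} $\rwriteback$ step (action kind $B$) to reach the heap, a volatile write never routes through the write buffer. This matches the remark accompanying the definition of $\mathit{Ab}$: volatile writes are excluded from the write actions a write-back can target, precisely because they are written to the heap immediately. Thus the required conclusion,
$$
\left(r \mapsto C(\overrightarrow{f':\tau})\right) \in \sheap' \wedge (f \mapsto v) \in (\overrightarrow{f':\tau}),
$$
follows at once from the premise, and no intervening action can separate the volatile write from its effect on the heap, since the two coincide in a single atomic transition.

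Finally I would lift this observation from the local to the global judgement. A local $\rvolatilewrite$ step is promoted by $\rlift$ (and, within a parallel composition, by $\rparg$); neither rule alters the heap component produced by the premise, because the heap is shared and, by the disjointness premise of $\rparg$, at most one core updates it per global step. Consequently the heap update $r.f \mapsto v$ established locally is preserved verbatim in the global configuration, so \wf{17} holds along the whole trace. I expect no genuine obstacle here: the entire burden is discharged by inspecting a single rule, and the only points requiring care are confirming that $\rvolatilewritel$ does not itself constitute a $\mathit{Vw}$ action and that the global lifting rules leave the heap update intact.
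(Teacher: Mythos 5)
Your proposal is correct and takes essentially the same route as the paper, whose entire proof is the one-line observation that \wf{17} is ``satisfied by the definition of $\rvolatilewrite$ that writes the variable directly to the heap.'' Your additional checks --- that $\rvolatilewritel$ emits no $\mathit{Vw}$ action, that volatile writes bypass the write buffer unlike $\rassign$, and that $\rlift$ and $\rparg$ preserve the heap update --- are sound elaborations of the same inspection argument rather than a different approach.
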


\begin{proof}
  Satisfied by the definition of $\rvolatilewrite$ that writes the
  variable directly to the heap.
\end{proof}

\begin{lemma}[WF-18]\label{l:wf18}
  A fetch of the corresponding variable happens immediately before each
  volatile read.
\end{lemma}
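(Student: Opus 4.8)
The plan is to reduce the claim to a direct inspection of the single rule that can emit a volatile-read action, exactly as was done for \wf{17} with $\rvolatilewrite$. First I would recall the \lang{} reformulation of \wf{18} given in the appendix: because volatile reads in \lang{} bypass both the object cache and the write buffer --- conceptually packing a fetch, a read, and an invalidation into one atomic step, as argued for \wf{10} --- the obligation is not to exhibit a separate preceding fetch action, but to show that the value produced by every $\mathit{Vr}$ transition is precisely the value currently stored for $r.f$ in the heap $\sheap$.

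Next I would observe that the only inference rule whose conclusion is annotated with a $\mathit{Vr}$ action is $\rvolatileread$; its companion $\rvolatilereadl$ merely acquires the synthetic lock $r.f.l$ and carries no action annotation. Hence any transition in the trace annotated with $\mathit{Vr}$ must be an instance of $\rvolatileread$, and it suffices to inspect that rule. Reading off its premise $\sheap(r.f) = v$ together with its conclusion $\sheap; \sscache; \ssdcache \vdash \sproc{c}{r_t, r.f} \toa{Vr} \sheap'; \sscache; \ssdcache \vdash \sproc{c}{r_t, v}$, the returned value $v$ is obtained directly from the heap, i.e. $\left(r \mapsto C(\overrightarrow{f':\tau})\right) \in \sheap$ with $(f \mapsto v) \in \overrightarrow{f':\tau}$, which is exactly the \lang{} statement of \wf{18}. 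The ``happens immediately before'' clause is then discharged for free: since the conceptual fetch and the read occur within the same atomic step, no action can intervene between them.

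The step I expect to be the main obstacle is not any calculation but the modelling justification that a direct heap read legitimately discharges the JDMM requirement of an immediately preceding fetch. I would support this by leaning on the two emptiness premises of $\rvolatileread$, namely $\sscache = \emptyset$ and $\ssdcache = \emptyset$, which guarantee that no stale cached copy of $r.f$ can shadow the heap value; consequently the heap read is observationally identical to a fetch that retrieves $\sheap(r.f)$ followed immediately by the read and an invalidation. Combined with \wfh{6} --- which pins down the heap value of a volatile variable as the one written by the last volatile write, or the initialization write-back when there is none --- this shows the conceptual fetch sees the correct, most recent value, and hence that \wf{18} is preserved by every step of the operational semantics.
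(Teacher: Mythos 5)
Your proposal is correct and follows the same route as the paper, whose entire proof is the observation that $\rvolatileread$ reads the variable directly from the heap; your additional detail about the conceptual packing of fetch, read, and invalidation into one atomic step merely spells out the modelling justification the paper already gives in its appendix discussion of \wf{10} and \wf{18}.
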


\begin{proof}
  Satisfied by the definition of $\rvolatileread$ that reads the
  variable directly from the heap.
\end{proof}

\begin{lemma}[WF-19]\label{l:wf19}
  Initializations are immediately written-back and their
  write-backs are completed before the start of any thread.
\end{lemma}

\begin{proof}
  In \lang{} this rule is always satisfied, since as we explain in
  \wf{9} we define the beginning of every execution trace in \lang{} to
  be $\Sigma_{init} \to^* \Sigma_{init}'$ where $\to^*$ contains only
  transitions performing the initialization actions and their
  write-backs, for every variable in the execution trace.
  As a result, in every execution trace initialization actions are
  written-back and their write-backs are completed before the start of
  any thread.
\end{proof}

\begin{lemma}\label{l:localadherence}
  \lang{}'s local operational semantics generates only well-formed
  execution traces.
\end{lemma}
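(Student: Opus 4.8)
The plan is to carry out a preservation argument by case analysis on the final local reduction step. Assuming the trace $\Sigma \to^* \Sigma'$ is well-formed, I would show that for every local rule firing $\Sigma' \to \Sigma''$ the extended trace remains well-formed. Since the local semantics describes the reduction of a single core, only the conditions that do not quantify over several cores are genuinely in scope here --- concretely \wf{1}, \wf{2}, \wf{6}, \wf{10}, \wf{11}, \wf{13}, \wf{14}, \wf{15}, together with \wf{17}--\wf{19} (already discharged in \autoref{l:wf17}, \autoref{l:wf18}, and \autoref{l:wf19}) and \wf{12} (\autoref{l:wf12}); the inherently cross-core conditions \wf{5}, \wf{7}, \wf{16}, \wf{20} and \wfe{1}--\wfe{2} depend on the synchronization order and on multiple threads, so the local rules only set up premises that the global-semantics lemma will complete. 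To make the induction go through I would first strengthen the hypothesis with the bookkeeping invariants \wfh{1}--\wfh{6} and \wfh{9}, which tie the value stored in $\sheap$, $\sscache(c)$, and $\ssdcache(c)$ to the last write-back, fetch, and write acting on each variable; these are exactly the facts the \wf conditions consume and must themselves be re-established after every step.

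I would then dispatch the rules in three groups. The structural rules $\rctxstep$, $\rcondt$, $\rcondf$, $\rred$, $\rcall$, and $\rnew$ touch neither the memory state nor emit any memory or synchronization action, so they preserve every condition and every \wfh invariant trivially. The memory rules carry the real content: $\rfield$ and $\rfieldd$ emit a read $R$ whose premises force the variable to reside in the object cache or write buffer, giving \wf{10} directly and, via \wfh{2} and \wfh{3}, identifying the value read with the last caching action so that \wf{6} and \wf{11} follow, with \wf{1} coming from \wfh{5}; $\rassign$ writes into the buffer and re-establishes \wfh{3} and \wfh{5}; $\rfetch$ extends the object cache and re-establishes \wfh{2} and \wfh{4}; $\rwriteback$ requires a dirty buffer entry (\wf{13}) and, by clearing it, secures \wf{14} while propagating the value into $\sheap$ and $\sscache$ to maintain \wfh{1} and \wfh{2}; and $\rinvalidate$ requires the object to be cached (\wf{15}) and only removes it from $\sscache$. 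The synchronization rules $\rvolatileread$, $\rvolatilewrite$, their locking companions, the monitor rules, $\rstart$, $\rfinish$, $\rjoin$, $\rinterrupt$, $\rinterruptedt$, and $\rinterruptedf$ are settled by reading off their premises: the $\neg\vol{r.f}$ side conditions on the plain accesses together with the $\vol{r.f}$ premises on the volatile rules yield \wf{2}, while the empty-cache and empty-buffer premises encode the acquire/release discipline the synchronization conditions require.

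The step I expect to be the main obstacle is keeping the bookkeeping invariants \wfh{1}--\wfh{3} mutually consistent across the memory rules, because these three properties interact through the two-level hierarchy of write buffer above object cache above heap: a single $\rwriteback$ simultaneously mutates $\ssdcache$, $\sscache$, and $\sheap$, and I must argue that the ``last action'' witness for each invariant is updated coherently and that no stale witness survives. The most delicate sub-case is \wf{11} for $\rfield$, which demands that no invalidation, fetch, or overwrite intervened between the caching action named by \wfh{2} and the read; establishing this forces me to phrase \wfh{2} as naming precisely the \emph{last} fetch or write-back on the core, so that the absence of an intervening action becomes an immediate consequence of the invariant rather than a separate search through the trace.
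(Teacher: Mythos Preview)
Your overall plan---induction on the trace length, case analysis on the last local rule, strengthened by the invariants \wfh{1}--\wfh{6} and \wfh{9}---is exactly the paper's approach, and your treatment of the memory rules ($\rfield$, $\rfieldd$, $\rassign$, $\rfetch$, $\rwriteback$, $\rinvalidate$) matches it closely.

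The gap is your classification of \wf{5} and \wf{20} as ``inherently cross-core'' conditions to be deferred to the global lemma. They are not. The monitor rules $\rmonitorentera$, $\rmonitorenterb$, $\rmonitorexita$, $\rmonitorexitb$ and the volatile-lock rules $\rvolatilereadl$, $\rvolatilewritel$ all mutate the lock state in $\sheap$, and the paper discharges \wf{5} for each of them locally by reading off the premise that the lock is free (resp.\ already held by $r_t$). Likewise, $\rwriteback$ is where the paper establishes \wf{20}: given a write $w$ and its write-back $b$, any earlier $w' \hbD w$ must, by \wf{14} on the well-formed prefix, have its write-back $b'$ squeezed between $w'$ and $w$, hence $b' \hbD b$. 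Crucially, the paper's global proof of both \wf{5} and \wf{20} in the $\rparg$ case appeals to \autoref{l:lift}, which in turn appeals to this local lemma; if you do not establish them here, the global argument has nothing to lift. You also omit \wf{3} and \wf{9}: the paper checks \wf{3} at every synchronization rule (finiteness of preceding synchronization actions is preserved by adding one), and handles \wf{9} in the base case and at $\rstart$ via \autoref{l:wf9}. Finally, the paper does not leave \wf{4}, \wf{7}, \wf{8} unmentioned but explicitly observes that in the single-thread local setting $\hbD$ coincides with $\poD$, so all three reduce to \wf{6}; you should state this reduction rather than silently drop them.
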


\begin{proof}
  We show, by induction on the number of steps, that for each well
  formed execution trace $\Sigma \to^* \Sigma'$,
  $\Sigma \to^* \Sigma' \to \Sigma''$, where $\to^*$ and $\to$ are
  reductions of the local operational semantics, is also well-formed.

  Rules $\rctxstep$, $\rcondt$, $\rcondf$, $\rred$, and $\rcall$ regard
  the control flow of the program and are of no interest, since it is
  trivial to show that they preserve the well-formedness of the
  execution.
  Additionally, for each case we omit well-formedness rules that do not
  correlate with the transition at hand, e.g., we do not argue about
  \wf{2} if the rule at hand does not act on a volatile variable.
  Furthermore, we do not argue about \wf{4}, \wf{7} and \wf{8}, since in
  the local operational semantics the happens-before order is equivalent
  to the program order, since the creation of new threads is not
  possible.
  As a result, \wf{4}, \wf{7} and \wf{8} are also satisfied if \wf{6} is
  satisfied.
  Similarly we do not argue about \wf{16} and \wfe{1}-\wfe{2}, since in
  the local operational semantics it is not possible to spawn new
  threads or migrate the main thread, thus all the transitions are
  performed by a single core.\\

  \noindent{}\textbf{Base case:} Any execution trace

  $\Sigma_{init} \to^* $
  $\init \to \Sigma'$

  , is well-formed.\\

  In \lang{} the execution starts with a single thread --the main
  thread-- and the beginning of any execution trace is:

  $\Sigma_{init} \to^* $
  $\init$

  where $\to^*$ contains only transitions performing the initialization
  actions and their write-backs, for every variable in the execution
  trace, and

  $\Sigma_{init} \to^* $
  $\init$

  is well-formed.

  \noindent{}as a result,

  \hfil$\Sigma = \init$\\

  \noindent{}In the local operational semantics,

  \hfil$\sheap; \sscache; \ssdcache \vdash \sproc{c}{e}$ $ \toa{\alpha}$
  $ \sheap; \sscache; \ssdcache \vdash \sproc{c}{e}$

  \noindent{}the only rule that can step is $\rstart$.

  % \wf{1}, \wf{10}, and \wf{11} are satisfied since there are no read
  % actions in the execution trace.

  % \wf{2} are satisfied since there are no volatile reads or volatile
  % writes in the execution trace.

  % \wf{5} is satisfied since there are no lock operations in the
  % execution trace.

  % \wf{6} are satisfied since there is only a single action in the
  % execution trace.

  \wf{3} is satisfied, since this is the first synchronization action,
  other than initialization actions, in the execution trace and the
  number of initialization actions is equal to the number of variables,
  in a program, which is finite.\\

  \wf{9} is satisfied by \cref{l:wf9} and the fact that the action at
  hand is a start action and is the first action, other than
  initialization and write-backs, in the program.\\

  % \wf{13}--\wf{14}, and \wf{20} are satisfied since there are no write
  % or write-back actions in the execution trace.

  % \wf{15} is satisfied since there are no invalidation actions in the
  % execution trace.

  % \wfh{1}--\wfh{6} are satisfied since there are no variables in the
  % heap, the object cache, or the write buffer.

  \wfh{7} is satisfied, since initially there only exists a single
  thread, the main thread, that starts in a single core.\\

  The rest of the rules are omitted since they do not correlate with the
  transition at hand and thus it is trivial to show that they are
  satisfied.

  As a result, the lemma is true for the first transition of any program.

  ~\\
  \textbf{Inductive step:} Given a well-formed execution trace
  $\Sigma \to^* \Sigma'$, $\Sigma \to^* \Sigma' \to \Sigma''$ is also
  well-formed.

  We examine each case for $\Sigma' \to \Sigma''$ in the local
  operational semantics:

  \hfil$\sheap; \sscache; \ssdcache \vdash \sproc{c}{r_t, e}$
  $ \toa{\alpha}$
  $ \sheap; \sscache; \ssdcache \vdash \sproc{c}{r_t, e}$

  \noindent{}and show that it satisfies the well-formedness rules.

  \begin{case}
    $\rfield$
  \end{case}

  \noindent\hfil$\Sigma \to^* \Sigma'$
  $ \toa{\tuple{r_t, R, r.f, u}}$
  $ \sheap; \sscache; \ssdcache \vdash \sproc{c}{r_t, e'}$\\

  \noindent{}where $r.f \notin \dom{\ssdcache}$ and
  $\Sigma' = \sheap; \sscache; \ssdcache \vdash \sproc{c}{r_t, e}$.\\

  By the premises of $\rfield$:

  \hfil$ r \in \dom{\sheap} \wedge \neg \vol{v.f} \wedge \sscache(r.f) = v $\\

  \wf{1}: Since the value of $r.f$ is read from the object cache and
  $\Sigma \to^* \Sigma'$ is well formed, according to \wfh{5} that value
  will be the result of a write action, acting on $r.f$, that is
  performed by a transition in the execution trace.
  As a result, \wf{1} is satisfied.\\

  \wf{6}: Since $r.f$ is present in the object cache and
  $\Sigma \to^* \Sigma'$ is well formed and according to \wfh{2}, it was
  either fetched or updated through a write-back.
  In both cases, since $\Sigma \to^* \Sigma'$ is well formed, according
  to \wfh{1} and \wfh{2}, respectively, the cached value will be that of
  the last write-back in the execution trace.
  Additionally, according to \wf{20} the happens-before order between
  two writes is consistent with the happens-before order of their
  write-backs, meaning that the cached value will be that of the last
  write in the execution trace.
  That said, \wf{6} is satisfied.\\
  % no write between the write and the read in po

  \wf{10}: Since $\Sigma \to^* \Sigma'$ is well formed and
  $\sscache(r.f) = v$, according to \wfh{4}, there exists a transition
  $\Sigma_f \toa{\tuple{\_, F, r, u_f}} \Sigma_f'$ in
  $\Sigma \to^* \Sigma'$.
  As a result, \wf{10} is also satisfied.\\

  \wf{11}: Since the value of $r.f$ is read from the object cache and
  $\Sigma \to^* \Sigma'$ is well formed, according to \wfh{2} that value
  will be the result of the last fetch or write-back action, acting on
  $r.f$, that is performed by a transition in the execution trace.
  As a result, there are no updates or overwrites of the cached value
  between between the value that cached it and the read that sees it.
  An invalidation of $r.f$ between the last, in the execution trace,
  fetch or write-back action, that cached $r.f$, and the read, would
  result in the premises of $\rfield$ not being satisfied, since the
  object cache would not contain a value for $r.f$.
  As a result there is also no invalidation of the variable's cached
  value between the action that cached it and the read that sees it.
  As a result, \wf{11} is satisfied.\\

  The rest of the rules are omitted since they do not correlate with the
  transition at hand and thus it is trivial to show that they are
  satisfied.

  \begin{case}
    $\rfieldd$
  \end{case}

  \noindent\hfil$\Sigma \to^* \sheap; \sscache; \ssdcache \vdash \sproc{c}{r_t, e}$
  $ \toa{\tuple{r_t, R, r.f, u}}$
  $ \sheap; \sscache; \ssdcache \vdash \sproc{c}{r_t, e'}$

  \noindent{}where $r.f \in \dom{\ssdcache}$.\\

  By the premises of $\rfieldd$:

  \hfil$ r \in \dom{\sheap} \wedge \neg \vol{v.f} \wedge \sscache(r.f) = v' $

  wf{1}: Since the value of $r.f$ is read from the write buffer and
  $\Sigma \to^* \Sigma'$ is well formed, according to \wfh{5} that value
  will be the result of a write action, acting on $r.f$, performed by a
  transition in the execution trace.
  As a result, \wf{1} is satisfied.\\

  \wf{6}: Since the value of $r.f$ is read from the write buffer and
  $\Sigma \to^* \Sigma'$ is well formed, according to \wfh{3} that value
  will be the result of the last write action, acting on $r.f$, that is
  performed by a transition in the execution trace.
  As a result, \wf{6} is satisfied.\\

  \wf{11}: Since the value is read from the write buffer and
  $\Sigma \to^* \Sigma'$ is well formed, according to \wfh{3} that value
  will be the result of the last write action, acting on $r.f$, that is
  performed by a transition in the execution trace.
  As a result, there are no updates or overwrites of the cached value
  between between the value that cached it and the read that sees it.
  Additionally, an invalidation of $r.f$ (possible through
  $\rwriteback$) between the last, in the execution trace, write action
  that added $r.f$ to the write buffer and the read would result in the
  premises of $\rfieldd$ not being satisfied, since the write buffer
  would not contain a value for $r.f$.
  As a result there is also no invalidation of the variable's cached
  value between the action that cached it and the read that sees it.
  As a result, \wf{11} is satisfied.\\

  The rest of the rules are omitted since they do not correlate with the
  transition at hand and thus it is trivial to show that they are
  satisfied.

  \begin{case}
    $\rassign$
  \end{case}

  \noindent\hfil$\Sigma \to^* \sheap; \sscache; \ssdcache \vdash \sproc{c}{r_t, e}$
  $ \toa{\tuple{r_t, W, r.f, u}}$
  $ \sheap; \sscache; \ssdcache' \vdash \sproc{c}{r_t, e'}$

  \noindent{}where $\ssdcache' = \ssdcache[r.f \mapsto v]$.\\

  By the premises of $\rassign$:

  \hfil$ r \in \dom{\sheap} \wedge \neg \vol{v.f}$

  \wfh{3} and \wfh{5} are satisfied since the new value of $r.f$ in the
  write buffer is the one written by the write action of the last
  transition in the execution trace.\\

  \wfh{9} is satisfied, since the new value is added to the write buffer
  of the core performing the action.\\

  The rest of the rules are omitted since they do not correlate with the
  transition at hand and thus it is trivial to show that they are
  satisfied.

  \begin{case}
    $\rnew$
  \end{case}

  \noindent\hfil$\Sigma \to^* \sheap; \sscache; \ssdcache \vdash \sproc{c}{r_t, e}$
  $ \toa{\tuple{r_t, \mathit{In}, r.f, u}}$
  $ \sheap; \sscache; \ssdcache \vdash \sproc{c}{r_t, e'}$\\

  \noindent{}where
  $r - \mathit{fresh} \wedge$
  $ \sheap' = \sheap[r \mapsto C(\overrightarrow{f \mapsto 0})] \wedge$
  $ C(\overrightarrow{f:\tau})\{e\} \in C$\\

  \wfh{1}, \wfh{5}, and \wfh{6} are satisfied since the values of the
  new object's variables in the heap are those of the last write-back to
  these variables, namely the write-back of their initialization.\\

  \wfh{2}--\wfh{3} are satisfied since they are satisfied in
  $\Sigma \to^* \Sigma'$ and $\rnew$ does not modify the object
  cache, or the write buffer.\\

  The rest of the rules are omitted since they do not correlate with the
  transition at hand and thus it is trivial to show that they are
  satisfied.

  \begin{case}
    $\rvolatilereadl$
  \end{case}

  \wf{5} is satisfied since it is satisfied in
  $\Sigma \to^* \Sigma'$
  and $\rvolatilereadl$ requires $r.f.l$ to be free before acquiring it.\\

  \wfh{1}, \wfh{5}, and \wfh{6} are satisfied since it is satisfied in
  $\Sigma \to^* \Sigma'$ and $\rvolatilereadl$ does not modify any
  variables in the heap, only the synthetic lock of the volatile
  variable at hand.\\

  The rest of the rules are omitted since they do not correlate with the
  transition at hand and thus it is trivial to show that they are
  satisfied.

  \begin{case}
    $\rvolatileread$
  \end{case}

  \noindent\hfil $\Sigma \to^* \sheap; \sscache; \ssdcache \vdash \sproc{c}{r_t, e}$
  $ \toa{\tuple{r_t, \mathit{Vr}, r.f, u}} $
  $ \sheap'; \sscache; \ssdcache \vdash \sproc{c}{r_t, e'}$\\

  \noindent{}where
  $ r \in \dom{\sheap} \wedge $
  $ \sheap (r.f.l) = r_t \wedge $ $\sscache = \emptyset \wedge $
  $ \ssdcache = \emptyset \wedge $
  $ \sheap' = \sheap[r.f.l \mapsto 0] \wedge $
  $ \sheap(r.f) = v $\\

  wf{1}: Since the value of $r.f$ is read from the heap and
  $\Sigma \to^* \Sigma'$ is well formed, according to \wfh{6} that value
  will be the result of the last volatile write action, acting on $r.f$,
  in that execution trace, or by the initialization action, acting on
  $r.f$, if there are no volatile write actions, acting on $r.f$, in
  that execution trace.
  As a result, \wf{1} and \wf{6} are satisfied.\\

  \wf{2} is satisfied since in $\Sigma \to^* \Sigma'$ all volatile
  variables where accessed by volatile actions according to \wf{2} and
  the volatile read at hand is also a volatile action.\\

  \wf{3} is satisfied, since $\Sigma \to^* \Sigma'$ is well formed and
  according to \wf{3} the number of synchronization actions in it are
  finite.\\

  \wfh{1}, \wfh{5}, and \wfh{6} are satisfied since it is satisfied in
  $\Sigma \to^* \Sigma'$ and $\rvolatilewritel$ does not modify any
  variables in the heap, only the synthetic lock of the volatile
  variable at hand.\\

  The rest of the rules are omitted since they do not correlate with the
  transition at hand and thus it is trivial to show that they are
  satisfied.

  \begin{case}
    $\rvolatilewritel$
  \end{case}

  \wf{5} is satisfied since it is satisfied in
  $\Sigma \to^* \Sigma'$
  and $\rvolatilewritel$ requires $r.f.l$ to be free before acquiring
  it.\\

  \wfh{1}, \wfh{5}, and \wfh{6} are satisfied since it is satisfied in
  $\Sigma \to^* \Sigma'$ and $\rvolatilewritel$ does not modify any
  variables in the heap, only the synthetic lock of the volatile
  variable at hand.\\

  The rest of the rules are omitted since they do not correlate with the
  transition at hand and thus it is trivial to show that they are
  satisfied.

  \begin{case}
    $\rvolatilewrite$
  \end{case}

  \wf{2} is satisfied since in $\Sigma \to^* \Sigma'$ all volatile
  variables where accessed by volatile actions according to \wf{2} and
  the volatile write at hand is also a volatile action.\\

  \wf{3} is satisfied, since $\Sigma \to^* \Sigma'$ is well formed and
  according to \wf{3} the number of synchronization actions in it are
  finite.\\

  \wf{5} is satisfied since it is satisfied in
  $\Sigma \to^* \Sigma'$
  and $\rvolatilewrite$ requires $r.f.l$ to be acquired by the thread
  performing the action to release it.\\

  \wfh{1} is satisfied since it is satisfied in $\Sigma \to^* \Sigma'$
  and $\rvolatilewrite$ does not modify any non-volatile variables in
  the heap.\\

  \wfh{5} and \wfh{6} are satisfied since the new value of $r.f$ in the
  heap is the one written by the volatile write action of the last
  transition in the execution trace.\\

  The rest of the rules are omitted since they do not correlate with the
  transition at hand and thus it is trivial to show that they are
  satisfied.

  \begin{case}
    $\rmonitorentera$
  \end{case}

  \wf{3} is satisfied, since $\Sigma \to^* \Sigma'$ is well formed and
  according to \wf{3} the number of synchronization actions in it are
  finite.\\

  \wf{5} is satisfied since it is satisfied in $\Sigma \to^* \Sigma'$
  and $\rmonitorentera$ requires that the monitor $r.l$ is free before
  acquiring it.\\

  \wfh{1}, \wfh{5}, and \wfh{6} are satisfied since it is satisfied in
  $\Sigma \to^* \Sigma'$ and $\rvolatilewritel$ does not modify any
  variables in the heap, only the monitor of the object at hand.\\

  The rest of the rules are omitted since they do not correlate with the
  transition at hand and thus it is trivial to show that they are
  satisfied.

  \begin{case}
    $\rmonitorenterb$
  \end{case}

  \wf{3} is satisfied, since $\Sigma \to^* \Sigma'$ is well formed and
  according to \wf{3} the number of synchronization actions in it are
  finite.\\

  \wf{5} is satisfied since it is satisfied in
  $\Sigma \to^* \Sigma'$
  and $\rmonitorenterb$ requires that the monitor $r.l$ is already
  acquired by the thread performing the action in order to re-acquire
  it.\\

  \wfh{1}, \wfh{5}, and \wfh{6} are satisfied since it is satisfied in
  $\Sigma \to^* \Sigma'$ and $\rvolatilewritel$ does not modify any
  variables in the heap, only the monitor of the object at hand.\\

  The rest of the rules are omitted since they do not correlate with the
  transition at hand and thus it is trivial to show that they are
  satisfied.

  \begin{case}
    $\rmonitorexita$
  \end{case}

  \wf{3} is satisfied, since $\Sigma \to^* \Sigma'$ is well formed and
  according to \wf{3} the number of synchronization actions in it are
  finite.\\

  \wf{5} is satisfied since it is satisfied in $\Sigma \to^* \Sigma'$
  and $\rmonitorexita$ requires that the monitor $r.l$ is already
  acquired a single time by the thread performing the action in order to
  release it.\\

  \wfh{1}, \wfh{5}, and \wfh{6} are satisfied since it is satisfied in
  $\Sigma \to^* \Sigma'$ and $\rvolatilewritel$ does not modify any
  variables in the heap, only the monitor of the object at hand.\\

  The rest of the rules are omitted since they do not correlate with the
  transition at hand and thus it is trivial to show that they are
  satisfied.

  \begin{case}
    $\rmonitorexitb$
  \end{case}

  \wf{3} is satisfied, since $\Sigma \to^* \Sigma'$ is well formed and
  according to \wf{3} the number of synchronization actions in it are
  finite.\\

  \wf{5} is satisfied, since it is satisfied in $\Sigma \to^* \Sigma'$
  and $\rmonitorexitb$ requires that the monitor $r.l$ is already
  acquired more than one times by the thread performing the action in
  order to decrease by one the acquisitions by that thread.\\

  \wfh{1}, \wfh{5}, and \wfh{6} are satisfied since it is satisfied in
  $\Sigma \to^* \Sigma'$ and $\rvolatilewritel$ does not modify any
  variables in the heap, only the monitor of the object at hand.\\

  The rest of the rules are omitted since they do not correlate with the
  transition at hand and thus it is trivial to show that they are
  satisfied.

  \begin{case}
    $\racquire$
  \end{case}

  \wf{3} is satisfied, since $\Sigma \to^* \Sigma'$ is well formed and
  according to \wf{3} the number of synchronization actions in it are
  finite.\\

  The rest of the rules are omitted since they do not correlate with the
  transition at hand and thus it is trivial to show that they are
  satisfied.

  \begin{case}
    $\rrelease$
  \end{case}

  \wf{3} is satisfied, since $\Sigma \to^* \Sigma'$ is well formed and
  according to \wf{3} the number of synchronization actions in it are
  finite.\\

  The rest of the rules are omitted since they do not correlate with the
  transition at hand and thus it is trivial to show that they are
  satisfied.

  \begin{case}
    $\rfetch$
  \end{case}

  \wf{1}: Since $r$ and its variables are fetched from the heap and
  $\Sigma \to^* \Sigma'$ is well formed, according to \wfh{1} for each
  variable $r.f$ in $r$ its value is the one written back by the last
  write-back action, acting on $r.f$, in that execution trace.
  As a result, \wf{12} is satisfied.\\

  \wf{3} is satisfied, since $\Sigma \to^* \Sigma'$ is well formed and
  according to \wf{3} the number of synchronization actions in it is
  finite.\\

  \wfh{2} and \wfh{4} are satisfied since the value of $r.f$ in the
  object cache is the one fetched from the last fetch action in the
  execution trace.\\

  \wfh{9} is satisfied, since the fetch value is added to the object
  cache of the core performing the action.\\

  The rest of the rules are omitted since they do not correlate with the
  transition at hand and thus it is trivial to show that they are
  satisfied.

  \begin{case}
    $\rwriteback$
  \end{case}

  \wf{3} is satisfied, since $\Sigma \to^* \Sigma'$ is well formed and
  according to \wf{3} the number of synchronization actions in it is
  finite.\\

  \wf{13} and \wf{14}: Since $r.f$ is written back from the write buffer
  and $\Sigma \to^* \Sigma'$ is well formed, according to \wfh{3} its
  value in the write buffer is the one written by the last write action
  in that execution trace, which acts on $r.f$ and is performed by $c$.
  As a result, \wf{13} and \wf{14} are satisfied.\\

  \wf{20}: Since $\Sigma \to^* \Sigma'$ is well-formed \wf{20} is
  satisfied for any pair of writes and the corresponding pair of their
  write-backs in it.
  As a result, we examine the cases where the second write $w$ of the
  pair is the last write in the trace, which the write-back $b$ at hand
  writes back.
  Given any pair of write and write-back actions $w'$ and $b'$ in
  $\Sigma \to^* \Sigma'$ (if there exists one), where $w' \hbD w$,
  according to \wf{14} the write-back action $b'$ writing back $w'$ can
  only appear between the two writes $w' \hbD b' \hbD w$.
  Additionally, we know that $w \poD b$.
  As a result, $w' \hbD b' \hbD w \hbD b$ which satisfies \wf{20}.\\

  \wfh{1} is satisfied since the value of $r.f$ in the heap is the one
  written back by the last write-back action in the execution trace.\\

  \wfh{2} is satisfied since the value of $r.f$ in the object cache is
  the one written back by the last write-back action in the execution
  trace.\\

  \wfh{3} and \wfh{5} are satisfied since $\rwriteback$ just removes
  $r.f$ from the write buffer and does not introduce or restore another
  value in its place.\\

  \wfh{9} is satisfied, since the value is moved from the write buffer,
  of the core performing the action, to the object cache of the same
  core.\\

  % \wfh{4} is satisfied since $\Sigma \to^* \Sigma'$ is well-formed and
  % according to \wfh{4} there exists at least one transition $\Sigma_f $
  % $\xrightarrow[\vec{c'}: c \in \vec{c'}]{\vec{\alpha}: \tuple{\_, F, r, u_f} \vec{\alpha}} $
  % $\Sigma_f'$ in it.

  The rest of the rules are omitted since they do not correlate with the
  transition at hand and thus it is trivial to show that they are
  satisfied.

  \begin{case}
    $\rinvalidate$
  \end{case}

  \wf{3} is satisfied, since $\Sigma \to^* \Sigma'$ is well formed and
  according to \wf{3} the number of synchronization actions in it is
  finite.\\

  \wf{15} is satisfied, since the first premise of $\rinvalidate$
  requires that the object being invalidated is present in the object
  cache.
  As a result, only cached variables are invalidated.\\

  \wfh{2} and \wfh{5} are satisfied since $\rinvalidate$ just removes a
  value from the object cache and does not introduce or restore another
  value in its place.\\

  \wfh{9} is satisfied, since the value is removed from the object cache
  of the core performing the action.\\

  The rest of the rules are omitted since they do not correlate with the
  transition at hand and thus it is trivial to show that they are
  satisfied.

  \begin{case}
    $\rstart$
  \end{case}

  \wf{3} is satisfied, since $\Sigma \to^* \Sigma'$ is well formed and
  according to \wf{3} the number of synchronization actions in it are
  finite.\\

  \wf{9} is satisfied by \cref{l:wf9} and the fact that in the local
  operational semantics there is no way to step to the start
  expression.  The only start exception in
  the program is that of the main thread in the initial state.\\

  The rest of the rules are omitted since they do not correlate with the
  transition at hand and thus it is trivial to show that they are
  satisfied.\\

  \noindent{}\lang{}'s local operational semantics generates only well-formed
  execution traces.
\end{proof}

%%%%%%%%%%%%%%%%%%%%%%%%%%%%%%%%%%%%%%%%%%%%%%%%%%%%%%%%%%%%%%%%%%%%%%%
% Proof for global semantics
%%%%%%%%%%%%%%%%%%%%%%%%%%%%%%%%%%%%%%%%%%%%%%%%%%%%%%%%%%%%%%%%%%%%%%%

\begin{lemma}\label{l:lift}
  Lifting a well-formed execution trace from the local operational
  semantics to the global operational semantics preserves the
  well-formedness of the execution.
\end{lemma}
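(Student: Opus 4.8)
The plan is to prove this lemma by induction on the length of the lifted trace, discharging each global well-formedness obligation by appeal to its already-established local counterpart from \cref{l:localadherence}. The crucial structural observation is that $\rlift$ changes nothing essential: it carries the same action $\alpha$ at both levels, leaves the shared heap $\sheap$ subject only to the effect of the underlying local step, and merely relocates the single local object cache $\sscache$ and write buffer $\ssdcache$ into the per-core slots $\scaches(c)$ and $\ssdcaches(c)$. Consequently the set of actions $A_D$ generated by the lifted trace is identical to that of the original local trace, modulo annotating each action with the fixed core $c$, and the trace order (hence program order) is preserved verbatim.

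First I would establish a \emph{projection invariant}: along any purely lifted trace, the global state restricted to core $c$ coincides with the local state, i.e.\ $\scaches(c) = \sscache$ and $\ssdcaches(c) = \ssdcache$ at every step, while every other entry of $\scaches$ and $\ssdcaches$ is untouched. This is immediate from the premises of $\rlift$, which set $\scache{c} = \scaches(c)$ and $\sdcache{c} = \ssdcaches(c)$ before the step and write back only the $c$-indexed entries afterward. The invariant yields \wfh{9} directly (only core $c$ alters its own caches) and lets the per-core helper invariants \wfh{2}, \wfh{3}, \wfh{4} transfer verbatim from their local versions.

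Next I would observe that $\rlift$ produces a single-threaded global configuration $\sproc{c}{r_t, e'}$, so the lifted trace remains single-threaded on the one core $c$. Hence, exactly as in the local case, the happens-before order collapses onto program order (the only synchronizes-with edges possible are same-thread release/acquire pairs, already ordered by program order), so \wf{4}, \wf{7}, \wf{8} follow from \wf{6}, and the genuinely multi-core rules \wf{16}, \wfe{1}, \wfe{2} are vacuously satisfied since there is no second core to witness a cross-core read or a migration. For the remaining cache-sensitive rules whose global phrasing is quantified over ``the same core $c$'' (\wf{10}, \wf{11}, \wf{13}, \wf{14}), the projection invariant shows that each global statement is literally the single-cache local statement that \cref{l:localadherence} already guarantees.

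The main obstacle I anticipate is the bookkeeping needed to reconcile the two phrasings: the local semantics speaks of one cache $\sscache$, whereas the global rules quantify over cores and reference $\scaches(c)$. The crux is making the projection invariant precise enough that ``preceded by a write or fetch performed by core $c$'' at the global level reduces, once the single core is fixed, to ``preceded by a write or fetch'' at the local level. Some care is also needed to confirm that lifting introduces no fetch, write-back, or invalidation out of order; but since $\rlift$ neither reorders nor duplicates actions, this reduces to the preservation of trace order noted above, after which the remaining cases are routine.
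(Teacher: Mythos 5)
Your proposal is correct and follows essentially the same route as the paper's proof: both rest on the single observation that $\rlift$ introduces no new actions and no memory modifications beyond those of the underlying local step, so well-formedness transfers directly from \cref{l:localadherence}. The paper states this in three sentences; your projection invariant, the single-threadedness observation, and the vacuity argument for the cross-core rules merely make explicit the bookkeeping the paper leaves implicit.
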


\begin{proof}
  In \lang{} the lifting is performed by $\rlift$.
  $\rlift$ does not introduce new modifications to the memory state or
  new actions in the execution trace, other than those performed by the
  local operational semantics.
  As a result, since according to \cref{l:localadherence} the local
  operational semantics only generates well formed executions, lifting
  it to the global operational semantics preserves its well-formedness.
\end{proof}

\begin{theorem}\label{t:adherence}
  \lang{}'s operational semantics generates only well-formed execution
  traces.  As a result, all executions performed by DiSquawk adhere to
  JDMM and consequently to JMM.
\end{theorem}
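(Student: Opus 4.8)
The plan is to prove the theorem by induction on the number of transitions in the global execution trace, reducing as much as possible to the results already established at the local level. \cref{l:localadherence} shows that every trace produced by the local operational semantics is well-formed, and \cref{l:lift} shows that the \rlift{} rule carries this well-formedness up to the global semantics without introducing new memory modifications or new actions. Consequently, the only transitions that require a fresh argument are those produced by the genuinely global rules \rspawn, \rmigrate, \rpar, and \rparg{}, which are exactly the rules that create new threads, migrate a thread between cores, block a thread, or let two thread sets step concurrently. The base case coincides with that of \cref{l:localadherence} (the initial configuration in which the main thread takes its \rstart{} step after the initialization prefix), so I would develop the inductive step only for these four rules, assuming the prefix $\Sigma \to^* \Sigma'$ is well-formed and that the helper invariants \wfh{1}--\wfh{9} hold of it.

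First I would dispatch the easy cases. The \rpar{} rule is a no-op on both the memory state and the action set, so it trivially preserves every well-formedness condition. For \rspawn{}, the relevant rules are those that first become meaningful once a second thread exists: \wf{3} holds because the number of synchronization actions preceding the spawn is still finite; \wf{9} follows from \cref{l:wf9} together with the fact that the new thread's \rstart{} action synchronizes-with the spawn, so the start is ordered before the thread's subsequent actions; \wfh{7} and \wfh{8} hold because the rule introduces exactly one fresh core $c'$ and one new thread component, keeping the thread-to-core assignment single-valued and the parallel components disjoint; and the release character of spawn is enforced by the premise requiring the spawning core's write buffer $\ssdcaches(c)$ to be empty, which guarantees that all of the spawning thread's writes have been written back before the happens-before edge to the new thread is created. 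For \rmigrate{}, the rules of interest are \wfe{1} and \wfe{2}: the premises requiring the source core's write buffer to be empty, and the destination core's object cache and write buffer to be empty, guarantee respectively that no dirty data survives the migration (\wfe{2}) and that any read issued after the migration must first \rfetch{} the variable, since the destination cache starts empty (\wfe{1}).

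The main obstacle is the \rparg{} rule, where two thread sets on disjoint core sets $\vec{c_1}$ and $\vec{c_2}$ step concurrently. Here the conditions that relate actions across threads --- \wf{4}, \wf{7}, and \wf{8} (consistency of reads with the synchronization and happens-before orders), \wf{16} (a read that sees a write on another core must be preceded by a matching write-back/fetch pair), and \wf{20} (happens-before of writes agrees with happens-before of their write-backs) --- can no longer be checked by inspecting a single rule's premises, since they quantify over the global synchronization and happens-before orders, which are determined by the interleaving of the two sub-traces. The plan is to exploit the disjointness premise $\vec{c_1} \cap \vec{c_2} = \emptyset$ together with the premise that only one sub-step may modify the shared heap: these jointly ensure that the synchronization actions of the two sub-traces merge into a single consistent total synchronization order, and that all cross-core communication is funnelled through write-backs to, and fetches from, the shared heap $\sheap$.

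The heart of the argument is \wf{16}. I would show that whenever a read on a core $c' \in \vec{c_2}$ sees a value written by a thread on a core $c \in \vec{c_1}$, the operational semantics necessarily interposes a \rwriteback{} on $c$ and a subsequent \rfetch{} on $c'$, with no intervening write-back of the same variable. This follows by tracking the value along the only path the semantics permits: non-volatile reads go through the cache by \rfield{} and \rfieldd{}; a value deposited in a write buffer $\ssdcache$ by \rassign{} reaches another core only after \rwriteback{} copies it to $\sheap$ and \rfetch{} copies it into the reader's object cache $\sscache$; and the emptiness premises on the release and acquire rules (\rmonitorexita, \rmonitorentera, \rvolatilewrite, \rvolatileread, \rspawn, \rmigrate) force exactly this write-back-then-fetch ordering across each synchronizes-with edge. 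The invariants \wfh{1}--\wfh{6} are what make this bookkeeping sound, since they pin down, at every point in the trace, that the value held in $\sheap$, in $\sscache(c)$, and in $\ssdcache(c)$ is the one written by the appropriate last write, write-back, or fetch, so the fetched value is guaranteed to be the latest; \wf{20} then follows because any happens-before edge between two writes on different cores must pass through a synchronizes-with edge whose release side forces the earlier write to be written back first, so the write-backs inherit the same happens-before order as the writes. Once \wf{16} and the cross-thread consistency rules are established for \rparg{}, the induction closes and the first part of the theorem holds; the second part then follows immediately, because \lang{} is defined precisely to model DiSquawk executions, so adherence of every \lang{} trace to the JDMM --- and hence, through the JDMM's soundness with respect to the JMM, to the JMM --- carries over to DiSquawk.
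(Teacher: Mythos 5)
Your proposal follows essentially the same route as the paper: induction on the length of the global trace, discharging \rlift{} via \cref{l:localadherence} and \cref{l:lift}, treating \rpar{}, \rspawn{}, and \rmigrate{} by direct inspection of their premises, and handling \rparg{} by combining the disjointness of the core sets, the single-heap-modifier premise, and the helper invariants \wfh{1}--\wfh{9} to reduce each well-formedness condition to the already-established local case. The emphasis differs slightly (you foreground \wf{16} where the paper treats it uniformly with the other rules), but the decomposition and the key lemmas are the same.
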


\begin{proof}
  We show, by induction on the number of steps, that for each well
  formed execution trace $\Sigma \to^* \Sigma'$,
  $\Sigma \to^* \Sigma' \to \Sigma''$, where $\to^*$ and $\to$ are
  reductions of the global operational semantics, is also well-formed.

  For each case we omit well-formedness rules that do not correlate with
  the transition at hand, e.g., we do not argue about \wf{2} in the case
  of $\rspawn$ since it does not act on a volatile variable.\\

  \textbf{Base case:} Any execution trace $\Sigma \to \Sigma'$, is
  well-formed.\\

  In \lang{} the execution starts with a single thread --the main
  thread-- and the beginning of any execution trace is:

  $\Sigma_{init} \to^* \init$

  \noindent{}where $\to^*$ contains only transitions performing the
  initialization actions and their write-backs, for every variable in
  the execution trace, and $\Sigma_{init} \to^* \init$ is well-formed.

  \noindent{}As a result, $\Sigma = \init$\\

  \noindent{}In the global operational semantics,

  \hfil$\sheap; \scaches; \ssdcaches \vdash T$
  $ \xrightarrow[\vec{c}]{\vec{\alpha}}$
  $ \sheap; \scaches; \ssdcaches \vdash T $

  \noindent{}the interesting cases are $\rlift$ and $\rmigrate$.
  $\rspawn$ cannot step since its premises are not satisfied.
  $\rpar$ does not change the state and for $\rparg$ there is no other
  thread in the context to step.

  \begin{tcase}
    $\rlift$
  \end{tcase}

  In the case of $\rlift$, the well-formedness of the execution is
  preserved according to \cref{l:lift}.

  \begin{tcase}
    $\rmigrate$
  \end{tcase}

  In the case of $\rmigrate$ the main thread is transferred to another
  core.
  The memory state remains as before and all well-formedness rules are
  satisfied.

  \wfe{2} is satisfied by $\rmigrate$'s premises ---there are no data in
  the write buffer.

  \wfh{7}: Since $\Sigma \to^* \Sigma'$ is well formed and satisfies
  \wfh{7}, the thread at hand is spawned.
  $\rmigrate$ transfers the thread at hand to a new core and resigns it
  from its previous core complying to \wfh{7}.
  As a result, \wfh{7} is satisfied.

  The rest of the rules are omitted since they do not correlate with the
  transition at hand and thus it is trivial to show that they are
  satisfied.

  As a result, the theorem is true for the first transition of any
  program.

  ~\\
  \textbf{Inductive step:} Given a well-formed execution trace
  $\Sigma \to^* \Sigma'$, $\Sigma \to^* \Sigma' \to \Sigma''$ is also
  well-formed.

  We examine each case in the global operational semantics:

  \hfil$\sheap; \scaches; \ssdcaches \vdash T$
  $ \xrightarrow[\vec{c}]{\vec{\alpha}}$
  $ \sheap; \scaches; \ssdcaches \vdash T $

  \noindent{}and show that it satisfies the well-formedness rules.

  \begin{tcase}
    $\rlift$
  \end{tcase}

  In the case of $\rlift$, the well-formedness of the execution is
  preserved according to \cref{l:lift}.

  \begin{tcase}
    $\rspawn$
  \end{tcase}

  \wf{4}: Since $\Sigma \to^* \Sigma'$ is well formed, according to
  \wf{4}, synchronization order is consistent with program order.
  The action at hand is placed after, according to the program order and
  the synchronization order, any actions in $\Sigma \to^* \Sigma'$.
  As a result the synchronization order remains consistent with the
  program order and \wf{4} is satisfied.

  \wfh{7} and \wfh{8}: The spawned thread is assigned to a single core
  and the old thread remains assigned to its core.
  The spawned thread also gets marked as spawned in order to forbid
  future re-spawns of the same thread (first and second premise of
  $\rspawn$).
  As a result, \wfh{7} and \wfh{8} are satisfied, since they are also
  satisfied in $\Sigma \to^* \Sigma'$.

  The rest of the rules are omitted since they do not correlate with the
  transition at hand and thus it is trivial to show that they are
  satisfied.

  \begin{tcase}
    $\rmigrate$
  \end{tcase}

  \wfe{2} is satisfied since it is satisfied in $\Sigma \to^* \Sigma'$
  and in the new transition is satisfied by $\rmigrate$'s premises
  ---there are no data in the write buffer.

  \wfh{7}: Since $\Sigma \to^* \Sigma'$ is well formed and satisfies
  \wfh{7}, the thread at hand is spawned.
  $\rmigrate$ transfers the thread at hand to a new core and resigns it
  from its previous core complying to \wfh{7}.
  As a result, \wfh{7} is satisfied.

  The rest of the rules are omitted since they do not correlate with the
  transition at hand and thus it is trivial to show that they are
  satisfied.

  \begin{tcase}
    $\rpar$
  \end{tcase}

  In the case of $\rpar$ all well formed rules are satisfied since they
  where satisfied in $\Sigma \to^* \Sigma'$ and $\rpar$ does not
  introduce any state modifications or new actions in the execution
  trace.

  \begin{tcase}
    $\rparg$
  \end{tcase}

  \wf{1}: Since $\Sigma \to^* \Sigma'$ is well-formed, \wf{1} and
  \wfh{5} are true for it.

  In the case of non-volatile reads the read of a variable $r.f$ sees
  the value written in the object cache or the write buffer of the core
  that performs the action (see $\rfield$ and $\rfieldd$), which
  according to \wfh{5} is the result of a write to $r.f$.
  Since the object caches and the write buffers of different cores are
  disjoint \wf{1} and \wfh{5} are true for the unions of the object
  caches and the write buffers as well.

  In the case of volatile reads the read of a volatile variable $r.f$
  sees the value written in the heap (see $\rvolatileread$), which
  according to \wfh{5} is the result of a write to $r.f$.
  By induction on the eighth premise of $\rparg$, only one core may
  modify the heap.
  Since $\rvolatileread$ modifies it, then there are no writes to the
  heap executed in parallel with $\rvolatileread$ and the latter will
  see the last write to $r.f$, according to \wfh{6}, since
  $\Sigma \to^* \Sigma'$ is well-formed.
  As a result, \wf{1} is satisfied.\\

  \wf{2}: By induction on the eighth and ninth premise of $\rparg$,
  every thread steps through the $\rlift$, $\rspawn$, or $\rmigrate$.
  According to \cref{l:lift} every step performed by $\rlift$ is well
  formed and thus satisfies \wf{2}.
  $\rspawn$ and $\rmigrate$ do not act on volatile variables, so they
  always preserve \wf{2}.
  As a result \wf{2} is satisfied.\\

  \wf{3}: Since $\Sigma \to^* \Sigma'$ is well-formed, \wf{3}, \wfh{7},
  and \wfh{8} are true for it, as a consequence, the number of spawned
  threads in the system is finite, since the spawn action is a
  synchronization action.
  Additionally by \wfh{7} each spawned thread is assigned to a single
  core and by \wfh{8} each thread appears only on a single set of
  threads.
  As a result, the number of synchronization actions that can be
  performed in parallel is bound by the number of the spawned threads in
  the system.
  As a result, \wf{3} is satisfied.\\

  \wf{4}: Since $\Sigma \to^* \Sigma'$ is well-formed, \wf{4}, \wfh{7},
  and \wfh{8} are true for it.
  By \wfh{7} each spawned thread is assigned to a single core and by
  \wfh{8} each thread appears only on a single set of threads.
  As a result, a thread may not step in parallel with itself, and any
  action is appended to the program order.
  However, in the case of synchronization actions, $F$, $I$,
  $\mathit{J}$, and $\mathit{Ird}$ may step in parallel with other
  synchronization actions, so they are not actually ordered with those
  actions.
  Nevertheless, any arbitrary ordering of them does not break the
  consistency of the synchronization order with the program order, since
  only a single action maybe performed by each thread in every
  transition.
  As a result, \wf{4} is satisfied.\\

  \wf{5}: Since $\Sigma \to^* \Sigma'$ is well-formed, \wf{5} is true for
  it.
  Additionally, only a single lock operation may be performed at any
  parallel transition, since lock operations modify the heap and
  according to the eighth and ninth premises of $\rparg$ only one set of
  threads is allowed to modify it.
  By induction on the eighth premise we conclude that only a single
  thread may modify the heap, through $\rlift$.
  Since according to \cref{l:lift} $\rlift$ preserves the
  well-formedness, \wf{5} is satisfied by $\rparg$ as well.\\

  \wf{6}: Since $\Sigma \to^* \Sigma'$ is well-formed, \wf{6}, \wfh{7},
  and \wfh{8} are true for it.
  By \wfh{7} each spawned thread is assigned to a single core and by
  \wfh{8} each thread appears only on a single set of threads.
  As a result, a thread may not step in parallel with itself, and any
  action is appended to the program order.
  By induction on the eighth and ninth premise of $\rparg$, every thread
  steps through the $\rlift$, $\rspawn$, or $\rmigrate$.
  According to \cref{l:lift} every step performed by $\rlift$ is well
  formed and thus satisfies \wf{6}.
  $\rspawn$ and $\rmigrate$ do not perform any reads, so they always
  satisfy \wf{6}.\\

  \wf{7}: By induction on the eighth and ninth premise of $\rparg$,
  every thread steps through the $\rlift$, $\rspawn$, or $\rmigrate$.
  According to \cref{l:lift} every step performed by $\rlift$ is well
  formed and thus satisfies \wf{7}.
  $\rspawn$ and $\rmigrate$ do not correspond to volatile actions, so
  they always preserve \wf{7}.
  Additionally, by induction on the eighth premise of $\rparg$, only one
  core may modify the heap.
  Since volatile actions modify it, then there are no other volatile
  actions executed in parallel with $\rvolatileread$ and the latter will
  see the last write to $r.f$, according to \wfh{6}, since
  $\Sigma \to^* \Sigma'$ is well-formed.
  As a result, \wf{7} is satisfied.\\

  \wf{8}: The happens-before order is the transitive closure of the
  synchronizes-with order and the program order.

  As we show for \wf{6}, since $\Sigma \to^* \Sigma'$ is well-formed,
  \wf{6}, \wfh{7}, and \wfh{8} are true for it.
  By \wfh{7} each spawned thread is assigned to a single core and by
  \wfh{8} each thread appears only on a single set of threads.
  As a result, a thread may not step in parallel with itself, and any
  action is appended to the program order.

  Regarding the synchronizes-with order, we examine each pair and show
  that both actions of a pair can not step in parallel.
  Note that we omit the last pair regarding finalization and the
  constructor of the object, since we do not model finalization in our
  semantics.

  \begin{itemize}
  \item $\mathit{In} \swD S$: According to \cref{l:wf9} initialization
    actions are performed before the start of the program.
  \item $\mathit{Vw} \swD \mathit{Vr}$: Since both $\mathit{Vw}$ and
    $\mathit{Vr}$ modify the heap they cannot step in parallel.
    By induction on the eighth premise of $\rparg$, only one core may
    modify the heap.
  \item $U \swD L$: Since both $U$ and $L$ modify the heap they cannot
    step in parallel.
    By induction on the eighth premise of $\rparg$, only one core may
    modify the heap.
  \item $\mathit{Sp} \swD S$: Since both $\mathit{Sp}$ and $S$ modify
    the heap they cannot step in parallel.
    By induction on the eighth premise of $\rparg$, only one core may
    modify the heap.
  \item $\mathit{Fi} \swD J$: Since $\mathit{Fi}$ modifies the heap and
    $J$ reads it, although they are allowed to step in parallel by
    $\rparg$, the third premise of $\rjoin$ would not be satisfied, as a
    result they never step in parallel.
  \item $\mathit{Ir} \swD \mathit{Ird}$: Since $\mathit{Ir}$ modifies
    the heap and $\mathit{Ird}$ reads it, although they are allowed to
    step in parallel by $\rparg$, the third premise of $\rinterruptedt$
    would not be satisfied, as a result they never step in parallel.
  \end{itemize}

  As a result, \wf{8} is satisfied.\\

  \wf{9}: According to \cref{l:wf9} every initialization action in the
  execution trace happens-before the start of the program.
  Additionally, since $\Sigma \to^* \Sigma'$ is well-formed, \wf{9}
  is true for it and start actions modify the heap to mark the thread as
  started.
  By induction on the eighth premise of $\rparg$, only one core may
  modify the heap.
  As a result, there can only be a single start action in a parallel
  transition and that will be evaluated by $\rlift$ that according to
  \cref{l:lift} preserves the well-formedness of the execution.
  That is, in the execution trace preceding the transition at hand all
  thread actions where ordered after the start action of the
  corresponding thread according to the happens-before order.
  Additionally, the same is true for the local execution trace of the core
  that starts the thread.
  As a result the only case that remains to be examined is that of
  running a start action in parallel with another action of that thread.
  Since $\Sigma \to^* \Sigma'$ is well-formed, \wf{6}, \wfh{7}, and
  \wfh{8} are true for it.
  By \wfh{7} each spawned thread is assigned to a single core and by
  \wfh{8} each thread appears only on a single set of threads.
  As a result, a thread may not step in parallel with itself, and any
  action is appended to the program order.
  As a result \wf{9} is satisfied.\\

  \wf{10}: By induction on the eighth and ninth premise of $\rparg$,
  every thread steps through the $\rlift$, $\rspawn$, or $\rmigrate$,
  and specifically reads step through $\rlift$.
  According to \cref{l:lift} every step performed by $\rlift$ is well
  formed and thus satisfies \wf{10}.
  As a result, there is a write or fetch action, acting on the same
  variable as the read, earlier in the execution trace.\\

  \wf{11}: By induction on the eighth and ninth premise of $\rparg$,
  every thread steps through the $\rlift$, $\rspawn$, or $\rmigrate$,
  and specifically reads step through $\rlift$.
  According to \cref{l:lift} every step performed by $\rlift$ is well
  formed and thus satisfies \wf{11}.
  As a result for each non-volatile read there is no invalidation,
  update, or overwrite of the variable's value between the read and
  fetch or write that cached it.
  By \wfh{7} on $\Sigma \to^* \Sigma'$ each spawned thread is assigned
  to a single core, by \wfh{8} each thread appears only on a single set
  of threads, and by \wfh{9} the contents of the object cache and the
  write buffer of each core are altered only by that core.
  As a result, since \wf{9} holds by $\rlift$ it is also true for the
  whole transition, since the core performing the read is the only that
  can alter the object cache and the write buffer, and it cannot perform
  another action in parallel with itself (first premise of
  $\rparg$), to invalidate, update, or overwrite the value.\\

  \wf{12}: See \cref{l:wf12}.

  \wf{13}: By induction on the eighth and ninth premise of $\rparg$,
  every thread steps through the $\rlift$, $\rspawn$, or $\rmigrate$,
  and specifically write-backs step through $\rlift$.
  According to \cref{l:lift} every step performed by $\rlift$ is well
  formed and thus satisfies \wf{13}.
  As a result, there is a write, to the corresponding variable
  being written-back, earlier in the execution trace.\\

  \wf{14}: By induction on the eighth and ninth premise of $\rparg$,
  every thread steps through the $\rlift$, $\rspawn$, or $\rmigrate$,
  and specifically write-backs step through $\rlift$.
  According to \cref{l:lift} every step performed by $\rlift$ is well
  formed and thus satisfies \wf{14}.
  By \wfh{14} on $\Sigma \to^* \Sigma'$ each spawned thread is assigned
  to a single core, by \wfh{8} each thread appears only on a single set
  of threads, and by \wfh{9} the contents of the object cache and the
  write buffer of each core are altered only by that core.
  As a result, since \wf{14} holds by $\rlift$ it is also true for the
  whole transition, since the core performing the write-back is the only
  that can alter the object cache and the write buffer and it cannot
  perform a write action in parallel with itself (first premise of
  $\rparg$).\\

  \wf{15}: By induction on the eighth and ninth premise of $\rparg$,
  every thread steps through the $\rlift$, $\rspawn$, or $\rmigrate$,
  and specifically invalidations step through $\rlift$.
  According to \cref{l:lift} every step performed by $\rlift$ is well
  formed and thus satisfies \wf{15}.
  By \wfh{15} on $\Sigma \to^* \Sigma'$ each spawned thread is assigned
  to a single core, by \wfh{8} each thread appears only on a single set
  of threads, and by \wfh{9} the contents of the object cache and the
  write buffer of each core are altered only by that core.
  As a result, since \wf{15} holds by $\rlift$ it is also true for the
  whole transition, since the core performing the invalidation is the
  only that can alter the object cache and the write buffer and it
  cannot perform an invalidation action in parallel with itself
  (first premise of $\rparg$).\\

  \wf{16}: By induction on the eighth and ninth premise of $\rparg$,
  every thread steps through the $\rlift$, $\rspawn$, or $\rmigrate$,
  and specifically reads step through $\rlift$.
  According to \cref{l:lift} every step performed by $\rlift$ is well
  formed and thus satisfies \wf{15}.
  By \wfh{16} on $\Sigma \to^* \Sigma'$ each spawned thread is assigned
  to a single core, by \wfh{8} each thread appears only on a single set
  of threads, and by \wfh{9} the contents of the object cache and the
  write buffer of each core are altered only by that core.
  As a result, since \wf{16} holds by $\rlift$ it is also true for the
  whole transition, since the core performing the read is the only that
  can alter the object cache and the write buffer and it cannot perform
  a write-back action in parallel with itself (first premise of $\rparg$).\\

  \wf{17}: See \cref{l:wf17}.\\

  \wf{18}: See \cref{l:wf18}.\\

  \wf{19}: See \cref{l:wf19}.\\

  \wf{20}: By \wf{20} on $\Sigma \to^* \Sigma'$ we know that the
  happens-before order between two writes is consistent with the
  happens-before order of their write-backs.
  As a result we only need to examine new write-back actions.
  By induction on the eighth premise of $\rparg$, only one core may
  modify the heap.
  As a result there can only be one write-back in the transition at
  hand, which cannot break the happens before order consistency.
  As a result \wf{20} is satisfied.\\

  \wfe{1}: By induction on the eighth and ninth premise of $\rparg$,
  every thread steps through the $\rlift$, $\rspawn$, or $\rmigrate$,
  and specifically reads step through $\rlift$.
  According to \cref{l:lift} every step performed by $\rlift$ is well
  formed and thus satisfies \wfe{1}.
  That is, there is a corresponding fetch action between thread
  migration and every read action performed by the core that the
  corresponding thread migrated to.
  As a result, only the parallel evaluation of a migration and a read
  action could break this rule.
  However, since those two actions should be performed by the same
  thread this is not possible.
  By \wfh{7} on $\Sigma \to^* \Sigma'$ each spawned thread is assigned
  to a single core, and by \wfh{8} each thread appears only on a single
  set of threads, and by \wfh{9} the contents of the object cache and
  the write buffer of each core are altered only by that core.
  As a result, since \wfe{1} holds by $\rlift$ it is also true for the
  whole transition, since the core performing the read is the only that
  can step the thread at hand and it cannot perform another action in
  parallel with itself (first premise of $\rparg$).\\

  \wfe{2}: By induction on the eighth and ninth premise of $\rparg$,
  every thread steps through the $\rlift$, $\rspawn$, or $\rmigrate$,
  and specifically migrations step through $\rmigrate$.
  \wfe{2} is satisfied by the premises of $\rmigrate$.
  That is, at migration actions there are no dirty data at the
  \emph{old} core, in the two transitions in isolation.
  As a result, only the parallel evaluation of a migration and a write
  action at the \emph{old} core could break this rule.
  However, since those two actions should be performed by the same
  thread this is not possible.
  By \wfh{7} on $\Sigma \to^* \Sigma'$ each spawned thread is assigned
  to a single core, and by \wfh{8} each thread appears only on a single
  set of threads, and by \wfh{9} the contents of the object cache and
  the write buffer of each core are altered only by that core.
  As a result, since \wfe{2} holds by $\rmigrate$ it is also true for the
  whole transition, since the core performing the migration is the only
  that can step the thread at hand and it cannot perform another action
  in parallel with itself (first premise of $\rparg$).\\

  \wfh{1}: By induction on the eighth and ninth premise of $\rparg$,
  every thread steps through the $\rlift$, $\rspawn$, or $\rmigrate$,
  and specifically write-backs step through $\rlift$.
  According to \cref{l:lift} every step performed by $\rlift$ is well
  formed and thus satisfies \wfh{1}.
  Since $\Sigma \to^* \Sigma'$ is well-formed we also know that it
  satisfies \wfh{1} as well.
  As a result we only need to examine new write-back actions.
  By induction on the eighth premise of $\rparg$, only one core may
  modify the heap, thus there can only be one write-back in the
  transition at hand.
  As a result \wfh{1} is satisfied\\

  \wfh{2}: By induction on the eighth and ninth premise of $\rparg$,
  every thread steps through the $\rlift$, $\rspawn$, or $\rmigrate$,
  and specifically fetches and write-backs step through $\rlift$.
  According to \cref{l:lift} every step performed by $\rlift$ is well
  formed and thus satisfies \wfh{2}.
  By \wfh{7} on $\Sigma \to^* \Sigma'$ each spawned thread is assigned
  to a single core, and by \wfh{8} each thread appears only on a single
  set of threads, and by \wfh{9} the contents of the object cache and
  the write buffer of each core are altered only by that core.
  As a result, since \wfh{2} is true for the single step it is also true for
  the whole transition, since the core performing the fetch or
  write-back is the only that can modify the object cache and it cannot
  perform another action in parallel with itself (first premise of
  $\rparg$).\\

  \wfh{3}: By induction on the eighth and ninth premise of $\rparg$,
  every thread steps through the $\rlift$, $\rspawn$, or $\rmigrate$,
  and specifically writes step through $\rlift$.
  According to \cref{l:lift} every step performed by $\rlift$ is well
  formed and thus satisfies \wfh{3}.
  By \wfh{7} on $\Sigma \to^* \Sigma'$ each spawned thread is assigned
  to a single core, and by \wfh{8} each thread appears only on a single
  set of threads, and by \wfh{9} the contents of the object cache and
  the write buffer of each core are altered only by that core.
  As a result, since \wfh{3} is true for the single step it is also true for
  the whole transition, since the core performing the write is the only
  that can modify the write buffer and it cannot perform another action
  in parallel with itself (first premise of $\rparg$).\\

  \wfh{4}: By induction on the eighth and ninth premise of $\rparg$,
  every thread steps through the $\rlift$, $\rspawn$, or $\rmigrate$.
  According to \cref{l:lift} every step performed by $\rlift$ is well
  formed and thus satisfies \wfh{4}.
  $\rspawn$ and $\rmigrate$ are of no interest since they do not alter
  the object cache.
  As a result, \wfh{4} is also satisfied in the whole transition since
  it is satisfied by every step in the transition.\\

  \wfh{5}: By induction on the eighth and ninth premise of $\rparg$,
  every thread steps through the $\rlift$, $\rspawn$, or $\rmigrate$.
  According to \cref{l:lift} every step performed by $\rlift$ is well
  formed and thus satisfies \wfh{4}.
  $\rspawn$ and $\rmigrate$ are of no interest since they do not alter
  the values of any variables.
  As a result, \wfh{5} is also satisfied in the whole transition since
  it is satisfied by every step in the transition.\\

  \wfh{6}: Since $\Sigma \to^* \Sigma'$ is well-formed we also know that
  it satisfies \wfh{6} as well.
  As a result we only need to examine new volatile writes.
  By induction on the eighth premise of $\rparg$, only one core may
  modify the heap, thus there can only be one volatile write in the
  transition at hand.
  As a result \wfh{6} is satisfied\\

  \wfh{7} and \wfh{8}: Since $\Sigma \to^* \Sigma'$ is well-formed we
  also know that it satisfies \wfh{7} and \wfh{8} as well.
  As a result we only need to examine new spawns.
  By induction on the eighth premise of $\rparg$, only one core may
  modify the heap, thus there can only be one spawn in the transition at
  hand.
  By induction on the eighth premise of $\rparg$, we see that a spawn
  can only step through $\rspawn$.
  The spawned thread is assigned to a single core and the old thread
  remains assigned to its core.
  The spawned thread also gets marked as spawned in order to forbid
  future re-spawns of the same thread (first and second premise of
  $\rspawn$).
  As a result, \wfh{7} and \wfh{8} are satisfied, since they are also
  satisfied in $\Sigma \to^* \Sigma'$.\\

  \wfh{9}: Since \wfh{9} is satisfied by $\Sigma to^* \Sigma'$ we
  examine how the current transition alters object caches and write
  buffers.
  By induction on the eighth and ninth premise of $\rparg$, we see that
  all actions altering the object caches and write buffers are evaluated
  by $\rlift$.
  According to \cref{l:lift} every step performed by $\rlift$ is well
  formed and thus satisfies \wfh{9}.
  Since \wfh{9} is satisfied by $\rlift$, it is also true for the whole
  transition, since the object caches and write buffers are disjoint.\\

\end{proof}

\end{document}